\documentclass[onefignum,onetabnum]{siamonline250106}



\usepackage{amsfonts}
\usepackage{graphicx}
\usepackage{epstopdf}
\ifpdf
  \DeclareGraphicsExtensions{.eps,.pdf,.png,.jpg}
\else
  \DeclareGraphicsExtensions{.eps}
\fi

\usepackage{enumitem}
\setlist[enumerate]{leftmargin=.5in}
\setlist[itemize]{leftmargin=.5in}

\usepackage[sort,compress,square,numbers]{natbib}


\newsiamremark{remark}{Remark}
\newsiamremark{defi}{Definition}
\newsiamremark{example}{Example}
\newsiamremark{hypothesis}{Hypothesis}
\crefname{hypothesis}{Hypothesis}{Hypotheses}
\newsiamthm{claim}{Claim}

\usepackage{lmodern}

\usepackage{microtype} 

\usepackage{url} 

\usepackage[version=3]{mhchem} 

\usepackage{amsmath}
\usepackage{amssymb}

\usepackage{graphicx}
\usepackage{caption}
\usepackage{subcaption}
\usepackage{float}
\usepackage{wrapfig}

\usepackage{mathrsfs} 

\usepackage{bm}

\usepackage{bbm}

\usepackage[perpage]{footmisc}

\usepackage{tikz-cd}

\usepackage[all]{xy}

\usepackage{bookmark}
\hypersetup{bookmarksnumbered}

\usepackage[nameinlink]{cleveref}
\crefname{ex}{Example}{Examples}
\crefname{thm}{Theorem}{Theorems} 
\crefname{lem}{Lemma}{Lemmas}
\crefname{prop}{Proposition}{Propositions}
\crefname{cor}{Corollary}{Corollaries} 
\crefname{con}{Conjecture}{Conjectures} 
\crefname{def}{Definition}{Definitions}
\crefname{rmk}{Remark}{Remarks}
\crefname{thmalph}{Theorem}{Theorems}

\newtheorem{theoremalphabetic}{Theorem}


\DeclareMathOperator{\rk}{rk} 
\DeclareMathOperator{\im}{im} 
\DeclareMathOperator{\diag}{diag} 


\newcommand{\D}{\mathcal{D}}

\newcommand{\X}{\mathcal{X}}

\newcommand{\A}{\mathcal{A}}
\newcommand{\Z}{\mathcal{Z}}
\newcommand{\I}{\mathcal{I}}
\newcommand{\U}{\mathcal{U}}
\renewcommand{\P}{\mathcal{P}}


\newcommand{\ZZ}{\ensuremath{\mathbb{Z}}}
\newcommand{\QQ}{\ensuremath{\mathbb{Q}}}
\newcommand{\RR}{\ensuremath{\mathbb{R}}}
\newcommand{\CC}{\ensuremath{\mathbb{C}}}

\newcommand{\VV}{\mathbb{V}}

\DeclareTextFontCommand{\bfemph}{\bfseries\em}
\newcommand{\term}{\bfemph}

\binoppenalty=99999
\relpenalty=99999
\emergencystretch 3em
\widowpenalties 1 5000

\renewcommand{\k}{\kappa}

\headers{The generic geometry of steady state varieties}{Elisenda Feliu, Oskar Henriksson, and  Beatriz Pascual-Escudero}

\title{The generic geometry of steady state varieties}


\author{Elisenda Feliu\thanks{Department of Mathematical Sciences, University of Copenhagen
  (\email{efeliu@math.ku.dk}).}
\and Oskar Henriksson\thanks{Department of Mathematical Sciences, University of Copenhagen. Current address: Max Planck Institute of Molecular Cell Biology and Genetics, Dresden.
  (\email{oskar.henriksson@mpi-cbg.de}).}
\and Beatriz Pascual-Escudero\thanks{Departamento de Matemática e Informática aplicadas a las Ingenierías Civil y Naval
ETSI Caminos, Canales y Puertos, Universidad Politécnica de Madrid
  (\email{beatriz.pascual@upm.es}).}}

\usepackage{amsopn}

\ifpdf
\hypersetup{
  pdftitle={The generic geometry of steady state varieties},
  pdfauthor={Elisenda Feliu, Oskar Henriksson, and  Beatriz Pascual-Escudero}
}
\fi




\begin{document}

\maketitle

\begin{abstract}
We answer several fundamental geometric questions about reaction networks with power-law kinetics, on topics such as generic finiteness of the number of steady states,  robustness, and non\-degenerate multistationarity. In particular, we give an ideal-theoretic characterization of generic absolute concentration robustness, as well as conditions under which a network that admits multiple steady states also has the capacity for nondegenerate multistationarity. The key tools underlying our results come from the theory of vertically parametrized systems, and include a linear algebra condition that characterizes when the steady state system has positive nondegenerate zeros.
\end{abstract}

\begin{keywords}
Reaction network, multistationarity, absolute concentration robustness, parametric polynomial systems, positive zeros, mass-action, nondegeneracy
\end{keywords}

\begin{MSCcodes}
92C42, 37N25, 14A25, 14Q30, 14P10
\end{MSCcodes}

\section{Introduction} 
A fundamental object of interest in the study of reaction networks is the set of \emph{positive~steady~states}, which under the assumption of mass-action kinetics (or more generally power-law kinetics) are the positive zeros of a  \emph{vertically parametrized system}
\[
   f_\k(x):=N(\k\circ x^M)\,.
\]
Here, the variables $x=(x_1,\ldots,x_n)$ are the concentration of the species participating in the network, the parameters $\k=(\k_1,\ldots,\k_m)$ are rate constants for the reactions of the network, the matrix $N\in\RR^{n\times m}$ is the stoichiometric matrix, and $x^M$ is a vector of $m$ monomials, encoded by the columns of the kinetic matrix $M\in\ZZ^{n\times m}$.

The term \emph{vertically parametrized} was coined in \cite{helminck2022generic} and refers to the fact that each parameter $\kappa_i$ can appear several times in the system, but always in front of the same monomial. Thus, these systems generalize the more well-studied class of sparse systems with freely varying coefficients. A general framework for studying the generic properties of vertically parametrized systems was developed in \cite{FeliuHenrikssonPascual2023} (see also the subsequent papers \cite{HelminckHenrikssonRen2024,FeliuHenrikssonToric}), and forms the foundation for the present paper.

The ordinary differential equations $\frac{dx}{dt} =f_\k(x) $  that model a mass-action system often have linear first integrals that are independent of the choice of reaction rate constants. These correspond to conservation laws of the form $Lx=b$ where the rows of $L\in\RR^{d\times n}$ form a basis for the left kernel of $N$ and $b=(b_1,\ldots,b_d)$ are total amounts. Thus, the positive steady states compatible with a choice of total amounts $b$ are the zeros of the polynomial system
\begin{equation}
\label{eq:augmented_system_intro}
    F_{\k,b}(x):=\left(\begin{array}{l}N(\k\circ x^M)\\Lx-b\end{array}\right).
\end{equation}

Many questions about the steady states of a reaction network are related to understanding  how the geometry of the positive zero sets $\VV_{>0}(f_\k)$ and $\VV_{>0}(F_{\k,b})$ in $\RR^n_{>0}$ depends on the parameters $\k\in\RR^m_{>0}$ and $b\in\RR^d$.
Even though the study of reaction networks in the current mathematical formalism goes back at least to Feinberg, Horn and Jackson in the 1970's \cite{Feinberg1972complex,hornjackson}, where the theory was developed with a main focus on reaction networks arising in chemistry, many fundamental properties about the underlying polynomial systems are still not fully understood, including properties such as dimension, finiteness, and singularities. 
These concepts play an important role in applying machinery from algebraic geometry 
(see, e.g., \cite[Section~6.5]{P-M}, \cite{pascualescudero2020local}), as well as in lifting properties of small networks to larger networks containing the small ones  as submotifs (see, e.g., \cite{pantea-banaji,craciun-feinberg,cappelletti:flows,feliu:intermediates,joshi-shiu-II}).

Understanding the geometry of $\VV_{>0}(f_\k)$ and $\VV_{>0}(F_{\k,b})$ for all possible networks and  parameter values is a very challenging task. Indeed, it follows from the classical \emph{Hungarian lemma} \cite{Hars1979inverse} that the positive part of \emph{any} algebraic variety can appear as $\VV_{>0}(f_\k)$ for some network and some choice of $\k\in\RR^m_{>0}$, which means that the set of steady states can display intricate geometry, such as its dimension being larger or even lower than expected. However, one of the main messages of this paper is that the problem becomes much more well-behaved if we change it to instead understand  $\VV_{>0}(f_\k)$ and $\VV_{>0}(F_{\k,b})$ \emph{up to perturbations} of the parameter values (i.e., in \emph{open regions} of parameter space). In particular, we show that for many properties, the behavior up to perturbation agrees qualitatively with the behavior of the complex varieties $\VV_{\CC^*}(f_\k)$ and $\VV_{\CC^*}(F_{\k,b})$ in $(\CC^*)^n$ for generic $(\k,b)\in\CC^{m+d}$, and hence lends itself to be studied with tools from algebraic geometry.

We first tackle the question of \emph{finiteness}. By a simple equation count, it is reasonable to expect that $\VV_{>0}(f_\k)$ has codimension $\rk(N)$, and that $\VV_{>0}(F_{\k,b})$ should be finite. Nevertheless, it is easy to construct examples of networks where these sets have higher-than-expected dimension for some parameter values. In fact, this can be done even for very well-behaved families of networks, such as those that are endotactic \cite{KothariDeshpande24} or weakly reversible 
\cite{boros2020weakly} (see \Cref{ex:BCY}). However, it has been an open question whether this type of pathology can arise in regions of positive measure in parameter space (see \cite[Section~5]{boros2020weakly}).

As our first main result, we answer this question in the negative. More precisely, we show that there are two possible scenarios for $\VV_{>0}(F_{\k,b})$, depending on whether or not the network is \emph{nondegenerate}, in the sense that there exist parameter values $(\k,b)$ and a positive zero $x^*$ of $F_{\k,b}$  such that the Jacobian  of $F_{\k,b}$ at $x^*$ is nonsingular. This is equivalent to requiring that 
the matrices $N$, $M$ and $L$ from \eqref{eq:augmented_system_intro} satisfy $\ker(N)\cap\RR^m_{>0}\neq\varnothing$ and
\begin{equation}\label{eq:intro_rank_condition}
\rk \begin{bmatrix}N \diag(w)M^\top \diag(h) \\ L\end{bmatrix} =n\quad\text{for some $(w,h)\in \ker(N) \times (\CC^*)^n$}.
\end{equation}

\begin{theoremalphabetic}[{\Cref{thm:Fnet}}]
\label{thmA:nondegeneracy_result}
For a reaction network with $\ker(N)\cap\RR^m_{>0}\neq\varnothing$, let  $\Z_\mathrm{cc}$ be the  solvability locus, consisting of   $(\k,b)\in\RR^m_{>0}\times\RR^d$ for which $\VV_{>0}(F_{\k,b})$ is nonempty. If the network is \textbf{nondegenerate}, then:
\begin{itemize}
    \item $\Z_\mathrm{cc}$ has nonempty interior.
    \item $\VV_{>0}(F_{\k,b})$ is finite for generic $(\k,b)\in\Z_\mathrm{cc}$.
    \item The Jacobian of $F_{\k,b}$ is nonsingular on $\VV_{>0}(F_{\k,b})$ for generic $(\k,b)\in\Z_\mathrm{cc}$.  
\end{itemize}
\pagebreak
If the network is \textbf{degenerate}, then
\begin{itemize}
    \item $\Z_\mathrm{cc}$ is nonempty but has empty interior.
    \item The Jacobian of $F_{\k,b}$ is singular on  all of $\VV_{>0}(F_{\k,b})$ for all $(\k,b)\in\Z_\mathrm{cc}$.
\end{itemize}
\end{theoremalphabetic}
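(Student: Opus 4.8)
The plan is to realize $\Z_\mathrm{cc}$ and the varieties $\VV_{>0}(F_{\k,b})$ as the image and the fibres of a single explicit map, and then read everything off from the rank of its Jacobian. Fix a matrix $V\in\RR^{m\times s}$ whose columns form a basis of $\ker N$, where $s:=m-\rk(N)$, and set $\Omega:=\{\mu\in\RR^s\from V\mu\in\RR^m_{>0}\}$, a nonempty (since $\ker N\cap\RR^m_{>0}\neq\varnothing$) open convex cone. Consider the real-analytic map
\[\Psi\from\RR^n_{>0}\times\Omega\longrightarrow\RR^m_{>0}\times\RR^d,\qquad \Psi(x,\mu)=\bigl(\diag(x^M)^{-1}V\mu,\ Lx\bigr).\]
First I would check that $\im\Psi=\Z_\mathrm{cc}$ and, more precisely, that for each $(\k,b)$ the assignment sending $x$ to $(x,\mu)$, where $\mu$ is the unique vector with $V\mu=\k\circ x^M$, is a bijection $\VV_{>0}(F_{\k,b})\to\Psi^{-1}(\k,b)$: this only uses that $x\in\RR^n_{>0}$ lies in $\VV_{>0}(F_{\k,b})$ exactly when $\k\circ x^M\in\ker N\cap\RR^m_{>0}$ and $Lx=b$, together with injectivity of $V$. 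Two bookkeeping observations: the domain of $\Psi$ is a connected real-analytic manifold of dimension $n+s=m+d=\dim(\RR^m_{>0}\times\RR^d)$, so $\Psi$ is an equidimensional map (and $\Z_\mathrm{cc}=\im\Psi$ is nonempty whenever the domain is, giving one part of the degenerate conclusion); and $\Z_\mathrm{cc}$ is semialgebraic, since clearing denominators turns $N(\k\circ x^M)=0$ into polynomial equations, and $\Z_\mathrm{cc}$ is then a coordinate projection of a semialgebraic set (Tarski--Seidenberg).

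The second step is a Jacobian computation. Differentiating $\Psi$ at $(x,\mu)$ and abbreviating $\k:=\diag(x^M)^{-1}V\mu$ and $y:=V\mu=\k\circ x^M$, one obtains
\[J\Psi\big|_{(x,\mu)}=\begin{pmatrix}-\diag(\k)\,M^\top\diag(x)^{-1} & \diag(x^M)^{-1}V\\ L & 0\end{pmatrix}.\]
A short computation — multiply the top block of the equation $J\Psi|_{(x,\mu)}(\xi,\nu)=0$ by $\diag(x^M)$, apply $N$ using $NV=0$, and run the algebra backwards using injectivity of $V$ — shows that $(\xi,\nu)\mapsto\xi$ is a linear isomorphism from $\ker\bigl(J\Psi|_{(x,\mu)}\bigr)$ onto $\ker J_{F_{\k,b}}(x)$, where $J_{F_{\k,b}}(x)=\left(\begin{smallmatrix}N\diag(y)M^\top\diag(x)^{-1}\\ L\end{smallmatrix}\right)$. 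Consequently $J\Psi$ is invertible at $(x,\mu)$ if and only if $J_{F_{\k,b}}(x)$ has full column rank $n$, i.e.\ $x$ is a nondegenerate positive steady state. Under the substitution $(w,h):=(y,x^{-1})$ (entrywise inverse), the existence of such a point is exactly the assertion that $\rk\left(\begin{smallmatrix}N\diag(w)M^\top\diag(h)\\ L\end{smallmatrix}\right)=n$ for some $(w,h)\in(\ker N\cap\RR^m_{>0})\times\RR^n_{>0}$.

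For the degenerate case I would argue directly. Degeneracy says $\rk\left(\begin{smallmatrix}N\diag(w)M^\top\diag(h)\\ L\end{smallmatrix}\right)<n$ for all $(w,h)\in\ker N\times(\RR^*)^n$, hence in particular on the subset $(\ker N\cap\RR^m_{>0})\times\RR^n_{>0}$; by Step~2 this gives simultaneously that $\det J\Psi\equiv0$ on the domain and that $J_{F_{\k,b}}(x)$ is singular at every $x\in\VV_{>0}(F_{\k,b})$, for every $(\k,b)\in\Z_\mathrm{cc}$. Since $\Psi$ then has rank $<m+d$ everywhere, Sard's theorem shows $\im\Psi=\Z_\mathrm{cc}$ has measure zero, hence empty interior. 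For the nondegenerate case, the key point is that positive regions are Zariski dense: the locus $\{\rk\left(\begin{smallmatrix}N\diag(w)M^\top\diag(h)\\ L\end{smallmatrix}\right)=n\}$ is a nonempty (by hypothesis, as $(\RR^*)^n\subseteq\RR^n$) Zariski-open subset of $\ker N\times\RR^n$, hence Euclidean dense, hence meets the nonempty open set $(\ker N\cap\RR^m_{>0})\times\RR^n_{>0}$; by Step~2 there is then a point of the domain where $J\Psi$ is invertible, so the inverse function theorem gives that $\Z_\mathrm{cc}=\im\Psi$ has nonempty interior.

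It remains to get finiteness and nonsingularity for generic $(\k,b)\in\Z_\mathrm{cc}$ in the nondegenerate case. For finiteness I would complexify: $\Psi_\CC\from(\CC^*)^n\times\CC^s\to\CC^{m+d}$ is a morphism defined over $\RR$ whose differential has generic rank $m+d$ (its determinant is the same rational function, which is nonzero by the previous paragraph), so $\Psi_\CC$ is dominant between irreducible varieties of equal dimension; hence there is a proper Zariski-closed $W\subsetneq\CC^{m+d}$, defined over $\RR$, off which all fibres are finite, and the fibre bijection of Step~1 extends over $\CC$, so $\VV_{\CC^*}(F_{\k,b})$—and therefore $\VV_{>0}(F_{\k,b})\subseteq\VV_{\CC^*}(F_{\k,b})$—is finite for $(\k,b)\notin W$. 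Since $\Z_\mathrm{cc}$ has nonempty interior, it is Zariski dense in $\CC^{m+d}$, so $\Z_\mathrm{cc}\cap W$ is contained in a real algebraic set of dimension $\le m+d-1<\dim\Z_\mathrm{cc}$ and is thus non-generic; this gives generic finiteness. For nonsingularity, note that $B:=\{\det J\Psi=0\}$ is a proper algebraic subset of the domain, so $\Psi(B)$ is semialgebraic of dimension $\le m+d-1$, hence non-generic in $\Z_\mathrm{cc}$; and for $(\k,b)\notin\Psi(B)$ the fibre $\Psi^{-1}(\k,b)$ avoids $B$, i.e.\ $J_{F_{\k,b}}(x)$ is nonsingular for every $x\in\VV_{>0}(F_{\k,b})$. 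The step I expect to be the main obstacle is this generic finiteness: one has to be careful that ``generic in $\Z_\mathrm{cc}$'' is the right notion (which is why recording that $\Z_\mathrm{cc}$ is full-dimensional and semialgebraic comes first), that the complex exceptional locus $W$ can be taken defined over $\RR$ so that $W\cap\RR^{m+d}$ genuinely has dimension $<m+d$, and that the real--complex fibre correspondence is compatible; everything else, once the parametrization $\Psi$ and the kernel identification of Step~2 are in place, reduces to Zariski-density arguments and Sard/dimension counts.
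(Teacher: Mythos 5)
Your proposal is correct and follows essentially the same route as the paper: your map $\Psi$ is exactly the projection to parameter space of the incidence-variety parametrization $\phi$ from \eqref{eq:parametrization_of_incidence_variety}, and your kernel identification $\ker J\Psi|_{(x,\mu)}\cong\ker J_{F_{\k,b}}(x)$ is the critical-point correspondence that underlies \Cref{thm:summary_RN} (via \cite[Proposition~3.3]{FeliuHenrikssonPascual2023}), with the same Sard/inverse-function-theorem/fiber-dimension arguments. The only difference is that you reprove in a self-contained way the special case of the vertically-parametrized-systems machinery that the paper imports as a black box.
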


In particular, \Cref{thmA:nondegeneracy_result} tells us that a reaction network cannot have infinitely many steady states in a nonempty open region in the space of rate constants and total amounts.

Analogous statements relating emptiness and dimension are 
given for $\VV_{>0}(f_\k)$ in \Cref{thm:fnet}. Both of these results are an application of the theory of vertically parametrized systems developed in \cite{FeliuHenrikssonPascual2023}. Condition \eqref{eq:intro_rank_condition} is easy to check computationally for specific networks, which we demonstrate in \Cref{subsec:computational}, where we check it for all networks in the database ODEbase \cite{odebase}. In \Cref{subsec:classes}, we also prove that it is satisfied by large families of networks, including weakly reversible networks, injective networks, conservative networks lacking boundary steady states, and the networks satisfying the hypotheses of the deficiency one theorem.

Next, we treat the problem of \term{absolute concentration robustness} (ACR). The term ACR  was introduced in \cite{shinar-science} and has been extensively studied, e.g., \cite{ACR_Alon,ACR_Cap,ACR_AlgGeom,invariantsACR, ACRdim1,pascualescudero2020local}. We say that a network has ACR for $X_i$ for a given $\k\in\RR^m_{>0}$ if $\VV_{>0}(f_\k)$ is nonempty and contained in a parallel translate of the $i$-th coordinate  hyperplane.
The weaker notion of \term{local ACR} was introduced in \cite{pascualescudero2020local} and means that $\VV_{>0}(f_\k)$ is contained in a finite union of translates of a coordinate hyperplane.

Understanding when  ACR or local ACR arise for all $\k\in\RR^m_{>0}$ is a very challenging algebraic-geometric problem, as has recently been explored in detail in \cite{ACR_AlgGeom,pascualescudero2020local}, but the \emph{generic} counterpart of these problems, where we only require that a property holds for almost all $\k\in\RR^m_{>0}$, turns out to be much more well-behaved. Using the theory of vertically parametrized systems, we prove that the rank condition from \cite[Section~5]{pascualescudero2020local} precisely characterizes generic local ACR. In particular, it is a necessary condition for ACR to hold in a nonempty open subset of parameter space. We also strengthen the sufficient ideal-theoretic condition from \cite[Proposition~3.8]{ACR_AlgGeom} to a complete characterization of generic ACR.

\begin{theoremalphabetic}[{\Cref{thm:localACR}, \Cref{cor:ACRgen}}]
\label{thmA:acr_results}
For a nondegenerate network,  the following are equivalent:
\begin{enumerate}[label=(\roman*)]
    \item The network has generic local ACR for $X_i$. 
    \item $\rk( (N\diag(w)M^\top)_{\setminus i}) < \rk(N)$ for all $w \in \ker(N)$,  where {\footnotesize ${\setminus i}$} indicates removal of the $i$-th column.
    \item There exists a nonconstant polynomial 
 \[ g \in \left(\big\langle N(\k\circ x^M) \big\rangle: (x_1 \cdots x_n)^\infty \right) \cap \RR(\k_1,\ldots,\k_m)[x_i]\, . \]
\end{enumerate}
\vspace{-0.7cm}
Furthermore:

\begin{itemize}
    \item The network has generic ACR for $X_i$ if $g$ has a single positive root for generic $\k\in\RR^m_{>0}$.
    \item If the network does not have generic local ACR for $X_i$, then, for generic $\k\in\RR^m_{>0}$, the network does not have (local) ACR for $X_i$.
\end{itemize}
\end{theoremalphabetic}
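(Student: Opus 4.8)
The plan is to prove the chain $(i)\Leftrightarrow(ii)\Leftrightarrow(iii)$ first, and then derive the two supplementary statements. The natural engine is the theory of vertically parametrized systems from \cite{FeliuHenrikssonPascual2023}: the steady state system $f_\k(x) = N(\k\circ x^M)$ is vertically parametrized in $\k$, and the key structural fact is that, after the monomial change of coordinates induced by $M$, the positive zero locus $\VV_{>0}(f_\k)$ is governed by the linear-algebraic data of $N$ together with the exponent matrix $M$. Concretely, I expect $(ii)$ to be exactly the condition that the coordinate function $x_i$ (equivalently, its logarithm after the monomial parametrization) is constant along the generic fibre of the incidence variety cut out by $f_\k$; the rank drop of $(N\diag(w)M^\top)_{\setminus i}$ below $\rk(N)$ for all $w\in\ker(N)$ says precisely that no tangent direction of $\VV_{>0}(f_\k)$ at a generic point has a nonzero $i$-th component, which is the infinitesimal version of local ACR. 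So the first move is to recall from \cite{pascualescudero2020local} and \cite{FeliuHenrikssonPascual2023} the precise dictionary between tangent spaces of $\VV_{>0}(f_\k)$ and the matrix $N\diag(w)M^\top$, and to use nondegeneracy (condition \eqref{eq:intro_rank_condition} with $L$ removed, i.e.\ the nondegeneracy packaged into the hypothesis) to guarantee that $\VV_{>0}(f_\k)$ is smooth of the expected dimension for generic $\k$, so that "tangent space constant in $x_i$" genuinely implies "image of $\VV_{>0}(f_\k)$ under $x_i$ is finite", hence generic local ACR.

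For $(i)\Leftrightarrow(iii)$, the bridge is elimination theory. Generic local ACR for $X_i$ means the projection of $\VV_{>0}(f_\k)$ (over $\RR(\k)$, say) onto the $x_i$-axis is finite, i.e.\ not dominant. By standard elimination, this happens exactly when the elimination ideal $\big(\langle f_\k\rangle : (x_1\cdots x_n)^\infty\big)\cap\RR(\k)[x_i]$ — the saturation removes the spurious components on the coordinate hyperplanes, which is what lets us work with the \emph{positive} zero set — contains a nonzero (hence, since it's a principal-ideal situation in one variable over a field, nonconstant) polynomial. The saturation by $(x_1\cdots x_n)^\infty$ is the algebraic incarnation of restricting to the torus/positive orthant, and I would justify the equivalence by noting that $\VV_{>0}(f_\k)$ is Zariski-dense in $\VV_{\CC^*}(f_\k)$ for generic $\k$ (this is part of what the vertically-parametrized machinery buys us, and is the reason the statement is phrased "generically"). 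The content here is mostly bookkeeping once the smoothness/density input is in place.

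The two "furthermore" bullets are then short. If $g\in\RR(\k)[x_i]$ has a single positive root for generic $\k$, then every positive steady state has that common value of $x_i$, so $\VV_{>0}(f_\k)$ lies in the single hyperplane $\{x_i = (\text{that root})\}$ — this is exactly ACR, not merely local ACR. For the contrapositive bullet: if $(ii)$ fails, then for some $w\in\ker(N)$ the matrix $(N\diag(w)M^\top)_{\setminus i}$ has full rank $\rk(N)$; by nondegeneracy we can pick such a $w$ inside $\ker(N)\cap\RR^m_{>0}$ after perturbation, which exhibits a tangent direction to $\VV_{>0}(f_\k)$ at a generic point with nonzero $i$-th component, so $x_i$ is nonconstant on (and in fact varies in a Euclidean-open set of values over) $\VV_{>0}(f_\k)$; thus generic $\k$ gives neither ACR nor local ACR. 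The main obstacle I anticipate is the first step: pinning down the exact form of the tangent-space/Jacobian dictionary under the monomial parametrization and checking that the nondegeneracy hypothesis is strong enough to force generic smoothness of $\VV_{>0}(f_\k)$ of the expected codimension $\rk(N)$ — everything downstream (finiteness of the $x_i$-image, density in the complex variety, the elimination-ideal reformulation) follows fairly mechanically from that, but getting the genericity statements uniform in $\k$ and correctly handling the saturation is where the care is needed.
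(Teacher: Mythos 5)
Your sketch of (i)$\Leftrightarrow$(ii) via the Jacobian/tangent-space dictionary is essentially sound and close in spirit to what the paper does for (i)$\Rightarrow$(ii) (it invokes \cite[Theorem~5.3]{pascualescudero2020local} together with the fact that, by \Cref{thm:fnet}, all zeros of $f_\k$ are nondegenerate for generic $\k\in\Z$). The genuine gap is in your argument for (i)$\Leftrightarrow$(iii): you pass from finiteness of $\pi_i(\VV_{>0}(f_\k))$ to nontriviality of the elimination ideal by asserting that $\VV_{>0}(f_\k)$ is Zariski dense in $\VV_{\CC^*}(f_\k)$ for generic $\k$. That claim is false in general and is not what the vertically parametrized machinery provides: \Cref{thm:fnet} gives pure dimension $n-s$ and generic nondegeneracy, but $\VV_{\CC^*}(f_\k)$ may have irreducible components disjoint from $\RR^n_{>0}$ even for generic $\k$ (already a univariate steady state polynomial with one positive and two nonreal roots shows the density failure), and on such a component $x_i$ could a priori be nonconstant, which would annihilate the elimination ideal while leaving generic local ACR intact. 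The density that actually makes the real-to-complex transfer work lives upstream, in the parametrization of the incidence variety: $(\ker(N)\cap\RR^m_{>0})\times\RR^n_{>0}$ is Zariski dense in $\ker(N)\times(\CC^*)^n$, so the polynomial rank condition (ii), once forced at all positive nondegenerate steady states, holds identically in $(w,h)$ and hence controls \emph{all} complex nondegenerate zeros. You must therefore route (iii) through (ii) rather than directly through (i); the paper does this by appending $x_i-c$ to $f$ to form an auxiliary augmented vertically parametrized system and applying the rank characterization in \Cref{thm:summary_RN} together with the theorem on the dimension of fibers.

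Two smaller issues. In the last bullet, your perturbation argument yields a Euclidean open set of $\k$ with no (local) ACR, whereas the statement requires genericity in the Zariski sense; the paper instead deduces from the failure of (iii) that $\pi_i(\VV_{\CC^*}(f_\k))$ is cofinite in $\CC^*$ for generic $\k$ and then transfers this to the positive part. In the first bullet, you need that the generator $g$ over $\RR(\k)$ specializes, for generic $\k$, to a generator of the specialized elimination ideal (the paper gets this from Gr\"obner basis specialization); without that, a statement about the roots of $g$ says nothing about $\VV_{>0}(f_\k)$ for a concrete $\k$.
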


Finally, we also study the property of \term{multistationarity}, which refers to 
$\VV_{>0}(F_{\k,b})$ having at least two elements for some $(\k,b)\in\RR^m_{>0}\times\RR^d$. When, in addition, the steady states are stable, this property is believed to underlie cellular decision processes, in that it offers robust switch-like behavior as a response to changes in the parameters via a phenomenon known as \emph{hysteresis} \cite{laurent1999}.  The study  of multistationarity can be found at the roots of reaction network theory, with celebrated results on complex balancing \cite{Feinberg1972complex,FEINBERG19872229,hornjackson} and numerous algorithms and criteria to decide upon its existence or lack thereof, e.g., \cite{craciun-feinbergI,conradi-PNAS,CRNToolbox,PerezMillan}; see \cite{joshi-shiu-III} for an overview.

Several results for inferring multistationarity from reduced models rely on the implicit function theorem or homotopy continuation \cite{pantea-banaji,craciun-feinberg,cappelletti:flows,feliu:intermediates,joshi-shiu-II,tang:one-dim}, and, therefore,
require a stronger version of multistationarity, where there is a choice of parameters for which $\VV_{>0}(F_{\k,b})$ has at least two positive steady states that are \emph{nondegenerate}, in the sense that the Jacobian of $F_{\k,b}$ is nonsingular.  This property is referred to as \term{nondegenerate multistationarity}. 

In \cite[Conjecture 2.3]{Joshi2017small}, the authors conjecture that if $\VV_{>0}(F_{\k,b})$ is finite for all $(\k,b)\in\RR^m_{>0}\times\RR^d$, and has cardinality $\ell$ for some choice of parameters, then there is also a choice of parameters such that the network has $\ell$ \emph{nondegenerate} positive steady states. This is known as the \emph{Nondegeneracy Conjecture}. It has been proven for small networks (with at most $2$ species and $2$ reactions, which can be reversible) in \cite{Joshi2017small,shiuwolff:small} and for $\rk(N)=1$ in \cite{tang:one-dim}, but the general case remains open.
 Here, we prove the conjecture when $\ell=2$ 
under some additional assumptions, which include the case where the network is nondegenerate.  
A weaker version of our result, but easier to state, is as follows (see \Cref{thm:nondegeneracy_conjecture} for the full statement, which allows for milder assumptions on the network and a more precise lower bound of the number of positive steady states). 

\begin{theoremalphabetic}[\Cref{thm:nondegeneracy_conjecture}, \Cref{cor:nondegeneracy_conjecture}]
\label{thmA:nondegeneracy_conjecture}
   If a nondegenerate network admits multistationarity and $\VV_{>0}(F_{\k,b})$ is finite for all $(\k,b)\in\RR^m_{>0}\times\RR^d$, then the network also admits   nondegenerate multistationarity.
\end{theoremalphabetic}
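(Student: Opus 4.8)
The plan is to recast the statement as a question about a single projection map and then run a Brouwer‑index deformation argument. Let $\widetilde{\Z}=\{(x,\k,b)\in\RR^n_{>0}\times\RR^m_{>0}\times\RR^d: F_{\k,b}(x)=0\}$ and let $\Psi\from\widetilde{\Z}\to\RR^m_{>0}\times\RR^d$ be the projection $(x,\k,b)\mapsto(\k,b)$. Since the equation $Lx-b=0$ merely pins down $b$, and since the Jacobian of $x\mapsto N(\k\circ x^M)$ with respect to $\k$ alone is $N\diag(x^M)$, of rank $\rk(N)$ on $\RR^n_{>0}$, the implicit function theorem shows that $\widetilde{\Z}$ is a smooth manifold of dimension $n+m-\rk(N)=m+d$, i.e.\ of the same dimension as the parameter space. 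One computes that $\ker(d\Psi_{(x,\k,b)})=\ker(J_{F_{\k,b}}(x))$, so $\Psi$ is a local diffeomorphism at $(x,\k,b)$ precisely when $J_{F_{\k,b}}(x)$ is nonsingular; and the fiber of $\Psi$ over $(\k,b)$ is exactly $\VV_{>0}(F_{\k,b})$. Thus multistationarity means $\Psi$ is not injective, nondegenerate multistationarity means some fiber of $\Psi$ contains $\ge 2$ local‑diffeomorphism points, and the finiteness hypothesis says every fiber of $\Psi$ is finite. Moreover \Cref{thm:Fnet} (the nondegenerate case) says that $\Psi$ has regular values and that the locus $B$ of parameters admitting a degenerate positive steady state is a non‑dense semialgebraic set, hence contained in a proper algebraic subvariety.

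Next I would fix $y^\ast=(\k^\ast,b^\ast)$ whose fiber contains isolated points $x^{(1)},\dots,x^{(k)}$, $k\ge 2$, and attach to each $x^{(i)}$ its local Brouwer index $\mu_i\in\ZZ$ with respect to $\Psi$. The parity principle is that under any small deformation $\Psi$ acquires, near $x^{(i)}$, a number of real preimages that is $\ge|\mu_i|$ and $\equiv\mu_i\pmod 2$; hence an odd‑index point always persists with $\ge 1$ preimage, and an even‑index point either disappears or reappears with $\ge 2$ preimages. I would restrict $\Psi$ over a generic line (or, if needed, a generic real‑analytic arc) $\gamma$ through $y^\ast$ and track the semialgebraic function $t\mapsto\#\Psi^{-1}(\gamma(t))$. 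Using the local conic structure of semialgebraic sets, together with the fact that the critical locus of $\Psi$ is nowhere dense (a consequence of finiteness of the fibers, since otherwise some fiber would be positive‑dimensional), one argues that each $x^{(i)}$ persists to at least one side $t>0$ or $t<0$; combining this with the parity statement — a point that persists only to one side contributes $\ge 2$ there, while odd‑index points each contribute $\ge 1$ to both sides — one concludes $\#\Psi^{-1}(\gamma(t))\ge 2$ on a one‑sided punctured neighborhood of $t=0$. Since $\gamma$ is generic it meets $B$ in finitely many points, so one may choose $t$ in that neighborhood with $\gamma(t)\notin B$; then $\VV_{>0}(F_{\gamma(t)})$ has $\ge 2$ positive steady states, all nondegenerate, as desired.

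The main obstacle is exactly the persistence claim used above: ruling out that all $x^{(i)}$ have even index and all fail to persist along a generic arc, so that the count could drop below $2$ immediately — precisely the content of the Nondegeneracy Conjecture, and where the hypotheses must enter essentially. I expect the resolution to combine two ingredients. First, the structural input from the theory of vertically parametrized systems, in particular the linear‑algebra characterization of when the steady state system has positive nondegenerate zeros, which should constrain the local discriminant of $\Psi$ at a degenerate steady state enough to force one‑sided persistence. Second, the complex picture: upper semicontinuity of the number of solutions in $(\CC^\ast)^n$ counted with multiplicity shows that the generic complex root count is at least the number of positive steady states at $y^\ast$, hence $\ge 2$, which excludes the collapse scenarios in which the whole fiber could disappear under every deformation. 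The restriction $\ell=2$ enters here: it is what makes the index bookkeeping close, whereas for larger $\ell$ one only gets a lower bound on the number of nondegenerate steady states, which is presumably the reason \Cref{thm:nondegeneracy_conjecture} is phrased with such a bound rather than matching $\ell$ exactly.
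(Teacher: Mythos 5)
Your setup is essentially the paper's: the paper also works with a projection from an $(m+d)$-dimensional manifold onto parameter space (it uses the explicit parametrization $\phi$ of the incidence variety rather than the incidence variety itself, but this is the same map up to diffeomorphism), identifies nondegenerate steady states with points where this map is a local diffeomorphism, and uses \Cref{thm:Fnet} to place the degenerate-parameter locus inside a proper subvariety $H$. The problem is that the heart of your argument is missing, and you say so yourself: the ``persistence claim'' is exactly what has to be proved, and neither of the two ingredients you propose can deliver it. The Brouwer-index parity argument fails because an isolated degenerate zero of a real map can have local index $0$, in which case parity gives nothing and the point can genuinely vanish under every small perturbation. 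The complex count also does not help: an excess of complex solutions near $(\k^*,b^*)$ does not force two of them to be real and positive after perturbation --- they can split into conjugate pairs or leave the positive orthant. The paper's own \Cref{rk:counter_nondeg} and the remark following it exhibit precisely this failure mode, with two isolated degenerate positive steady states such that every nearby parameter value has no nondegenerate positive steady state at all; this shows that no argument of the form ``perturb along a generic arc and count'' can work without using the hypotheses (i) or (ii) in an essential, non-index-theoretic way. (Your parenthetical claim that density of the critical locus would force a positive-dimensional fiber is also not a valid inference; what you actually need is \Cref{thm:Fnet} plus the fact that critical values of $\varphi$ lie in a proper variety.)

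The paper closes the gap by a different topological mechanism. It surrounds each isolated point of the fiber by a small ball $U_p$ and observes that each image $\varphi(U_p)$ has nonempty interior (since $U_p\setminus\varphi^{-1}(H)$ consists of regular points). Then it splits into cases: if $(\k^*,b^*)$ lies in the interior of enough of the sets $\varphi(U_p)$, one intersects them and picks a parameter off $H$; otherwise $(\k^*,b^*)$ lies on the boundary of $\varphi(U_{\delta_1})$ for a degenerate point $\delta_1$, and then $\varphi$ cannot be injective near $\delta_1$ --- for if it were injective off the critical locus, the almost-one-to-one theorem of Blokh--Oversteegen--Tymchatyn (using finiteness of the fibers, which is where hypotheses (i)/(ii) enter) would make it injective, hence open by invariance of domain, contradicting that $\delta_1$ maps to a boundary point. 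Non-injectivity produces two regular points in $U_{\delta_1}$ over a common parameter value outside $H$, i.e.\ two nondegenerate positive steady states. This boundary-versus-interior dichotomy is the idea your proposal lacks, and it is not recoverable from index or multiplicity bookkeeping.
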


The paper is organized as follows. 
In \Cref{sec:notations_defs}, we fix the notation for the rest of the paper, and recall some basic terminology of chemical reaction network theory, as well as some results regarding the Jacobian of $f_\k$ and $F_{\k,b}$. In \Cref{sec:theorems}, we discuss the connection between the properties of nondegeneracy, dimension, and nonemptiness, leading up in particular to   \Cref{thmA:nondegeneracy_result}. We also discuss some  computational aspects, as well as how nondegeneracy relates to other properties of reaction networks such as weak reversibility and the dimension of the kinetic subspace. In \Cref{sec:ACR}, we address ACR, state \Cref{thmA:acr_results} and give several examples to clarify the relation to  previous work. After this, we devote \Cref{sec:multi} to nondegenerate multistationarity and our proof of \Cref{thmA:nondegeneracy_conjecture}. Finally, we give the proofs relying on technical aspects of vertically parametrized systems in \Cref{sec:thm_proofs}.

\section{Reaction networks and steady states}\label{sec:notations_defs}

In this section, we give a quick overview of reaction networks, introduce the varieties of positive steady states, and fix notation that will be used in the rest of the paper.

A \term{reaction network} is simply a collection of \emph{reactions}
\[\alpha_{1j}X_1+\dots+\alpha_{nj} X_n \ce{->[\k_j]} \beta_{1j}X_1+\dots+\beta_{nj} X_n,\quad j=1,\dots,m\,, \]
that model interactions among \emph{species} $X_1,\ldots ,X_n$. Each of these interactions transforms a \emph{complex} (a $\ZZ_{\geq 0}$-linear combination of the species) called the \emph{reactant}, into another complex called the \emph{product}.  The reactions are labelled by positive real numbers $\k_j>0$ called \emph{reaction rate constants}, which play a role in the dynamics of the system represented by the network. 

This way, a reaction network can be considered as a digraph, with reactions as directed edges, labeled by the reaction rate constants, and complexes as nodes.

\begin{example}
\label{ex:calcium_network}
A simple representation of enzymatic transfer of calcium ions between the endoplasmic reticulum and the cytosol gives rise to the following network with $n=4$ species and $m=6$ reactions:
\[
\begin{aligned}
0 & \ce{<=>[\k_1][\k_2]}  X_1 \quad&
X_1+X_2 &  \ce{->[\k_3]} 2X_1\quad &
X_1+X_3 &  \ce{<=>[\k_4][\k_5]} X_4 \ce{->[\k_6]} X_2+X_3\,.
\end{aligned}
\]
Here, $X_1$ stands for cytosolic calcium, $X_2$ for calcium in the endoplasmic reticulum, and $X_3$ is an enzyme catalyzing the transfer via the formation of an intermediate $X_4$ \cite{ges05}. 
\end{example}

Under the assumption of \emph{power-law kinetics} with \term{kinetic matrix} $M=(\nu_{ij}) \in \RR^{n\times m}$,
the evolution of the concentration $x=(x_1,\dots,x_n)$ of the species $X_1,\dots,X_n$ over time is given by the autonomous system of ordinary differential equations (ODEs)
\begin{equation}\label{eq:ode}
\frac{dx}{dt} =N (\k \circ x^M)\,, \qquad x\in \RR^n_{>0}\,,
\end{equation}
where $N=(\beta_{ij}-\alpha_{ij})\in \ZZ^{n\times m}$ is the \term{stoichiometric matrix}, $x^M$ denotes the vector whose $j$-th entry is the product 
$x_1^{\nu_{1j}}\cdots x_n^{\nu_{nj}}$ (i.e., the monomial in $x$ with exponents given by the $j$-th column of $M$), and $\k \circ x^M$ is the entry-wise product of the two vectors $\k=(\k_1,\ldots ,\k_m)$  and $x^M$. 
For the specific case in which the kinetics is \term{mass action}  \cite{Guldberg-Waage},  $M=(\alpha_{ij})\in \ZZ_{\geq 0}^{n\times m}$ is the \emph{reactant matrix}. 
In the examples, we consider mass-action kinetics unless stated otherwise.

For the rest of our work, we will assume that $M\in \ZZ^{n\times m}$, i.e., that $M$ has integer entries, when referring to power-law kinetics. The results  extend to systems where $M\in \QQ^{n\times m}$ using the approach in \cite[Section 4.2]{pascualescudero2020local}. 
As we allow negative exponents, we mostly work in the Laurent polynomial ring $\RR[x^\pm]= \RR[x_1^\pm,\dots,x_n^\pm]$.

Observe that \eqref{eq:ode} is well defined also over $\RR^n_{\geq 0}$ if $M$ has nonnegative entries. Under mild additional assumptions, namely that 
all negative terms of the expression for $dx_i/dt$ are multiples of $x_i$,  both $\RR^n_{\geq 0}$ and $\RR^n_{> 0}$ are forward invariant \cite{volpert}.
In particular, this is always the case  under mass-action kinetics.

A (positive) \term{steady state} of the ODE system in \eqref{eq:ode} is a tuple $x=(x_1,\ldots ,x_n)\in \RR^n_{>0}$ such that $N (\k \circ x^M) =0$.
The steady states provide useful information about the dynamics of the biological system under study, and this is a key topic in the theory of reaction networks; see \cite{feinberg2019foundations} for an introduction to the field.

To remove redundancies arising when $N$ does not have full rank $n$, we make a choice of matrix $C\in \RR^{s\times m}$ with $s:=\rk(N)$ and $\ker(C)=\ker(N)$, and consider the \term{steady state system}
\begin{equation}\label{eq:f}
    f  := C (\kappa \circ {x}^M)  \in \RR[\k,x^\pm]^s.
\end{equation}
The results in this work do not depend on the specific matrix $C$, so a choice is implicitly made throughout. 
 We write $f_{\k}$ for the system \eqref{eq:f} evaluated at a fixed $\k \in \RR^m_{>0}$.  The \term{positive steady state variety} of the network is then the (semialgebraic) set
\[\VV_{>0}(f_\k):=\{x\in\RR^n_{>0} : C(\k\circ x^M)=0\} = \{x\in\RR^n_{>0} : N(\k\circ x^M)=0\}\,.\]

We define the \term{solvability locus} $\Z \subseteq \RR^m_{>0}$ of $f$ to be the set of parameter values for which the network admits positive steady states:

\[\Z := \{ \k\in \RR^m_{>0} : \VV_{>0}(f_\k) \neq \varnothing\}\,.\]

By letting $\RR^*$ and $\CC^*$  denote the set of real and complex numbers excluding $0$, we will also consider the  \emph{real} and \emph{complex steady state varieties}, given by
\[\VV_{\RR^*}(f_\k):=\{x\in(\RR^*)^n: f_\k(x) =0\} \qquad \text{and}\qquad \VV_{\CC^*}(f_\k):=\{x\in(\CC^*)^n: f_\k(x) =0\}. \]
Note that $\VV_{>0}(f_\k)= \VV_{\CC^*}(f_\k) \cap \RR^n_{>0}$.

The vector subspace $\im(N)$ is called the \term{stoichiometric subspace}. The trajectory of  system \eqref{eq:ode} with initial condition $x^0$ is confined to the linear subspace $x^0+\im(N)$. We will therefore also be interested in the positive steady states constrained to these linear subspaces. Specifically, we consider a fixed matrix $L\in \RR^{d\times n}$ with full rank $d:=n-s$ (recall $s=\rk(N)=\rk(C)$), whose rows form a basis of the left kernel of $N$. Then, we consider the (open) \term{stoichiometric compatibility classes}
\[ \mathcal{P}_{b}:=\{ x \in \RR^n_{> 0} : Lx-b=0\}, \qquad b\in \RR^{d}.\]
 Such a matrix $L$ is called a \term{matrix of conservation laws}. The choice of $L$ is implicit throughout, and it does not affect the conclusions of this work.

Many questions about the dynamics of \eqref{eq:ode}, and in particular about the steady states, are studied in the restriction to the stoichiometric compatibility classes. 
Two steady states in the same class are said to be  stoichiometrically compatible. 
Studying the set of all   positive steady states in the class $\mathcal{P}_b$ can be done via the square polynomial system
\begin{equation}\label{eq:F}
F  =\left( \begin{array}{l}C(\kappa \circ {x}^M) \\ Lx - b \end{array}\right)  \in \RR[\k,b,x^\pm]^n,
\end{equation} 
which we refer to as the \term{augmented steady state system} (by $L$). 
We write $F_{\k,b}$ for the system \eqref{eq:F} when $\k$ and $b$ have been fixed.

By construction, 
\[\VV_{>0}(F_{\k,b}) = \VV_{>0}(f_\k) \cap  \mathcal{P}_b \, . \]
Analogous to $\Z$, 
we consider   the \term{solvability locus} $\Z_{\rm cc} \subseteq \RR^m_{>0}\times \RR^d$ of $F$ to be 
\begin{align*}
    \Z_{\rm cc} :=&\  \{ (\k,b)\in \RR^m_{>0}\times \RR^d : \VV_{>0}(F_{\k,b})\neq \varnothing\}\, , 
\end{align*}
where the subscript cc is short for ``compatibility class''. 
Note here that  $\VV_{>0}(f_\k)\cap \P_{b}\neq \varnothing$ requires that $b\in L(\RR^n_{>0})$, that is, $b$ belongs to the positive cone  on the columns of $L$. 
We remark that as $L$ has full rank, $L(\RR^n_{>0})$ is a $d$-dimensional cone.

The sets $\VV_{\CC^*}(F_{\k,b})$ and $\VV_{\RR^*}(F_{\k,b})$ are defined analogously to $f$. 
We note that $F=f$ when $s=n$.

\begin{example}
\label{ex:calcium_network2}
For the network in \Cref{ex:calcium_network}, the ODE system \eqref{eq:ode} is
built out of the matrices 
\[N= \begin{bmatrix}
 1 & -1 & 1 & -1 & 1 & 0 \\
 0 & 0 & -1 & 0 & 0 & 1 \\
 0 & 0 & 0 & -1 & 1 & 1 \\
  0 & 0 & 0 & 1 & -1 & -1 
\end{bmatrix}\qquad \text{and}\qquad M= \begin{bmatrix}
  0 & 1 & 1 & 1 & 0 & 0 \\
 0 & 0 & 1 & 0 & 0 & 0 \\
 0 & 0 & 0 & 1 & 0 & 0 \\
  0 & 0 & 0 & 0 & 1 & 1 
 \end{bmatrix}.  \]
The rank of $N$ is $s=3$, as the bottom two rows are linearly dependent. Hence, we can choose
\[C= \begin{bmatrix}
 1 & -1 & 1 & -1 & 1 & 0 \\
 0 & 0 & -1 & 0 & 0 & 1 \\
 0 & 0 & 0 & -1 & 1 & 1
 \end{bmatrix} \qquad \text{and}\qquad L= \begin{bmatrix}
  0 & 0 &  1 & 1
 \end{bmatrix} \, .
\]
With this in place, the steady state system becomes
\begin{equation*}
\label{eq:F_calcium}
f= (\k_1-\k_2x_1+\k_3x_1x_2-\k_4x_1x_3+\k_5x_4, 
 -\k_3x_1x_2+\k_6x_4,  -\k_4x_1x_3+\k_5x_4+\k_6x_4), 
\end{equation*}
while the augmented steady state  system has the additional entry $ x_3+x_4-b$. 
\end{example}

A key tool in the study of parametrized polynomial systems is the \term{incidence variety}, which consists of all pairs of parameter and variable values that together solve the system. For the augmented steady state system $F$, it admits the bijective parametrization

\begin{equation}\label{eq:parametrization_of_incidence_variety}
\begin{aligned}
\phi \colon \ker(C)\times (\CC^*)^n  & \rightarrow   \{(\k,b,x) \in \CC^m \times \CC^d \times (\CC^*)^n : F(\k,b,x)=0 \}
\\ (w,h) & \mapsto (w\circ h^M,Lh^{-1},h^{-1})\,,
\end{aligned}
\end{equation}
where $h^{-1}$ is taken componentwise \cite[Theorem~3.1]{FeliuHenrikssonPascual2023}.
Importantly, $\phi$ also restricts to a bijection onto the positive incidence variety,  in which case $(w,h)$ are referred to \emph{convex coordinates} \cite{Clarke1980stability}:
\begin{equation}\label{eq:parametrization_of_incidence_variety_pos}(\ker(C)\cap \RR^m_{>0})\times \RR^n_{>0}\to\{(\k,b,x) \in \RR_{>0}^m \times \RR^d \times \RR_{>0}^n : F(\k,b,x)=0 \}.
\end{equation}

This bijection allows us to derive several foundational facts about the steady state systems, including nonemptiness and connectivity of solvability loci, and a well-known parametrization of the set of Jacobians for all parameter values and zeros. In preparation for the latter, we define the matrices

\begin{equation}
\begin{aligned}\label{eq:M_f_RN}
Q_f(w)&:=C \diag(w)M^\top \in \CC^{s\times n}, \quad w\in\CC^m,\\ Q_F(w,h)&:=\left[ \begin{array}{c}C \diag(w)M^\top \diag(h) \\ L \end{array}\right]\in \CC^{n\times n},\quad w\in\CC^m,\:\: h\in(\CC^*)^n\mbox{.} 
\end{aligned}
\end{equation}

\begin{proposition}
\label{prop:Jacmatrices}
Consider a reaction network with stoichiometric matrix $N \in \ZZ^{n\times m}$,  kinetic matrix $M\in \ZZ^{n\times m}$, steady state system  $f$ as in \eqref{eq:f} for $C\in \RR^{s\times n}$ of full rank $s=\rk(N)$ such that $\ker(N)=\ker(C)$, and augmented steady state system $F$  as in \eqref{eq:F} for a full rank matrix $L\in \RR^{d \times n}$ with $LN=0$ and $d=n-s$. Then the following holds:
\begin{enumerate}[label=(\roman*)]
\item $\ker(C)\cap \RR^m_{>0} \neq \varnothing \iff \Z \neq \varnothing \iff \Z_{\rm cc}\neq \varnothing$.   
\item The solvability loci $\Z_{\rm cc}$ and $\Z$ are connected when they are nonempty.
\item  For $ (\k,b,x)= \phi(w,h)$ it holds that
\[
 J_{f_{\k}}(x)=Q_f(w)\diag(h) \quad  \text{and}\quad J_{F_{\k,b}}(x)=Q_F(w,h). 
\]
\end{enumerate}
\end{proposition}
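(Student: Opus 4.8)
The plan is to build everything on the bijection $\phi$ in \eqref{eq:parametrization_of_incidence_variety}, so the first task is to verify that $\phi$ is well-defined and bijective — i.e.\ that for $(w,h)\in\ker(C)\times(\CC^*)^n$ the triple $(w\circ h^M, Lh^{-1}, h^{-1})$ really is a zero of $F$, and that every zero of $F$ in $\CC^m\times\CC^d\times(\CC^*)^n$ arises this way from a unique $(w,h)$. Well-definedness is a direct computation: the bottom block $Lx-b$ vanishes by the choice $b=Lh^{-1}$, $x=h^{-1}$, and the top block $C(\k\circ x^M)$ becomes $C(w\circ h^M\circ (h^{-1})^M)=C(w\circ h^M\circ h^{-M})=Cw=0$ since $w\in\ker(C)$. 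For the inverse, given a zero $(\k,b,x)$ with $x\in(\CC^*)^n$, one sets $h:=x^{-1}$ (componentwise) and $w:=\k\circ x^M$; then $Cw=C(\k\circ x^M)=0$ forces $w\in\ker(C)$, and $b=Lx=Lh^{-1}$ is forced, so the preimage exists and is unique. The same computation restricts verbatim to the positive orthant, giving \eqref{eq:parametrization_of_incidence_variety_pos}.

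Next I would prove part (i). The chain of equivalences $\ker(C)\cap\RR^m_{>0}\neq\varnothing \iff \Z\neq\varnothing \iff \Z_{\rm cc}\neq\varnothing$ follows by projecting the positive incidence variety \eqref{eq:parametrization_of_incidence_variety_pos}. If $w\in\ker(C)\cap\RR^m_{>0}$, pick any $h\in\RR^n_{>0}$; then $\phi(w,h)$ is a positive zero of $F$, whose $(\k,b)$-component certifies $(\k,b)\in\Z_{\rm cc}$, hence $\Z_{\rm cc}\neq\varnothing$, and projecting away $b$ (or noting $\VV_{>0}(f_\k)\supseteq\{x\}\neq\varnothing$) gives $\k\in\Z$, so $\Z\neq\varnothing$. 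Conversely $\Z_{\rm cc}\neq\varnothing \Rightarrow \Z\neq\varnothing$ trivially, and $\Z\neq\varnothing$ means there is $x\in\RR^n_{>0}$ with $C(\k\circ x^M)=0$, so $w:=\k\circ x^M\in\ker(C)\cap\RR^m_{>0}$. For part (ii), connectedness of $\Z$: the image of the continuous map $(w,h)\mapsto w\circ h^M$ from the connected set $(\ker(C)\cap\RR^m_{>0})\times\RR^n_{>0}$ is connected, and one checks this image equals $\Z$ — one inclusion is the construction above, and for the other, if $\k\in\Z$ with witness $x$, then $\k=(\k\circ x^M)\circ x^{-M}=w\circ h^M$ for $w=\k\circ x^M\in\ker(C)\cap\RR^m_{>0}$ and $h=x^{-1}$, which is exactly $\k$ lying in the image.

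Finally, part (iii) is a Jacobian computation along the parametrization. Writing $f_\k(x)=C(\k\circ x^M)$, the partial derivative $\partial/\partial x_j$ of the $j'$-th monomial $\k_{j'}x^{M_{\bullet j'}}$ is $\nu_{jj'}\k_{j'}x^{M_{\bullet j'}}/x_j$, so $J_{f_\k}(x)=C\,\diag(\k\circ x^M)\,M^\top\diag(x)^{-1}$; substituting $\k\circ x^M=w$ (from $\phi$) and $x^{-1}=h$ gives $J_{f_\k}(x)=C\diag(w)M^\top\diag(h)=Q_f(w)\diag(h)$. Stacking the derivative of $Lx-b$, which is simply $L$, yields $J_{F_{\k,b}}(x)=Q_F(w,h)$. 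I do not expect a genuine obstacle here; the only mild subtlety is bookkeeping with the Laurent-monomial chain rule and keeping track of which $\diag$ factor absorbs the $x^{-1}$, together with being careful that all constructions are done over $(\CC^*)^n$ (so that $x^{-1}$ makes sense) before restricting to the positive case. The argument is essentially self-contained modulo the already-stated bijectivity of $\phi$.
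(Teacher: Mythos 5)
Your proposal is correct and follows essentially the same route as the paper: apply $\phi$ to a positive pair for (i), use the image-of-a-connected-set argument for (ii), and compute the Jacobian along the parametrization for (iii). The only difference is that you spell out the bijectivity of $\phi$ and the chain-rule computation $J_{f_\k}(x)=C\diag(\k\circ x^M)M^\top\diag(x)^{-1}$ explicitly, whereas the paper delegates the latter to a citation; both computations check out.
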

\begin{proof}
(i) For $\ker(C)\cap \RR^m_{>0}\neq \varnothing \Rightarrow \Z \neq \varnothing$ apply $\phi$ to some $(w,h) \in (\ker(C)\cap \RR^m_{>0}) \times \RR^n_{>0}$. The rest of the implications are immediate from definition. (ii) Follows from the connectivity of of $(\ker(C)\cap \RR^m_{>0})\times \RR^n_{>0}$, the continuity of $\phi$ in \eqref{eq:parametrization_of_incidence_variety_pos}, as well as the continuity of the the projections onto $\RR^m_{>0}\times \RR^d$ and $\RR^m_{>0}$.
(iii) Follows from a direct computation of the Jacobian matrices, see \cite[Proposition~3.2]{FeliuHenrikssonPascual2023}. 
\end{proof}

It is common in the literature to say that a network is \term{consistent} (or dynamically nontrivial) if the equivalent statements in \Cref{prop:Jacmatrices}(i) hold. 
 It follows from \Cref{prop:Jacmatrices}(iii) that we have the following equalities of sets:
\begin{align*}
\{J_{f_{\k}}(x):\k \in \RR_{>0}^m, x\in \VV_{>0}(f_\k)\}&=\{Q_f(w) \diag(h): (w,h)\in (\ker(C)\cap\RR^m_{>0})\times\RR_{>0}^n\}\\
\{J_{F_{\k,Lx}}(x): \k\in \RR_{>0}^m, x\in \VV_{>0}(f_\k)\}&=\{Q_F(w,h): (w,h)\in (\ker(C)\cap\RR^m_{>0})\times\RR_{>0}^n\}.
\end{align*}

We conclude this section by reminding the reader of a series of well-known definitions from the theory of reaction networks that will be used later on (see \cite{feinberg2019foundations} for a more detailed discussion of these properties):
\begin{itemize}
\item The \term{linkage classes} of a reaction network are its connected components as a digraph. Its \term{strong linkage classes} are the maximal strongly connected subdigraphs. 
Among the strong linkage classes, one  distinguishes the \term{terminal strong linkage classes}, as those for which there is no edge from a node inside of the class to a node outside of it, in the original network.

\item A reaction network is \term{weakly reversible} if
all connected components of the underlying digraph are strongly connected. 

\item  The \term{deficiency} of a  reaction network is the nonnegative integer
$\delta=c-\ell-s$, 
where $c$ is the number of complexes and $\ell$ is the number of linkage classes.

\item A reaction network is called \term{conservative} if   $\ker(N^{\top})\cap\RR^n_{>0} \neq \varnothing$, that is, the row span of $L$ contains a positive vector (equivalently the Euclidean closure of $\mathcal{P}_b$ is a compact set for all $b$ \cite{benisrael}).

\item  If $M\in \ZZ_{\geq 0}^{n\times m}$ (so that  zeros of $f,F$ can be considered in $\RR_{\geq 0}^n$), a reaction network  is said to \term{lack relevant boundary steady states} if there does not exist $(\k,b)\in\RR^m_{>0}\times\RR^d$ and $x\in \RR^n_{\geq 0} \setminus\RR^n_{>0}$ such that $F_{\k,b}(x)=0$ and 
 $\P_b\cap\RR^n_{>0}\neq\varnothing$.
\end{itemize}

\section{Nondegeneracy, nonemptiness, dimension and finiteness}\label{sec:theorems}

This section is devoted to the key theorems about generic nonemptiness and dimension of the positive steady state variety   and its intersection with the stoichiometric compatibility classes. 
The results are a consequence of the   general theorems on (augmented) vertically parametrized systems proven in \cite{FeliuHenrikssonPascual2023}, which are rewritten here for this more restricted context and adapted to the reaction network language. 
A version of the main result there can be found in \Cref{sec:thm_proofs}. The main take-home message is that the existence of a \emph{nondegenerate} zero is enough to ensure that the positive varieties behave ``nicely''. 

\subsection{Terminology: nondegeneracy and genericity}
\label{subsec:terminology}

We begin the section by establishing some terminology that will be important throughout the remainder of the paper.

In general, for a polynomial system $g=(g_1,\ldots ,g_\ell)$ in variables $x=(x_1,\ldots ,x_n)$, a zero $x^*$ of $g$ is said to be  \term{nondegenerate} if the Jacobian matrix $J_g(x)=\big(\frac{\partial g_i}{\partial x_j}\big)_{i,j}$ of $g$ has full rank when evaluated at $x^*$, and \term{degenerate} otherwise. 

We extend this notion to our parametrized steady state systems by saying that $f$ is \term{nondegenerate} if there exists $\k\in\RR^m_{>0}$ such that $f_{\kappa}$ has a nondegenerate positive zero $x^*\in\VV_{>0}(f_{\k})$, and \term{degenerate} otherwise. Similarly, we say that $F$ is \term{nondegenerate} if there exists $(\k,b)\in\RR^m_{>0}\times\RR^d$ such that $F_{\k,b}$ admits a nondegenerate positive zero $x^*\in\VV_{>0}(F_{\k,b})$.

As customary in reaction network theory, we adopt the terminology that a steady state $x^*\in \VV_{>0}(f_\k)$ is called \term{nondegenerate} if it is nondegenerate as a zero of $F_{\k,Lx^*}$, and \term{degenerate} otherwise. Similar to \cite{Banaji:Feliu}, we call $x^*$ \term{weakly nondegenerate} if it is a nondegenerate zero of $f_{\k}$ and \term{strongly degenerate} otherwise. 

Furthermore, we extend this terminology to networks by saying that a network is \term{(weakly) nondegenerate} if there exists $\k\in\RR^m_{>0}$ and $x^*\in\VV_{>0}(f_{\k})$ such that $x^*$ is a (weakly) nondegenerate steady state. Similarly, we say the network is  \term{(strongly) degenerate} if all steady states are (strongly) degenerate  for all choices of rate constants.

Finally, we say that a property holds for \term{generic} parameters in a set $\A\subseteq\RR^{\ell}$ when this property holds 
in a nonempty Zariski open subset of $\A$. 
When $\A$ is $\RR^m_{>0}$ or $\RR^m_{>0}\times \RR^d$,  this implies that the property holds in a set with nonempty Euclidean interior, and 
outside a subset of $\A$ of Lebesgue measure zero, and hence is robust against small perturbations of the parameter values.

\subsection{Nonemptiness and dimension}
In the notation of \Cref{sec:notations_defs},   
the steady state system $f$ has $s$ linearly independent entries  and $n$ variables, while the augmented steady state system has $n$ entries and variables. One could therefore expect that $\dim(\VV_{>0}(f_\k))=n-s$ and $\dim(\VV_{>0}(F_{\k,b}))=0$ as semialgebraic sets. This is certainly the case in typical examples, but it is not hard to construct examples where the expectation does not hold true. 
The following theorems give us precise tools to describe
the geometry of $\VV_{>0}(f_\k)$ and  $\VV_{>0}(f_\k)\cap \P_b$ for  generic choices  of parameters in the solvability loci  $\Z$ and $\Z_{\rm cc}$.

\begin{theorem}[Expected dimension of steady state varieties]
\label{thm:fnet}
For a reaction network with stoichiometric matrix $N \in \ZZ^{n\times m}$ and   kinetic matrix $M\in \ZZ^{n\times m}$,  consider the steady state system  $f$ as in \eqref{eq:f} for $C\in \RR^{s\times n}$ of full rank $s=\rk(N)$ such that $\ker(N)=\ker(C)$.  Let $\Z$ be the solvability locus of $f$ and
assume that $\ker(C) \cap \RR^m_{>0}\neq \varnothing$. The following statements are equivalent:
\begin{enumerate}[label=(\roman*)]
 \item  $\rk (C \diag(w)M^\top)=s$ for some $w\in \ker(C)$.
\item $f_\k$ has   a nondegenerate zero in $(\CC^*)^n$ for some $\k\in \CC^m$.
\item For generic $\k\in \Z$, 
 all zeros of $f_{\k}$ in $(\CC^*)^n$ are nondegenerate.
\item  $\Z$ has nonempty Euclidean interior in $\RR^m_{>0}$.
\item $\Z$   is not contained in a hypersurface of $\RR^m$. 
\item $\VV_{\CC^*}(f_\k)$ has pure dimension $n-s$ for at least one $\k\in \CC^m$. 
\end{enumerate}
Furthermore, the following holds:
\begin{itemize}
    \item If the equivalent statements hold, then  $\VV_{\CC^*}(f_\k)$ and 
    $\VV_{\RR^*}(f_\k)$ have pure dimension $n-s$ for generic $\k\in \Z$, and the same is true for $\VV_{>0}(f_\k)$ as a semialgebraic set. Additionally, all zeros of  $f_\k$ in $\RR^n_{>0}$ are nondegenerate for generic $\k\in \Z$. 

    \item If the equivalent statements do not hold, then $\VV_{\CC^*}(f_\k)=\varnothing$ for generic $\k\in \RR^m_{>0}$, and when not empty, $\dim \VV_{\CC^*}(f_\k)> n-s$ and all zeros of $f_{\k}$ are degenerate. 
    \item The ideal generated by $f_\k$ in $\CC[x^\pm]$ is radical for generic $\k\in \CC^m$. 
\end{itemize}
\end{theorem}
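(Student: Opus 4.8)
To show that $\langle f_\k\rangle\subseteq\CC[x^\pm]$ is radical for generic $\k\in\CC^m$, the plan is to pass to the incidence variety of $f$ and invoke generic smoothness. Let $Z_f:=\operatorname{Spec}\big(\CC[\k,x^\pm]/\langle f\rangle\big)$ be the closed subscheme of $\CC^m\times(\CC^*)^n$ defined by the entries of $f$, and let $\pi\colon Z_f\to\CC^m$ be the projection onto the $\k$-coordinates. For each $\k\in\CC^m$, the scheme-theoretic fibre $\pi^{-1}(\k)$ equals $\operatorname{Spec}\big(\CC[x^\pm]/\langle f_\k\rangle\big)$, as is immediate from the explicit generators $f_i=\sum_j C_{ij}\k_j x^{M_j}$, which specialise to a generating set of $\langle f_\k\rangle$. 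So it suffices to produce a nonempty Zariski open $U\subseteq\CC^m$ over which this fibre is reduced. The key structural input, which is exactly the isomorphism underlying the parametrization $\phi$ of \eqref{eq:parametrization_of_incidence_variety} with the $b$-coordinate dropped, is that $Z_f$ is a smooth integral variety isomorphic to $\ker(C)\times(\CC^*)^n$: the $\CC[x^\pm]$-algebra automorphism of $\CC[\k,x^\pm]$ sending $\k_j\mapsto x^{-M_j}\k_j$ (an automorphism since each $x^{M_j}$ is a unit) carries $f_i$ to the linear form $(C\k)_i$, and as $\rk(C)=s$ these forms cut out the linear subspace $\ker(C)\subseteq\CC^m$.

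Next I would split on whether $\pi$ is dominant. If $\pi$ is not dominant, then $\overline{\pi(Z_f)}$ is a proper closed subset of $\CC^m$, and for $\k$ in the nonempty Zariski open $U:=\CC^m\setminus\overline{\pi(Z_f)}$ the fibre is empty; since $\CC[x^\pm]$ is the coordinate ring of the torus $(\CC^*)^n$, an ideal with empty vanishing locus is the unit ideal, so $\langle f_\k\rangle=\langle 1\rangle$, which is radical. If $\pi$ is dominant, then, because $\CC$ has characteristic $0$ and $Z_f$ is a smooth variety, generic smoothness yields a nonempty Zariski open $U\subseteq\CC^m$ such that $\pi\colon\pi^{-1}(U)\to U$ is a smooth morphism; for $\k\in U$ the fibre $\operatorname{Spec}\big(\CC[x^\pm]/\langle f_\k\rangle\big)$ is then smooth over $\CC$, hence regular, hence reduced — equivalently, $\langle f_\k\rangle$ is radical. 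In either case the conclusion holds on a nonempty Zariski open subset of $\CC^m$, which is precisely what ``radical for generic $\k\in\CC^m$'' means.

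I do not expect a genuine obstacle here: once the incidence variety is in place, the only non-formal ingredient is generic smoothness, and this is exactly where characteristic $0$ (i.e.\ working over $\CC$) is used. The two routine checks are that specialising $\langle f\rangle$ at a value of $\k$ returns $\langle f_\k\rangle$, and that $\k_j\mapsto x^{-M_j}\k_j$ is a well-defined automorphism of $\CC[\k,x^\pm]$ — both immediate from the generators, and the second is what makes $Z_f\cong\ker(C)\times(\CC^*)^n$ an isomorphism of schemes rather than merely of sets. As a consistency check with the earlier bullets: when the equivalent conditions of the theorem hold, condition (iii) already gives that $J_{f_\k}(x)$ has full rank $s$ at every $x\in\VV_{\CC^*}(f_\k)$ for generic $\k\in\Z$, so the corresponding fibre is regular at all its closed points and hence reduced; and when the conditions fail, $\VV_{\CC^*}(f_\k)=\varnothing$ for generic $\k\in\RR^m_{>0}$, so $\langle f_\k\rangle$ is the unit ideal. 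The incidence-variety argument subsumes both cases uniformly and, crucially, delivers the genericity in all of $\CC^m$ rather than only within $\Z$ or $\RR^m_{>0}$.
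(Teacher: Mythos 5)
Your argument establishes only the final bullet point of \Cref{thm:fnet} (radicality of $\langle f_\k\rangle$ for generic $\k\in\CC^m$); it does not prove the theorem. The main content of the statement --- the equivalence of conditions (i)--(vi), the first bullet (pure dimension $n-s$ of $\VV_{\CC^*}(f_\k)$, $\VV_{\RR^*}(f_\k)$ and $\VV_{>0}(f_\k)$ for generic $\k\in\Z$, plus nondegeneracy of all positive zeros), and the second bullet (generic emptiness, excess dimension and degeneracy when the conditions fail) --- is nowhere addressed. Indeed, your closing ``consistency check'' invokes condition (iii) and the generic emptiness statement as if they were already available, but these are precisely among the claims to be proven. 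The paper's proof obtains all of these (including radicality) by verifying the hypothesis $\I_f\cap(\RR^m_{>0}\times\RR^n_{>0})\neq\varnothing$ via \Cref{prop:Jacmatrices} and then reading off the statements (deg1), (degAll), (setE), (setZ), (setZC), (dim1), (dimG), (real) and (rad) of \Cref{thm:summary_RN}, the general result on vertically parametrized systems with $\ell=0$; none of that reduction, nor an independent substitute for it, appears in your write-up.

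That said, the part you do prove is correct, and your route is essentially the one underlying the (rad) statement: the monomial change of coordinates $\k_j\mapsto x^{-M_j}\k_j$ identifies the incidence scheme with $\ker(C)\times(\CC^*)^n$ (this is the scheme-theoretic refinement of the bijection $\phi$ in \eqref{eq:parametrization_of_incidence_variety}), after which generic smoothness in characteristic $0$ gives reducedness of the generic fibre, and the non-dominant case is handled by the unit ideal. The two ``routine checks'' you flag (that specialization of $\langle f\rangle$ at $\k$ yields $\langle f_\k\rangle$, and that the substitution is an automorphism of $\CC[\k,x^\pm]$) are indeed immediate. To complete a self-contained proof you would need to push the same incidence-variety picture much further: relate nondegenerate zeros of $f_\k$ to regular points of the projection $\pi\colon\ker(C)\times(\CC^*)^n\to\CC^m$, apply Sard-type and fiber-dimension arguments to get the dimension and genericity statements, and control the positive real part --- which is exactly what \Cref{thm:summary_RN} packages.
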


The proof of this result can be found in  \Cref{sec:thm_proofs}.

Note that the properties (i)-(vi) in \Cref{thm:fnet} together with the assumption $\ker(C) \cap \RR^m_{>0}\neq \varnothing$ precisely characterize the property of the steady state system $f$ being \emph{nondegenerate} and the network being \emph{weakly nondegenerate}, in the sense of \Cref{subsec:terminology}.

\begin{example}
With the matrices given in \Cref{ex:calcium_network2}, it holds that 
 \[\rk(C\diag(w)M^\top)=3\]
 for $w=(1,1,1,2,1,1)\in\ker(C)\cap\RR^m_{>0}$. 
Hence condition (i) in \Cref{thm:fnet} is satisfied, and we conclude that $f$ is nondegenerate and the network is weakly nondegenerate. In particular
the set $\Z\subseteq \RR^6_{>0}$ of the $\k$'s for which the network has positive steady states has nonempty Euclidean interior, and  for generic $\k$ in $\Z$, all 
positive zeros of $f_\k$ are nondegenerate and
$\VV_{>0}(f_\k), \VV_{\CC^*}(f_\k), \VV_{\RR^*}(f_\k)$ have dimension $1$.
\end{example}

\begin{example}
\label{ex:deg}
For the following reaction network with mass-action kinetics
\begin{align*}
3X_1 & \ce{->[\k_1]} 2X_1 & 2X_1+X_2 & \ce{->[\k_2]} 2X_1  &  X_1+X_3 &  \ce{->[\k_3]} X_1 \\
 2X_2 &  \ce{->[\k_4]} 2X_2 + X_3  &X_2+X_3 & \ce{->[\k_5]} X_1+X_2+X_3 &  2X_3 & \ce{->[\k_6]} X_2+2X_3,
\end{align*}
we have $s=3$, and the defining matrices are
{\small \[C=N=\begin{bmatrix}
-1 & 0 & 0 & 0 & 1 & 0 \\ 0& -1 & 0 & 0 & 0 & 1 \\   0 & 0 & -1   & 1& 0 & 0
\end{bmatrix}  \qquad \text{and}\qquad M =  \begin{bmatrix}
3 & 2 & 1 & 0 & 0 & 0 \\
0 & 1 & 0 & 2 &1 & 0\\
0 & 0 &1  &0 & 1 & 2
\end{bmatrix}\, .  \]}
Any $w\in \ker(C)$ is of the form $w=(u_1,u_2,u_3,u_3,u_1,u_2)$ for some $u_1,u_2,u_3\in \CC$. But then
\[ \det(C\diag(w)M^\top) =\det \begin{bmatrix} -3 u_{1} & u_{1} & u_{1} 
\\
 -2 u_{2} & -u_{2} & 2 u_{2} 
\\
 -u_{3} & 2 u_{3} & -u_{3} 
\end{bmatrix} =0.
\] 
Hence, condition (i) in  \Cref{thm:fnet} does not hold. We conclude that $f$ is degenerate, the network is strongly degenerate, and none of the statements (i)-(vi) hold. In particular, any positive steady state is degenerate, 
and the positive steady state variety is generically empty.
\end{example}

\begin{theorem}[Finiteness of the number of stoichiometrically compatible steady states]
\label{thm:Fnet}
For a reaction network with stoichiometric matrix $N \in \ZZ^{n\times m}$ and kinetic matrix $M\in \ZZ^{n\times m}$, consider the augmented steady state system  $F$ as in \eqref{eq:F} for $C\in \RR^{s\times n}$ of full rank $s=\rk(N)$ such that $\ker(N)=\ker(C)$ and $L\in \RR^{d\times n}$  with $LN=0$ and $d=n-s$. Consider the solvability locus $\Z_{\rm cc}$ of $F$ and
assume $\ker(C) \cap \RR^m_{>0}\neq \varnothing$. 
The following statements are equivalent:
\begin{enumerate}[label=(\roman*)]
\item For some $w\in \ker(C)$ and $h\in (\CC^*)^n$ it holds that
\[\rk\left[\! \begin{array}{c}C \diag(w)M^\top \diag(h) \\ L \end{array}\!\right]=n. \]

\item $F_{\k,b}$ has   a nondegenerate zero in $(\CC^*)^n$ for some $(\k,b)\in \CC^m\times \CC^d$. 
\item For generic $(\k,b)\in \Z_{\rm cc}$, 
 all zeros of $F_{\k,b}$ in $(\CC^*)^n$ are nondegenerate. 
\item $\Z_{\rm cc}$ has nonempty Euclidean interior in $\RR^m_{>0}\times \RR^d$.
\item $\Z_{\rm cc}$   is not contained in a hypersurface of $\RR^m\times \RR^d$. 
\item $ \VV_{\CC^*}(F_{\k,b})$ is nonempty and finite for at least one $(\k,b)\in \CC^m\times \CC^d$.
\end{enumerate}

\smallskip
\noindent
Furthermore:
\begin{itemize}
    \item If the equivalent statements hold, then for generic $(\k,b)\in \Z_{\rm cc}$, it holds that
    \[0<\# (\VV_{>0}(f_\k)\cap \P_b) <\infty\] 
    and all points of $\VV_{>0}(f_\k)\cap \P_{b}$ are nondegenerate steady states.
    \item The ideal generated by $F_{\k,b}$ in $\CC[x^\pm]$ is radical for generic $(\k,b)\in \CC^m\times \CC^d$. 
\end{itemize}
\end{theorem}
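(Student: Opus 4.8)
The plan is to deduce the theorem from the general theory of augmented vertically parametrized systems developed in \cite{FeliuHenrikssonPascual2023}, a version of which is recalled in \Cref{sec:thm_proofs}: the augmented steady state system $F$ of \eqref{eq:F} is precisely such a system, with the matrix $Q_F(w,h)$ from \eqref{eq:M_f_RN} playing the role of its Jacobian. As in the proof of \Cref{thm:fnet}, most of the work is then a translation between the reaction-network data and the hypotheses and conclusions of that theory; let me indicate which geometric facts carry the argument.

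First I would establish (i)$\iff$(ii) directly from \Cref{prop:Jacmatrices}. By \eqref{eq:parametrization_of_incidence_variety}, $\phi$ is an isomorphism of varieties from $\ker(C)\times(\CC^*)^n$ onto the complex incidence variety $\I_F:=\{(\k,b,x):F(\k,b,x)=0\}$ (with polynomial inverse $(\k,b,x)\mapsto(\k\circ x^M,x^{-1})$), and by \Cref{prop:Jacmatrices}(iii) we have $J_{F_{\k,b}}(x)=Q_F(w,h)$ whenever $(\k,b,x)=\phi(w,h)$. Hence some $F_{\k,b}$ has a nondegenerate zero in $(\CC^*)^n$ exactly when $\rk Q_F(w,h)=n$ for some $(w,h)\in\ker(C)\times(\CC^*)^n$, which is (i). Running the same argument over $\RR_{>0}$ and using that $(\ker(C)\cap\RR^m_{>0})\times\RR^n_{>0}$ is Zariski dense in $\ker(C)\times(\CC^*)^n$ shows that (i) may equally be tested on positive $(w,h)$.

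The heart of the proof is the geometry of the projection $\pi_F\colon\I_F\to\CC^m\times\CC^d$, $(\k,b,x)\mapsto(\k,b)$, whose fibers are the sets $\VV_{\CC^*}(F_{\k,b})$. Since $\phi$ is an isomorphism, $\I_F$ is irreducible of dimension $(m-s)+n=m+d=\dim(\CC^m\times\CC^d)$, so $\pi_F$ is dominant iff its generic fiber is nonempty, in which case (by equidimensionality and generic smoothness in characteristic $0$) that generic fiber is finite and reduced; and by upper semicontinuity of fiber dimension, non-dominance forces every nonempty fiber to be positive-dimensional. This yields (iv)$\iff$(v)$\iff$(vi), once one checks with Tarski--Seidenberg and the density statement above that $\Z_{\rm cc}$ is full-dimensional in $\RR^m\times\RR^d$ precisely when $\pi_F$ is dominant. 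To decide dominance I would differentiate $\pi_F\circ\phi$: with $B$ a basis matrix of $\ker(C)$, the differential at $(w,h)$ is block triangular, with the injective block $\diag(h^M)B$ in the top-left, the rank-$d$ block $-L\diag(h^{-2})$ in the bottom-right, and a remaining block whose reduction modulo $\im(\diag(h^M)B)$ gives, after a short computation, $\rk\big(d(\pi_F\circ\phi)_{(w,h)}\big)=(m-s)+\rk Q_F(w,h)$; so the differential has full rank $m+d$ exactly on $\{\rk Q_F(w,h)=n\}$, which ties dominance back to (i). The same facts give the ``Furthermore'' part: for generic $(\k,b)\in\Z_{\rm cc}$ the set $\VV_{>0}(f_\k)\cap\P_b$ is finite (nonemptiness is the definition of $\Z_{\rm cc}$) with all points nondegenerate steady states by (iii); and for generic $(\k,b)\in\CC^{m+d}$ the ideal $\langle F_{\k,b}\rangle\subseteq\CC[x^\pm]$ is radical, since the generic fiber is either empty (unit ideal, by the Nullstellensatz on the torus) or reduced and $0$-dimensional. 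Conversely, if (i)--(vi) fail then $\pi_F$ is not dominant, so its image, and hence $\Z_{\rm cc}$, lies in a hypersurface.

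The step I expect to be the main obstacle is this last one: promoting ``some nondegenerate zero of some $F_{\k,b}$ exists'' to a statement that is generic over all of $\Z_{\rm cc}$, and correctly matching the semialgebraic set $\Z_{\rm cc}$ with the image of the complex projection $\pi_F$. That is exactly where the full strength of \cite{FeliuHenrikssonPascual2023} — the dominance criteria, generic smoothness, and semicontinuity of fiber dimension — is required, rather than anything that can be read off from $N$, $M$ and $L$ by hand.
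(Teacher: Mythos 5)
Your proposal is correct and takes essentially the same route as the paper: the paper's proof of \Cref{thm:Fnet} simply specializes the general result on augmented vertically parametrized systems (\Cref{thm:summary_RN}, quoted from \cite{FeliuHenrikssonPascual2023}) to the case $\ell=n-s$, and the machinery you sketch---the parametrization $\phi$ of the incidence variety, the identification of nondegenerate zeros with regular points of the projection to parameter space via the rank formula $\rk\big(d(\pi_F\circ\phi)_{(w,h)}\big)=(m-s)+\rk Q_F(w,h)$, and the dominance, generic-smoothness and density arguments---is exactly the content of that cited theorem and of \cite[Proposition~3.3]{FeliuHenrikssonPascual2023}. The only difference is that you unpack the black box rather than citing it, and your differential computation and the transfer of dominance to the positive real points are consistent with the underlying proof.
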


The proof is analogous to that of \Cref{thm:fnet}, and both can be found in \Cref{sec:thm_proofs}, where additional properties of $f$ and $F$ are given. We note that $\ker(C) \cap \RR^m_{>0}\neq \varnothing$ together with any of the equivalent conditions (i)-(vi) in \Cref{thm:Fnet} characterize when both the network and the augmented steady state system $F$ are \emph{nondegenerate} in the sense of \Cref{subsec:terminology}.

\Cref{thm:Fnet} tells us that $\VV_{>0}(f_\k)\cap \P_b$ is either generically empty, or finite for generic parameter values in $\Z_{\rm cc}$. We obtain the following consequence. 

\begin{corollary}
\label{cor:infinite}
The set  $\left\{ (\k,b)\in\RR^m_{>0}\times L(\RR^n_{>0}) : \# (\VV_{>0}(f_\k)\cap \P_b)=\infty \right\}	$
is contained in a proper algebraic variety, and hence always 
has empty Euclidean interior in $\RR^m_{>0}\times L(\RR^n_{>0})$. 
\end{corollary}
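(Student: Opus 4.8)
The plan is to deduce this directly from \Cref{thm:Fnet}. Write $S$ for the set appearing in the statement. The first observation is that $S\subseteq\Z_{\mathrm{cc}}$: if $\VV_{>0}(f_\k)\cap\P_b=\VV_{>0}(F_{\k,b})$ is infinite, then it is in particular nonempty. So it suffices to produce a proper algebraic subvariety $Y\subsetneq\RR^m\times\RR^d$ with $S\subseteq Y$, after which the topological conclusion will follow formally. If $\ker(C)\cap\RR^m_{>0}=\varnothing$, then $\Z_{\mathrm{cc}}=\varnothing$ by \Cref{prop:Jacmatrices}(i), so $S=\varnothing$ and any proper $Y$ works; hence from now on I would assume $\ker(C)\cap\RR^m_{>0}\neq\varnothing$, which places us in the setting of \Cref{thm:Fnet}.

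Next I would split on nondegeneracy. If the network is degenerate, then (since $\ker(C)\cap\RR^m_{>0}\neq\varnothing$) all of conditions (i)--(vi) of \Cref{thm:Fnet} fail; in particular (v) fails, which means $\Z_{\mathrm{cc}}$ is contained in a hypersurface of $\RR^m\times\RR^d$, and I take $Y$ to be this hypersurface. If instead the network is nondegenerate, the ``Furthermore'' clause of \Cref{thm:Fnet} gives $0<\#(\VV_{>0}(f_\k)\cap\P_b)<\infty$ for generic $(\k,b)\in\Z_{\mathrm{cc}}$; by the meaning of ``generic'' fixed at the beginning of \Cref{sec:theorems}, this provides a nonempty Zariski-open set $W\subseteq\RR^m\times\RR^d$ with $\#(\VV_{>0}(f_\k)\cap\P_b)<\infty$ for every $(\k,b)\in\Z_{\mathrm{cc}}\cap W$. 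Since $S\subseteq\Z_{\mathrm{cc}}$, this forces $S\cap W=\varnothing$, so $S\subseteq Y:=(\RR^m\times\RR^d)\setminus W$, which is Zariski-closed and proper because $W\neq\varnothing$.

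Finally I would upgrade ``$S\subseteq Y$ for a proper subvariety $Y$'' to the stated topological assertion. A proper Zariski-closed set $Y\subsetneq\RR^{m+d}$ is contained in the zero set $\{p=0\}$ of some nonzero polynomial $p$ (any nonzero element of a defining ideal of $Y$); as $\{p=0\}$ is Euclidean-closed with empty interior, so is $Y$, and hence so is $S\subseteq Y$. Moreover, since $L$ has full rank $d$ the linear map $L$ is open, so $\RR^m_{>0}\times L(\RR^n_{>0})$ is a Euclidean-open subset of $\RR^{m+d}$; a set that has empty interior in the ambient space and is contained in an open subset also has empty interior in that subset, giving the claim.

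I do not anticipate a genuine obstacle here, as the corollary is really bookkeeping layered on \Cref{thm:Fnet}. The two small points that require care are (a) correctly invoking the contrapositive of the equivalences in \Cref{thm:Fnet} in the degenerate case, and (b) being precise that ``generic in $\Z_{\mathrm{cc}}$'' is witnessed by a Zariski-open subset of the ambient affine space, and not merely of $\overline{\Z_{\mathrm{cc}}}$; the latter, however, is exactly the convention adopted in the paper.
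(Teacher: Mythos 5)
Your proposal is correct and follows essentially the same route as the paper, which derives the corollary directly from the dichotomy in \Cref{thm:Fnet}: in the degenerate case $\Z_{\rm cc}$ lies in a hypersurface by the failure of condition (v), and in the nondegenerate case generic finiteness on $\Z_{\rm cc}$ excludes the infinite locus from a nonempty Zariski-open set. Your extra care with the consistency edge case and the passage from ``contained in a proper variety'' to ``empty Euclidean interior'' is sound and matches the paper's conventions.
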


It is straightforward to see that condition (i) in \Cref{thm:Fnet} implies condition (i) in \Cref{thm:fnet}.  This gives rise to the following corollary, which explains the choice of terminology of ``weak'' nondegeneracy and ``strong'' degeneracy.  Note that the converse of the corollary is not necessarily true, as shown by \Cref{ex:nondegenerate_but_degenerate_wrt_L}.

\begin{corollary}
\label{cor:fimpliesf}
If the augmented steady state system is nondegenerate, then so is the steady state system.  If a network is nondegenerate, then it is also weakly nondegenerate. If a network is strongly degenerate, then it is also degenerate.
\end{corollary}

\begin{example}
\label{ex:nondegenerate_but_degenerate_wrt_L}
Consider the   reaction network
\[ X_1 + X_2 \ce{->[\k_1]} X_1 \qquad X_2 \ce{->[\k_2]} 2X_2\, , \]
the steady states of which are described by the single parametric polynomial 
	 \[f =- \k_1 x_1 x_2 + \k_2x_2 = x_2(-\k_1x_1+ \k_2)\, .\]
We obtain that $\VV_{>0}(f_\k)=\{(x_1,x_2)\in \RR^2_{>0} : x_1=\tfrac{\k_2}{\k_1}\}$  for all $\k\in \RR^2_{>0}$.  In particular, $\Z=\RR^2_{>0}$, so $f$ is nondegenerate and the network is weakly nondegenerate by \Cref{thm:fnet}. 
As the stoichiometric compatibility classes are defined by the equation $x_1=b$ for $b>0$, $\VV_{>0}(f_\k)\cap \P_{b}\neq \varnothing$ only if $\tfrac{\k_2}{\k_1}=b$, and hence $\Z_{\rm cc}$ has empty Euclidean interior.   We conclude that both the system $F$ and the network are degenerate by \Cref{thm:Fnet}. 
\end{example}

\Cref{thm:fnet} and \Cref{thm:Fnet} concern  the \emph{generic} behavior of the zero sets, but they do not preclude pathological behaviors from arising for specific choices of parameters. 
Examples of what these behaviors can be are given next.

\begin{example}
\label{ex:somedegpars1}
The reaction network
    \[ 2X_1 + X_2 \ce{->[\k_1]} 3X_1 \qquad X_1+2X_2 \ce{->[\k_2]} 2X_1+X_2 \qquad X_1+X_2 \ce{->[\k_3]} 2X_2 \]
gives rise to the system $f=x_1x_2(\k_1x_1+\k_2x_2-\k_3)$ and the sets $\P_b$ are defined by the equation $x_1+x_2=b$.   As $\VV_{>0}(f_\k)\cap \P_{b}$ is defined by the intersection of two lines, it is straightforward to verify that
$\Z_{\rm cc}$ has nonempty Euclidean interior. Hence, 
the network is nondegenerate and $F_{\k,b}$ has finite positive zero sets for generic $(\k,b)\in\Z_{\rm cc}$.  
Nevertheless, for $\k=(1,1,1)$, $\VV_{>0}(f_\k)\cap \P_{b}=\varnothing$ if $b\neq 1$, whereas $\# (\VV_{>0}(f_\k)\cap \P_{b})=\infty $ for $b=1$.   This only happens nongenerically,  
namely for parameters of the form
$((\k_1,\k_1,\k_1b),b)$ with $\k_1,b>0$. 
\end{example}

\begin{example}
\label{ex:somedegpars2}
For the reaction network
\[
    3X_1+X_2\ce{->[\kappa_1]}  4X_1 \qquad
2X_1+X_2 \ce{->[\kappa_2]}  3X_2  \qquad
X_1+X_2 \ce{->[\kappa_3]}  2X_1,\]
we have $n=2$, $s=1$, and the steady state  system is
$f=x_1x_2(\kappa_1x_1^2-2\kappa_2x_1+\kappa_3)$, which is generically $1$-dimensional.
For parameters satisfying $\kappa_2^2=\kappa_1\kappa_3$, all positive zeros of $f_{\k}$ are degenerate but $\VV_{\CC^*}(f_\k)$,  $\VV_{\RR^*}(f_\k)$, and $\VV_{>0}(f_\k)$  nonetheless have pure dimension $1$. Hence, degeneracy of all positive steady states for a choice of parameters does not necessarily imply that the dimension of the steady state variety is higher than expected for that parameter value.  
\end{example}

\begin{example}
\label{ex:finite_deg}
For the reaction network
\begin{align*}
3X_1+X_2 & \ce{->[\kappa_1]}  4X_1+2X_2 & 
X_1+X_2 &  \ce{->[\k_2]} 2X_1+X_2 \ce{->[\k_3]} X_1 \\  
X_1+3X_2 & \ce{->[\kappa_4]}  2X_1+4X_2 &
X_1+X_2 &  \ce{->[\k_5]} X_1+2X_2 \ce{->[\k_6]} X_2,
\end{align*}
with stoichiometric and reactant matrices  
\[C=\begin{bmatrix} 1 & 1 & -1 & 1 & 0 & -1 \\
1 & 0 & -1 & 1 & 1 & -1 \end{bmatrix}\quad\text{and}\quad M=\begin{bmatrix}
3 & 1 & 2 & 1 & 1 & 1 \\ 
1 & 1 & 1 & 3 & 1 & 2 
\end{bmatrix}\, , \]
the steady state system $f$ is degenerate (condition (i) of \Cref{thm:fnet} fails). For $\kappa=(1,2,2,1,2,2)$ we have
\[f_\kappa = 
\left(\begin{array}{l}
x_1 x_2 (x_1^2+2-2x_1+x_2^2-2x_2) \\
x_1 x_2 (x_1^2-2x_1+x_2^2+2-2x_2) \end{array}\right)=\left(\begin{array}{l}x_1 x_2 ((x_1-1)^2+(x_2-1)^2)\\
x_1x_2((x_1-1)^2+(x_2-1)^2)\end{array}\right),\]
which has finite $\VV_{>0}(f_\kappa)$. Hence, statement (vi) in \Cref{thm:fnet} and \Cref{thm:Fnet} cannot be replaced by the existence of a choice of parameters for which the set of positive zeros is finite. 
\end{example}

\begin{remark}[Boundary steady states]
The results in this section concern the steady states with nonzero coordinates. 
By \cite[Theorem~3.19]{FeliuHenrikssonPascual2023}, the results on generic dimension extend from $(\CC^*)^n$ to $\CC^n$ if all the polynomials in $f_{\k}$ have a constant term involving a different parameter. This scenario arises when the network includes all \emph{inflow reactions}, that is, reactions of the form $0\rightarrow X_i$ for all species $X_i$ (in particular $s=n$). This is the case for \textit{Continuous-flow stirred-tank reactors (CFSTRs)}, see \cite{craciun-feinbergI}. For these networks, the existence of a nondegenerate zero of $f_\k$ ensures that the set of nonnegative steady states in $\RR^n_{\geq 0}$ 
has generically dimension $0$.
\end{remark}

\subsection{Computationally deciding on nondegeneracy}
\label{subsec:computational}
Checking condition (i) in \Cref{thm:fnet,thm:Fnet}  can be done computationally as follows. Let $G\in\RR^{m\times(m-s)}$ be a Gale dual matrix to $C$, in the sense that $\ker(C)=\im(G)$. Recall the matrices $Q_f(w)$ and $Q_F(w,h)$ from 
\eqref{eq:M_f_RN}.

\smallskip

\noindent
\textit{Consistency:} Nonemptiness of $\ker(C)\cap\RR^m_{>0}$ is equivalent to the feasibility of the system $\{Cx=0,x_1\geq 1, \dots,x_m\geq 1\}$, which can be checked with linear programming.

\smallskip
\noindent
\textit{Nondegeneracy of $f$:}
To check condition (i) of \Cref{thm:fnet}, we pick a random $u\in\RR^{m-s}$ (for some appropriate distribution), and compute $\rk(Q_f(Gu))$ with exact arithmetic. If the rank is $s$, then $f$ is nondegenerate. If  not, we view $u$ as indeterminate, compute the $s$-minors of $Q_f(Gu)$ in $\RR[u_1,\ldots,u_{m-s}]$, and use that $f$ is degenerate if and only if all minors are zero.

\smallskip
\noindent
\textit{Nondegeneracy of $F$:}
Similarly, condition (i) in \Cref{thm:Fnet} can be checked by first computing $\rk(Q_F(Gu,h))$ for a 
random $(u,h)\in\RR^{m-s}\times(\RR^*)^n$ with exact arithmetic. If the rank is $n$, then $F$ is nondegenerate; if not, we compute $\det(Q_F(Gu,h))$ as a polynomial in $\RR[u_1,\ldots,u_{m-s},h_1^\pm,\ldots,h_n^\pm]$, and use that $F$ is degenerate if and only if the determinant is zero. 

\smallskip

A \texttt{Julia} implementation of these computations, based on the computer algebra package \texttt{Oscar.jl} \cite{OSCAR} and the reaction network theory package \texttt{Catalyst.jl} \cite{Catalyst}, can be found in the Zenodo repository
\begin{center}
\url{https://doi.org/10.5281/zenodo.18063244}\,.  
\end{center}
\smallskip

As a demonstration of the applicability of our implementation, we consider the networks in the database ODEbase \cite{odebase}, modeling all of them with mass-action kinetics for simplicity. Out of 628 networks, we found that precisely 368 are consistent. Among these, 6 networks are strongly degenerate.
The other 362 networks are nondegenerate. The results of these computations are available in the aforementioned Zenodo repository.

The largest nondegenerate network in the database  is \texttt{BIOMD0000000014}, with $n=86$, $m=300$ and $d=n-s=9$, for which the nondegeneracy checks take less than 2~seconds\footnote{All computations were run on a MacBook Air with an Apple M2 chip and 16 GB of RAM.}.

\subsection{Network-theoretic conditions that ensure nondegeneracy}\label{subsec:classes}
In this section we go through a number of properties that play a central role in the reaction network theory literature, and prove that they imply
nondegeneracy of the network,  and hence  the equivalent properties in \Cref{thm:fnet} and \Cref{thm:Fnet}.

Weakly reversible networks are quite well understood; for example, they are known to admit positive steady states for all choices of $(\k,b)\in \RR^m_{>0}\times L(\RR^n_{>0})$  \cite{boros:ss}, and are conjectured to display strong dynamical behaviors such as persistence \cite{FEINBERG19872229} or bounded trajectories \cite{Anderson2011BoundednessOT}.
A somewhat unexpected property was revealed in \cite{boros2020weakly}, where  the authors provide several examples of  weakly reversible networks displaying infinitely many positive steady states in a  stoichiometric compatibility class. One such example is given next.

\begin{example}
\label{ex:BCY}
In \cite{boros2020weakly}, the authors considered the network 
 \[\begin{tikzcd}
	{\ce{2Y}} && {\ce{X + 3Y}} && {\ce{2X + Y}} \\
	{\ce{Y}} & {\ce{X + Y}} & {\ce{X + 2Y}} & {\ce{2X + 2Y}} & {\ce{2X}} & {\ce{3X}\rlap{,}}
	\arrow["{\kappa_1}"', from=1-1, to=2-1]
	\arrow["{\kappa_2}"', shift right=0.3ex, harpoon', from=2-1, to=2-2]
	\arrow["{\kappa_4}"', from=2-2, to=1-1]
	\arrow["{\kappa_3}"', shift right=0.3ex, harpoon', from=2-2, to=2-1]
	\arrow["{\kappa_7}"', from=2-3, to=2-4]
	\arrow["{\kappa_8}"', from=2-4, to=1-3]
	\arrow["{\kappa_6}"', shift right=0.3ex, harpoon', from=2-3, to=1-3]
	\arrow["{\kappa_5}"', shift right=0.3ex, harpoon', from=1-3, to=2-3]
	\arrow["{\kappa_{10}}"', from=2-5, to=2-6]
	\arrow["{\kappa_9}"', from=1-5, to=2-5]
	\arrow["{\kappa_{12}}"', shift right=0.3ex, harpoon', from=1-5, to=2-6]
	\arrow["{\kappa_{11}}"', shift right=0.3ex, harpoon', from=2-6, to=1-5]
\end{tikzcd}\]
with $s=n=2$, and fine tuned the  reaction rate constants in such a way that the two equations defining the steady states had a common factor and the other factors did not admit positive zeros. They obtained $\dim \VV_{\CC^*}(f_\k) =\dim \VV_{>0}(f_\k) =1$, and hence  infinitely many positive steady states in the only stoichiometric compatibility class $\RR^2_{>0}$. 
With this trick, they illustrated   that even for weakly reversible reaction networks, $\dim \VV_{>0}(f_\k)$ could be larger than expected for some choice of parameter values. 
\end{example}

Motivated by the example above, the authors of \cite{boros2020weakly}  posed the following question:
\smallskip
\begin{center}
\begin{minipage}[h]{0.9\textwidth}
\textit{Is it possible for a weakly reversible network  to have infinitely many positive steady states [in each positive stoichiometric compatibility class] for each choice of reaction rate constants in a [Euclidean] open set of the parameter space $\RR^m_{>0}$?}
\smallskip
\end{minipage}
\end{center}
We answered  the question in the negative in \Cref{cor:infinite} for \emph{any} network. 
So the condition on weak reversibility is superfluous in answering the question. However, for weakly reversible networks, we give the following strengthened answer. 

\begin{corollary}
\label{cor:wr_conservative}
Suppose that a reaction network satisfies at least one of the following:
 \begin{enumerate}[label=(\roman*)]
    \item It is weakly reversible and has mass-action kinetics.
    \item $M\in \ZZ_{\geq 0}^{n\times m}$, it is conservative, and lacks relevant boundary steady states.
\end{enumerate}
Then $\Z_{\rm cc}=\RR^m_{>0}\times L(\RR^n_{>0})$
and hence the network is nondegenerate.
\end{corollary}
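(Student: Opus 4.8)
The plan is to prove the stronger statement $\Z_{\rm cc}=\RR^m_{>0}\times L(\RR^n_{>0})$ and then deduce nondegeneracy from \Cref{thm:Fnet}. Once this equality is established, nondegeneracy is immediate: since $L$ has full rank $d$, the cone $L(\RR^n_{>0})$ is $d$-dimensional and open in $\RR^d$, so $\Z_{\rm cc}$ is an open subset of $\RR^m_{>0}\times\RR^d$ with nonempty Euclidean interior, which is condition (iv) of \Cref{thm:Fnet}; and $\Z_{\rm cc}\neq\varnothing$ forces $\ker(C)\cap\RR^m_{>0}\neq\varnothing$ by \Cref{prop:Jacmatrices}(i), so $F$ (and the network) is nondegenerate by definition. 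The inclusion $\Z_{\rm cc}\subseteq\RR^m_{>0}\times L(\RR^n_{>0})$ always holds, so the task reduces to showing that for \emph{every} $\k\in\RR^m_{>0}$ and \emph{every} $b\in L(\RR^n_{>0})$ there is a positive steady state in $\P_b$; note that $b\in L(\RR^n_{>0})$ is exactly the statement that $\P_b=\P_b\cap\RR^n_{>0}\neq\varnothing$. In case (i), this is precisely the existence theorem for weakly reversible mass-action systems of Boros \cite{boros:ss}: such a system has a positive steady state in every positive stoichiometric compatibility class, for every choice of reaction rate constants, so nothing further is needed.

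For case (ii), I would argue dynamically. Conservativity means the Euclidean closure $\overline{\P_b}=\{x\in\RR^n_{\geq 0}:Lx=b\}$ is a nonempty compact convex set, with $\P_b$ as its relative interior. Since $M\in\ZZ_{\geq 0}^{n\times m}$, the orthant $\RR^n_{\geq 0}$ is forward invariant for \eqref{eq:ode} (under the standing assumption of \Cref{sec:notations_defs} on the negative monomials of $dx_i/dt$, which holds for mass action), and every trajectory stays in its stoichiometric compatibility class; hence $\overline{\P_b}$ is forward invariant, and, being compact, the flow on it is complete and the time-$t$ maps $\Phi_t\colon\overline{\P_b}\to\overline{\P_b}$ are continuous. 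A nonempty compact convex forward-invariant set of a continuous flow contains an equilibrium (apply Brouwer's fixed point theorem to $\Phi_{1/k}$ and extract, by compactness, a subsequential limit of the fixed points), so there is $x^*\in\overline{\P_b}$ with $N(\k\circ (x^*)^M)=0$. If $x^*$ lay on $\RR^n_{\geq 0}\setminus\RR^n_{>0}$, then $(\k,b,x^*)$ would be a relevant boundary steady state—here $F_{\k,b}(x^*)=0$ because $\ker(C)=\ker(N)$ and $Lx^*=b$, and $\P_b\cap\RR^n_{>0}=\P_b\neq\varnothing$—contradicting the hypothesis. Hence $x^*\in\RR^n_{>0}$, i.e.\ $x^*\in\VV_{>0}(f_\k)\cap\P_b$ and $(\k,b)\in\Z_{\rm cc}$.

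The only genuinely delicate point is the fixed-point step in case (ii): justifying that the flow on the compact convex set $\overline{\P_b}$ actually has an equilibrium, together with the careful verification that $\overline{\P_b}$ is forward invariant (which in turn rests on forward invariance of the orthant). This is classical and underlies, e.g., the existence results of \cite{boros:ss}, but it is the crux of the argument; by contrast, case (i) outsources the entire difficulty to \cite{boros:ss}, and the passage from $\Z_{\rm cc}=\RR^m_{>0}\times L(\RR^n_{>0})$ to nondegeneracy is purely formal via \Cref{thm:Fnet} and \Cref{prop:Jacmatrices}.
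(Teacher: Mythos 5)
Your proposal is correct and follows essentially the same route as the paper: case (i) is outsourced to the existence theorem of Boros \cite{boros:ss}, and case (ii) is the standard Brouwer fixed-point argument on the compact convex set $\overline{\P_b}$ combined with the lack of relevant boundary steady states, after which nondegeneracy follows from condition (iv) of \Cref{thm:Fnet} exactly as you say. The only point where you supply more detail than the paper (which simply cites \cite[Proposition~6.5]{FeliuHenrikssonToric}) is the flow-based fixed-point step, and there your reliance on forward invariance of $\RR^n_{\geq 0}$ tacitly uses the Volpert-type condition on negative monomials from \Cref{sec:notations_defs}, which is not literally implied by $M\in\ZZ_{\geq 0}^{n\times m}$ alone — a caveat worth noting, though the cited reference handles the general case.
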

\begin{proof}
The fact that $\Z_{\rm cc}=\RR^m_{>0}\times L(\RR^n_{>0})$ follows for case (i), from \cite{boros:ss}, and for case (ii), by a standard Brouwer fixed point argument, using that the Euclidean closure of $\P_b$ is compact and there are no steady states at the boundary (see, e.g., \cite[Lemma~1]{DuvallSontag2024}). 
\end{proof}

Two other important classes of networks for which our results have implications are those that are injective and those that satisfy the conditions of Feinberg's Deficiency One Theorem \cite[Theorem~4.2]{Feinberg1995existence}. 

A reaction network is called \textit{injective} if $f_{\k}$ is injective as a map $\mathcal{P}_b\to\RR^s$ for all $(\k,b)\in \RR^m_{>0}\times L(\RR^n_{>0})$ (see \cite{craciun-feinbergI,feliu2012preclusion,Muller2015injectivity}).
 The \textit{Deficiency One Theorem} requires that, when all linkage classes of a given network are considered separately, their deficiencies add up to exactly the deficiency of the network, none of them being higher than $1$. Additionally, each linkage class should have no more than one terminal strong linkage class.
These properties do not guarantee the existence of a positive steady state, so we need to require the network to be consistent. 

\begin{corollary}
Suppose that a reaction network satisfies $\ker(N)\cap \RR^m_{>0}\neq \varnothing$ and at least one of the following conditions:
 \begin{enumerate}[label=(\roman*)]
    \item It is injective.
    \item It fulfils the criteria of the Deficiency One Theorem.
\end{enumerate}
Then $\Z_{\rm cc}$ has nonempty Euclidean interior 
and hence the network is nondegenerate.
\end{corollary}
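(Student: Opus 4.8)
\emph{Overall strategy.}
The plan is, in each of the two cases, to verify one of the equivalent conditions in \Cref{thm:Fnet} and then read the conclusion off from its condition~(iv). Note that the assumed consistency $\ker(N)\cap\RR^m_{>0}\neq\varnothing$ is exactly the standing hypothesis $\ker(C)\cap\RR^m_{>0}\neq\varnothing$ of that theorem (since $\ker(C)=\ker(N)$), and that it already forces $\Z_{\rm cc}\neq\varnothing$ by \Cref{prop:Jacmatrices}(i). Recall also from \Cref{prop:Jacmatrices}(iii) that $J_{F_{\k,b}}(x)=Q_F(w,h)$ whenever $(\k,b,x)=\phi(w,h)$, so every statement about Jacobians at positive steady states translates into a statement about the matrices $Q_F(w,h)$ with $w\in\ker(C)\cap\RR^m_{>0}$ and $h\in\RR^n_{>0}$.

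\emph{Case (i): injective networks.}
I would invoke the classical Jacobian criterion for injectivity \cite{craciun-feinbergI,feliu2012preclusion,Muller2015injectivity}: since the network is injective precisely when the map $x\mapsto\bigl(C(\k\circ x^M),\,Lx\bigr)$ is injective on $\RR^n_{>0}$ for every $\k\in\RR^m_{>0}$, injectivity is equivalent to $\det Q_F(w,h)$ being nonzero (indeed of constant sign) for all $w\in\RR^m_{>0}$ and $h\in\RR^n_{>0}$. Taking any $w\in\ker(C)\cap\RR^m_{>0}$ --- which exists by hypothesis --- and $h=(1,\dots,1)$ gives $\det Q_F(w,h)\neq 0$, hence $\rk Q_F(w,h)=n$ with $w\in\ker(C)$; this is condition~(i) of \Cref{thm:Fnet}, so the network is nondegenerate and $\Z_{\rm cc}$ has nonempty Euclidean interior. (An elementary argument already shows injectivity forces $\det Q_F$ not to vanish identically; the reason to quote the stronger statement is precisely to land inside the slice $w\in\ker(C)$ rather than at an arbitrary positive $w$.)

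\emph{Case (ii): the Deficiency One Theorem.}
By consistency and \Cref{prop:Jacmatrices}(i) there is $(\k_0,b_0)\in\Z_{\rm cc}$, i.e.\ $f_{\k_0}$ has a positive steady state $x_0$ in the stoichiometric compatibility class $\P_{b_0}$. Feinberg's Deficiency One Theorem \cite[Theorem~4.2]{Feinberg1995existence} guarantees that $x_0$ is the \emph{unique} positive steady state in $\P_{b_0}$. What remains, and what I would extract from the proof of that theorem, is that $x_0$ is moreover a \emph{nondegenerate} zero of $F_{\k_0,b_0}$, i.e.\ $Q_F(w_0,h_0)=J_{F_{\k_0,b_0}}(x_0)$ is invertible: Feinberg's argument realizes $\VV_{>0}(f_{\k_0})$ as a smooth $(n-s)$-dimensional manifold and deduces uniqueness from a \emph{strict} monotonicity statement along each compatibility class, and it is this strictness that makes the intersection with $\P_{b_0}$ transverse. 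Granting it, $x_0$ witnesses condition~(ii) of \Cref{thm:Fnet}, and we conclude as before.

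\emph{Expected main obstacle.}
The delicate step is exactly the one just flagged: the Deficiency One Theorem is normally quoted only as a uniqueness result, so the work lies in rereading its proof to isolate transversality (equivalently, nondegeneracy) of the unique steady state, or in locating a reference that states it directly. If that turns out to be awkward, the fallback I would take is a direct linear-algebra verification of \Cref{thm:Fnet}(i), exhibiting a single pair $(w,h)\in\ker(C)\times(\CC^*)^n$ with $\rk Q_F(w,h)=n$ from the structural constraints forced by the Deficiency One hypotheses (additivity of the linkage-class deficiencies, each of them at most $1$, and one terminal strong linkage class per linkage class), in the same spirit as the other classes of networks handled in this section.
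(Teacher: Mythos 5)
Your proposal follows essentially the same route as the paper: both cases are reduced to exhibiting a nondegenerate positive steady state (equivalently, the rank condition (i) of \Cref{thm:Fnet}), after using consistency and \Cref{prop:Jacmatrices}(i) to guarantee $\Z_{\rm cc}\neq\varnothing$. Case (i) is complete as written; the Jacobian determinant criterion you invoke is precisely the content of the references the paper cites (\cite[Sec.~6]{semiopen}, \cite[Corollary~5.12]{feliu2012preclusion}). For case (ii), the obstacle you flag dissolves: there is no need to reread the proof of the Deficiency One Theorem, because Feinberg's paper contains a companion result, \cite[Theorem~4.3]{Feinberg1995existence}, which directly asserts that under the Deficiency One hypotheses every positive steady state is nondegenerate, and this is what the paper cites.
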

\begin{proof}
By \Cref{prop:Jacmatrices}(i), $\Z_{\rm cc}\neq \varnothing$. 
Any positive steady state is guaranteed to be nondegenerate in case (i) by \cite[Sec. 6]{semiopen}, see also \cite[Corollary~5.12]{feliu2012preclusion}, and in case (ii) by \cite[Theorem~4.3]{Feinberg1995existence}. The result now follows since  condition (i) in \Cref{thm:Fnet}  holds.
\end{proof}

\subsection{The kinetic and stoichiometric subspace}
We close the section with a discussion on how the dimension of the \emph{kinetic subspace} is related to nondegeneracy.

For a fixed choice of reaction rate constants $\k\in\RR^m_{>0}$, let 
$\Sigma_\k$ be the coefficient matrix of $N (\k \circ x^M)$ as a Laurent polynomial system in $\RR[x^\pm]^n$. 
The \emph{kinetic subspace} is 
\[S_\k := \im(\Sigma_\k), \]
which satisfies that the trajectories of \eqref{eq:ode} 
are contained in parallel translates of $S_\k$. In fact, this is the minimal vector subspace with this property. 
 It clearly holds that $S_\k \subseteq \im(N)$, but the inclusion might be strict. 
The stoichiometric subspace $S=\im(N)$ depends only on $N$, while the kinetic subspace also depends on the value of $\k$ and the kinetic matrix $M$.

In \cite{feinberg-invariant}, it is shown that if all linkage classes contain a unique terminal strong linkage class, then $S$ and $S_\k$ agree for all $\k$. Moreover, inequality of these spaces 
implies that all positive steady states are degenerate; this is immediate as if $S_{\k}\subsetneq S$, then the entries of $f_\k$ are linearly dependent  (see also \cite[Section 3.A.1]{feinberg2019foundations}). The converse  is not true, as it is possible to find degenerate systems $f$ with $S_{\k}=S$ for all $\k$ (e.g., \Cref{ex:deg}).
In \cite[Page 123]{feinberg2019foundations} it is stated that \textit{``when the kinetic subspace is smaller than the stoichiometric subspace, a positive equilibrium in a stoichiometric compatibility class will usually be accompanied by an infinite number of them''}. The following corollary   formalizes this observation    over the complex numbers, when $S_{\kappa}\subsetneq S$ holds for generic parameters.

\begin{corollary}
\label{cor:kinetic1}
Suppose that a reaction network satisfies $S_{\kappa}\subsetneq S$ for generic $\k\in\Z$. 
Then $f$ and $F$ are both degenerate. In particular, the network  is  strongly  degenerate, the  solvability loci  $\Z$ and $\Z_{\rm cc}$ have empty Euclidean interior, and 
    \[\#(V_{\CC^*}(F_{\k,b}))=\infty  \qquad \text{for all } \ (\k,b)\in \Z_{cc}. \]
\end{corollary}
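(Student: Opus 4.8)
The plan is to show that the hypothesis $S_\k \subsetneq S$ for generic $\k \in \Z$ forces the failure of condition (i) in \Cref{thm:fnet}, which by that theorem immediately yields degeneracy of $f$, and then degeneracy of $F$ follows from the contrapositive of \Cref{cor:fimpliesf}. First I would unwind what $S_\k \subsetneq S$ means at the level of the coefficient matrix: $S_\k = \im(\Sigma_\k)$ and $S = \im(N)$, so $S_\k \subsetneq S$ is equivalent to $\rk(\Sigma_\k) < \rk(N) = s$. Since $\ker(C) = \ker(N)$ and $C$ has full rank $s$, the system $f_\k = C(\k \circ x^M)$ has coefficient matrix that is obtained from $\Sigma_\k$ by a surjective linear map with kernel complementary to $\im(N)$; in particular the rank of the coefficient matrix of $f_\k$ equals $\rk(\Sigma_\k)$, so $S_\k \subsetneq S$ says precisely that the $s$ entries of $f_\k$ are linearly dependent as Laurent polynomials.

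Next I would connect this to the matrix $Q_f(w) = C\diag(w)M^\top$. Writing $f_\k = C(\k\circ x^M)$, one computes $J_{f_\k}(x) = C\diag(\k\circ x^M)M^\top\diag(x)^{-1}$, so by \Cref{prop:Jacmatrices}(iii) (taking $\k = w\circ h^M$, $x = h^{-1}$, $w = \k$, $h = x^{-1}$ componentwise over the positive reals), the rank of $Q_f(w)$ for $w = \k \in \RR^m_{>0}$ equals the rank of $J_{f_\k}(x)\diag(x)$, which equals the rank of $J_{f_\k}(x)$. But if the entries of $f_\k$ are linearly dependent over $\RR$, so is every row of its Jacobian, so $\rk(J_{f_\k}(x)) < s$ at every $x$, giving $\rk(Q_f(\k)) < s$ for all $\k \in \Z$. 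Since $\Z$ has nonempty interior whenever condition (i) holds (equivalence (i)$\iff$(iv) in \Cref{thm:fnet}), and $\rk(Q_f(w)) < s$ on the Euclidean-open set $\Z$ forces all $s$-minors of $Q_f$ (polynomials in $w$) to vanish identically on $\ker(C)$, we conclude $\rk(Q_f(w)) < s$ for \emph{all} $w \in \ker(C)$; that is, condition (i) of \Cref{thm:fnet} fails. Hence $f$ is degenerate, and consequently $F$ is degenerate by \Cref{cor:fimpliesf}.

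For the "in particular" clauses: degeneracy of $F$ means condition (iv) of \Cref{thm:Fnet} fails, so $\Z_{\mathrm{cc}}$ has empty Euclidean interior; likewise $\Z$ has empty Euclidean interior since (i)$\iff$(iv) in \Cref{thm:fnet} fails. Finally, since condition (vi) of \Cref{thm:Fnet} is equivalent to (i) and therefore also fails, $\VV_{\CC^*}(F_{\k,b})$ is never both nonempty and finite; as $\ker(C)\cap\RR^m_{>0}\neq\varnothing$ (this is implicit: $\Z \neq \varnothing$ in the hypothesis and \Cref{prop:Jacmatrices}(i)), $\VV_{>0}(F_{\k,b})\neq\varnothing$ for $(\k,b)\in\Z_{\mathrm{cc}}$ by definition of $\Z_{\mathrm{cc}}$, hence $\VV_{\CC^*}(F_{\k,b})$ is nonempty and therefore infinite for all $(\k,b)\in\Z_{\mathrm{cc}}$.

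The main obstacle I anticipate is the bookkeeping in the second paragraph: carefully justifying that the rank of the coefficient matrix of $f_\k$ coincides with $\rk(\Sigma_\k)$ (using $\ker C = \ker N$ and full rank of $C$), and that linear dependence of the \emph{entries} of $f_\k$ over $\RR$ — rather than merely pointwise rank drop of the Jacobian — is what transfers to $Q_f(w)$ vanishing identically. The passage from "$\rk(Q_f(\k)) < s$ for all $\k \in \Z$" to "for all $w \in \ker(C)$" is the only genuinely global step, and it relies essentially on $\Z$ having nonempty interior, which is itself a consequence one must be careful not to invoke circularly — but here it is fine, since if $\Z$ had empty interior then (i) of \Cref{thm:fnet} already fails and we are done immediately.
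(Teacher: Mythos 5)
Your proposal is correct in substance but takes a more roundabout route than the paper, and one step needs repair. Both arguments begin with the same key observation: $S_\k\subsetneq S$ means $\rk(\Sigma_\k)<s$, hence (since $\row(C)=\row(N)$) the entries of $f_\k$ are linearly dependent and $J_{f_\k}(x)$ has rank $<s$ at every point. The paper then stops almost immediately: for generic $\k\in\Z$ \emph{every} zero of $f_\k$ is degenerate, so condition (iii) of \Cref{thm:fnet} fails, $f$ is degenerate, $F$ is degenerate by \Cref{cor:fimpliesf}, and the remaining conclusions follow from \Cref{thm:fnet,thm:Fnet} exactly as in your last paragraph. You instead target condition (i), which obliges you to convert the pointwise rank drop of $J_{f_\k}$ into identical vanishing of the $s$-minors of $Q_f(w)$ on $\ker(C)$ — and this is where your write-up is imprecise: $Q_f$ is a function of $w\in\ker(C)$, not of $\k$, and $\Z$ is a subset of $\RR^m_{>0}$ that is neither contained in $\ker(C)$ nor Euclidean-open in general, so the phrase ``$\rk(Q_f(w))<s$ on the Euclidean-open set $\Z$'' does not typecheck as written. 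The correct bridge is the incidence-variety parametrization: the generic subset of $\Z$ on which $S_\k\subsetneq S$ pulls back under $(w,h)\mapsto w\circ h^M$ to a dense open subset of $(\ker(C)\cap\RR^m_{>0})\times\RR^n_{>0}$, whose projection to the $w$-factor is dense and open in $\ker(C)\cap\RR^m_{>0}$; for each such $w$ one has $\rk(Q_f(w))=\rk\bigl(J_{f_\k}(x)\bigr)<s$ by \Cref{prop:Jacmatrices}(iii), and only then does the polynomial identity on all of $\ker(C)$ follow. With that repair your argument is complete; the moral is that negating (iii) costs nothing once all zeros are known to be degenerate, whereas negating (i) requires this extra density step.
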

\begin{proof}
As $S_{\kappa}\subsetneq S$ for generic $\k\in\Z$, 
all zeros of $f_\k$ are degenerate  for generic $\k\in\Z$. Thus condition (iii) in  \Cref{thm:fnet} does not hold, and $f$, and hence $F$,  are degenerate.   The rest of the conclusions follow from \Cref{thm:fnet,thm:Fnet}.  
\end{proof}

\begin{example}
Consider the following reaction network from \cite[Example 3.A.2]{feinberg2019foundations}
    \begin{align*}
X_2 \ce{<-[\k_1]} & X_1 \ce{->[\k_2]} X_3 &  X_2+X_3 &  \ce{->[\k_3]} 2X_1 
\end{align*}
with mass-action kinetics. This network  has defining matrices 
\[  N  = \begin{bmatrix} -1 & -1 & 2\\1 & 0 & -1 \\ 0 & 1 & -1\end{bmatrix}, \qquad \Sigma_{\k}= \begin{bmatrix} -\k_1-\k_2 & 2\k_3 \\ \k_1 & -\k_3 \\ \k_2 & -\k_3\end{bmatrix}\, \]
and $s=2$. The first row of $\Sigma_\k$ is linearly dependent to the two  bottom rows, and,  if the two bottom rows are linearly independent, that is, $\k_1\neq \k_2$, then $f_\k$ cannot have a zero in $(\CC^*)^2$. We conclude that $\Sigma_\k$ needs to have rank $1$ if $\k\in \Z$ and hence
$S_{\k}\subsetneq S$ for generic $\k\in \Z$. By \Cref{cor:kinetic1}, $f$ is degenerate, and  the network is strongly degenerate. 
\end{example}

Finally, the results from \cite{feinberg-invariant} regarding the stoichiometric and kinetic subspaces allow us to relate degeneracy and deficiency $\delta$. Let $\ell$ and $t$ denote the number of linkage classes and terminal strong linkage classes, respectively. 

\begin{corollary}
Suppose that a reaction network with mass-action kinetics satisfies $t-\ell>\delta$. Then both $f$ and $F$ are degenerate. In particular, the network   is strongly degenerate.
\end{corollary}
\begin{proof}
This follows from \Cref{cor:kinetic1}, as the assumption  implies $S_\k\subsetneq S$ for all $\k\in \RR^m_{>0}$ by statement (ii) in the theorem in   \cite[Section 6]{feinberg-invariant}. 
\end{proof}

\section{Generic absolute concentration robustness}\label{sec:ACR}
{\samepage
In this section, we use our new understanding of the steady state varieties to strengthen previous results on  absolute concentration robustness (ACR) and to clarify its generic behavior. 

A reaction network with a choice of power-law kinetics is said to have ACR for a certain species $X_i$ if it is consistent and, for each $\k\in \RR^m_{>0}$, the concentration $x_i$ of $X_i$  attains a unique value for all positive steady states  $x\in \VV_{>0}(f_\k)$ \cite{shinar-science}. 
The weaker notion of \emph{local} ACR was introduced in \cite{pascualescudero2020local}, and 
refers to $x_i$ attaining only a \emph{finite} number of values for all positive steady states   $x\in\VV_{>0}(f_\k)$ for each $\k\in\RR^m_{>0}$.
}

\begin{example}
\label{ex:IDH}
    Consider the following network with mass-action kinetics from \cite{shinar-science} describing the phosphorylation and dephosphorylation mechanism for the enzyme isocitrate dehydrogenase:
\[ 
		X_1 +X_2   \ce{<=>[\k_1][\k_2]} X_3 \ce{->[\k_3]} X_1 + X_4 \qquad 
             X_3 +X_4   \ce{<=>[\k_4][\k_5]} X_5 \ce{->[\k_6]} X_2 + X_3 \, .
\]
For  $\k\in\RR^6_{>0}$, the positive steady state variety is 
\[\VV_{>0}(f_\k)=\left\{ \left( x_1,x_2,\tfrac{\kappa_1}{\kappa_2+\kappa_3}x_1x_2,\tfrac{\kappa_3}{\kappa_4}\left( 1+\tfrac{\kappa_5}{\kappa_6}\right) ,\tfrac{\kappa _1\kappa _3}{\kappa_6(\kappa _2+\kappa_3)}x_1x_2\right) : x_1,x_2\in \mathbb{R}_{>0}\right\}  \subset \mathbb{R}^5_{>0},\]
from where it is clear that $x_4=\frac{\kappa_3}{\kappa_4}\big( 1+\frac{\kappa_5}{\kappa_6}\big)$ is constant for all $x\in \VV_{>0}(f_\k)$. The network has  ACR for $X_4$, but not for $X_1$, as the steady state value of $x_1$ is not constant. 
\end{example}

\begin{example}
\label{ex:localACR}  
For the network in  \Cref{ex:somedegpars2}, the value of $x_1$ at steady state is a positive root of the polynomial $\kappa_1x_1^2-2\kappa_2x_1+\kappa_3$, and hence attains one or two values whenever $\VV_{>0}(f_\k)\neq \varnothing$. Hence, the network displays local ACR for $X_1$. 
\end{example}

Many works have explored ACR, both from the experimental and the theoretical points of view, see for example \cite{ACR_Alon,ACR_Cap,ACR_AlgGeom,invariantsACR,ACRdim1,pascualescudero2020local, shinar-science}. However, no general procedure allows to detect it in practice, outside of specific contexts or under restrictive hypotheses.
Previous work in the literature has attempted to understand ACR at the level of the ideal generated by $f_\k$, often for fixed values of $\k$. One of the first sources of this is \cite{P-M}, and a more recent account of this approach can be found in  \cite{ACR_AlgGeom}. 
However, understanding the positive part of an algebraic variety for \emph{all} parameter values is 
a challenging problem as pathologies can easily arise for specific values of $\k$ (such as the variety having real dimension lower than expected). Several of these pathologies are discussed in \cite{ACR_AlgGeom}. 

As illustrated by \Cref{thm:fnet} and \Cref{thm:Fnet}, the behavior  of the positive steady state variety for \emph{generic} parameter values is nonpathological and resembles that of the complex variety. This leads us to introduce the concepts of 
 \emph{generic ACR} and \emph{generic local ACR},  and we show that these properties can be characterized in a more satisfactory way. 

In what follows, we let $\pi_i \colon \CC^n \rightarrow \CC$ denote the projection  onto  the $x_i$-coordinate.

\begin{defi}
Consider a  weakly nondegenerate reaction network with steady state system~$f$.
For a fixed $\k\in \Z$, we say that $\VV_{>0}(f_\k)$ has 
\begin{itemize}
\item  \term{local ACR} for $X_i$ if 
 $\# \pi_i(\VV_{>0}(f_\k))<\infty $, 
\item \term{ACR} for $X_i$ if 
 $\# \pi_i(\VV_{>0}(f_\k))=1$. 
\end{itemize}
We furthermore say that the network has 
\begin{itemize}
\item  \term{generic local ACR} for $X_i$ if 
 $\# \pi_i(\VV_{>0}(f_\k))<\infty $ for generic $\k\in \Z$,
\item \term{generic ACR} for $X_i$ if 
 $\# \pi_i(\VV_{>0}(f_\k))=1$ for generic $\k\in \Z$. 
\end{itemize}
If the properties hold for all $\k\in \Z$, then we say that the network has local ACR or ACR for $X_i$ respectively. 
\end{defi}

The following theorem establishes that
three criteria each completely characterize generic (local) ACR: 
a linear algebra  condition arising from \cite[Theorem~5.3]{pascualescudero2020local}, a property of the 
 elimination ideals over the coefficient field $\CC(\k)$ in the spirit of \cite[Proposition~3.8]{ACR_AlgGeom}, and the complex counterpart of the definition of (local) ACR.

{\samepage
\begin{theorem} 
\label{thm:localACR}\label{thm:localACR_poly}
Consider a reaction network with stoichiometric matrix $N\in \ZZ^{n\times m}$ and kinetic matrix $M\in \ZZ^{n\times m}$. Let $f=(f_1,\ldots,f_s)$ be the steady state system \eqref{eq:f} with  
$C\in \RR^{s\times m}$ of full rank $s$ and $\ker(N)=\ker(C)$.  Suppose that the network is weakly nondegenerate (i.e., that $f$ is nondegenerate). 
Given $i\in \{1,\dots,n\}$, the following are equivalent:
\begin{enumerate}[label=(\roman*)]
    \item   The network has generic local ACR for $X_i$.
     \item $\rk( (N\diag(w)M^\top)_{\setminus i}) < s$ for all $w \in \ker(N)$,  where {\footnotesize ${\setminus i}$} indicates removal of the $i$-th column.
     \item $\pi_i(\VV_{\CC^*}(f_\k))$ is a finite set for generic $\k\in \CC^m$.
\item For $I:=\langle f_1,\dots,f_s\rangle \subseteq \CC(\k)[x^\pm]$, the elimination ideal $I\cap \CC(\k)[x_i^{\pm}]$ is generated by one   nonconstant  polynomial. 
 \end{enumerate}
Furthermore:
\begin{itemize}
\item If the network does not have generic local ACR for $X_i$, then, for generic $\k\in \Z$,  $\VV_{>0}(f_\k)$ has no (local) ACR for $X_i$. 
\item If the network has generic local ACR for $X_i$
and for a specific $\k\in \Z$ every irreducible component of $\VV_{\CC^*}(f_\k)$ that intersects $\RR^n_{>0}$  contains a nondegenerate zero of $f_{\k}$, then $\VV_{>0}(f_\k)$ has local ACR for $X_i$. In particular, the network has local ACR for $X_i$ if $\rk(C\diag(v)M^\top)=s$ for \emph{all} $v\in\ker(C)\cap \RR^m_{>0}$. 
\end{itemize}
\end{theorem}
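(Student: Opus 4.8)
The plan is to reduce everything to the theory of vertically parametrized systems from \cite{FeliuHenrikssonPascual2023}, exploiting the parametrization $\phi$ in \eqref{eq:parametrization_of_incidence_variety} and the Jacobian identities in \Cref{prop:Jacmatrices}. The core observation is that, because $f$ is nondegenerate, \Cref{thm:fnet} guarantees that $\VV_{\CC^*}(f_\k)$ has pure dimension $n-s$ for generic $\k$ and that generically all zeros are nondegenerate; this puts us in a situation where ``local ACR for $X_i$'' is literally the statement that the image of a pure-dimensional variety under the coordinate projection $\pi_i$ is $0$-dimensional, i.e.\ the fibers of $\pi_i$ restricted to $\VV_{\CC^*}(f_\k)$ are $(n-s)$-dimensional. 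This is a constructible/semicontinuity phenomenon, so it transfers between the complex setting and the generic real positive setting, and between ``for generic $\k$'' and ``over the field $\CC(\k)$''. Concretely, I would first establish $(i)\Leftrightarrow(iii)$: local ACR over $\RR_{>0}$ for generic $\k$ is equivalent to $\pi_i(\VV_{\CC^*}(f_\k))$ being finite for generic $\k$, because for generic $\k$ the positive variety is Zariski dense in a union of irreducible components of the complex variety (using nondegeneracy and the fact that $\VV_{>0}(f_\k)=\VV_{\CC^*}(f_\k)\cap\RR^n_{>0}$), and the Zariski closure of the projection of a dense subset equals the closure of the projection.

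Next I would handle $(iii)\Leftrightarrow(iv)$ by the standard dictionary between generic fibers of a family and the scalar extension to the function field: $I\cap\CC(\k)[x_i^\pm]$ being generated by a single nonconstant polynomial is precisely the condition that $\pi_i$ restricted to the generic fiber of the incidence variety over $\k$-space has finite (and nonempty, by nondegeneracy, since $\Z$ has nonempty interior) image. Here one must be a little careful that ``generated by one nonconstant polynomial'' is the right way of saying ``the image is a proper nonempty finite set'' rather than ``the image is everything'' (the zero ideal) or ``the image is empty'' — nondegeneracy rules out the empty case, and finiteness rules out the full case. Then for $(ii)\Leftrightarrow(iii)$, I would use the Jacobian description: by \Cref{prop:Jacmatrices}(iii), at a point $\phi(w,h)$ the Jacobian of $f_\k$ is $C\diag(w)M^\top\diag(h)$, and the rank of the submatrix obtained by deleting the $i$-th column controls whether the tangent space to $\VV_{\CC^*}(f_\k)$ at that point surjects onto the $x_i$-line — equivalently, whether $\pi_i$ is dominant on the component through that point. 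Since $\diag(h)$ is invertible, deleting the $i$-th column of $C\diag(w)M^\top\diag(h)$ and of $C\diag(w)M^\top=(N\diag(w)M^\top$ up to the row operations relating $C$ and $N)$ give the same rank, which is why condition $(ii)$ is phrased purely in terms of $N$, $M$, and $w\in\ker(N)$; the bound $s-1$ appears because $\pi_i$ fails to be dominant on a component exactly when the Jacobian of $f_\k$ together with the covector $e_i^\top$ still has rank $\le s-1+1=s$ is \emph{not} achieved, i.e.\ $\rk((N\diag(w)M^\top)_{\setminus i})\le s-2$ would be too strong — I'll need to get this index bookkeeping exactly right, and I expect this to be the main technical obstacle.

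Finally, for the two ``Furthermore'' bullets: the first follows because if $(ii)$ fails then $\pi_i$ is dominant on some component of $\VV_{\CC^*}(f_\k)$ for generic $\k$, and combined with nondegeneracy that component meets $\RR^n_{>0}$ in a set on which $\pi_i$ has infinite image, so no (local) ACR holds generically. For the second bullet, the hypothesis that every irreducible component of $\VV_{\CC^*}(f_\k)$ meeting $\RR^n_{>0}$ contains a nondegenerate zero of $f_\k$ means, via \Cref{prop:Jacmatrices}(iii) again, that each such component has the generic dimension $n-s$ and is not ``extra''; then condition $(ii)$ — which is a closed condition on $w$ that we know holds for all $w\in\ker(N)$ once $(i)$ holds — forces $\pi_i$ to be non-dominant on each of these components, hence the image $\pi_i(\VV_{>0}(f_\k))$ is finite, i.e.\ genuine (not merely generic) local ACR. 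The parenthetical special case is then immediate: if $\rk(C\diag(v)M^\top)=s$ for \emph{all} $v\in\ker(C)\cap\RR^m_{>0}$, then by \Cref{prop:Jacmatrices}(iii) every positive steady state is a nondegenerate zero of $f_\k$, so every component through a positive point contains a nondegenerate zero and the previous bullet applies. The main obstacle throughout is making the rank–dimension–dominance correspondence precise with the correct numerical threshold and carefully tracking that the passage $C\leftrightarrow N$ and the invertible diagonal twist $\diag(h)$ do not shift any of these ranks; once that lemma is nailed down, the rest is assembling the equivalences.
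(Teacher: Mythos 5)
Your plan for (iii)$\Leftrightarrow$(iv), for the Jacobian/rank correspondence behind (ii), and for the two ``Furthermore'' bullets matches the paper's proof in all essentials: the closure theorem plus Gr\"obner-basis specialization for the elimination ideal; the identity $J_{f_\k}(x)=C\diag(w)M^\top\diag(h)$ together with the observation that neither the invertible twist by $\diag(h)$ nor replacing $C$ by $N$ (same row space) changes the rank after deleting the $i$-th column; and the pointwise local-ACR criterion of \cite{pascualescudero2020local} for the second bullet. Your bookkeeping instinct is also correct: appending the row $e_i$ to a matrix $A$ gives a matrix of rank $\rk(A_{\setminus i})+1$, so the failure of the rank condition for the augmented system $(f,x_i-c)$ is exactly $\rk((N\diag(w)M^\top)_{\setminus i})\le s-1$; this matches \Cref{thmA:acr_results} but not the literal ``$<s-1$'' in \Cref{thm:localACR}(ii), which appears to be a typo in the statement.

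The genuine gap is in your proposed direct proof of (i)$\Rightarrow$(iii). Knowing that $\pi_i(\VV_{>0}(f_\k))$ is finite for generic $\k\in\Z$, and that $\VV_{>0}(f_\k)$ is Zariski dense in the union of irreducible components of $\VV_{\CC^*}(f_\k)$ that it meets, only controls $\pi_i$ on \emph{those} components; it says nothing about components of $\VV_{\CC^*}(f_\k)$ disjoint from $\RR^n_{>0}$, on which $\pi_i$ could a priori be dominant, and (iii) quantifies over all of $\VV_{\CC^*}(f_\k)$. The paper closes this by routing (i)$\Rightarrow$(ii)$\Rightarrow$(iii): from (i) one first deduces the rank bound at the pairs $(w,h)\in(\ker(N)\cap\RR^m_{>0})\times\RR^n_{>0}$ lying over generic $\k\in\Z$; since the condition $\rk((N\diag(w)M^\top)_{\setminus i})\le s-1$ is Zariski closed in $w$ and $\ker(N)\cap\RR^m_{>0}$ is Zariski dense in the linear space $\ker(N)$, it then holds for \emph{all} $w\in\ker(N)$, and the parametrization $\phi$ of the irreducible incidence variety transports this to every complex zero for every $\k$, including the components invisible from the positive orthant. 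You need this detour (or the paper's equivalent argument via the auxiliary augmented system $H=(f,\,x_i-c)$ and the fiber-dimension theorem applied to $\overline{\Z_H}\to\CC^m$); the fiberwise density argument alone does not suffice. A smaller issue of the same flavor: in the first bullet you assert that a component on which $\pi_i$ is dominant ``meets $\RR^n_{>0}$ in a set on which $\pi_i$ has infinite image''; to conclude that the \emph{positive} image is infinite for generic $\k\in\Z$ one needs a semialgebraic dimension count (the positive part of $\Z_H$ is a semialgebraic set Zariski dense in an $(m+1)$-dimensional variety, hence has semialgebraic dimension $m+1$, hence has infinite generic fibers over $\RR^m$), not merely dominance of $\pi_i$ over $\CC$.
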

}

The proof of \Cref{thm:localACR} is given in \Cref{proof:localACR}, as it relies on a general result on augmented vertically parametrized systems. We emphasize first, that condition (ii) of \Cref{thm:localACR} can easily be rejected by taking one random choice of parameter values as in Subsection~\ref{subsec:computational}, and second, that conditions (iii) and (iv) of \Cref{thm:localACR}  refer to  the complex variety $\VV_{\CC^*}(f_\k)$; hence generic local ACR cannot occur if the equivalent property does not arise over $\CC^*$. 
The following corollary is a  direct consequence of \Cref{thm:localACR}.
 
\begin{corollary}
\label{cor:necessary_ACR}
Consider a  weakly nondegenerate reaction network with steady state system~$f$. 
If the set of $\k\in\RR^m_{>0}$ for which $\VV_{>0}(f_\k)$ has local ACR for $X_i$ has nonempty Euclidean interior, then the network has generic local ACR for $X_i$, and hence any of the equivalent statements (ii)-(iv) in \Cref{thm:localACR}  holds. 
    \end{corollary}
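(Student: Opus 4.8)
The plan is to argue by contraposition, relying on the first bullet of the ``Furthermore'' part of \Cref{thm:localACR}. Write $U:=\{\k\in\RR^m_{>0} : \VV_{>0}(f_\k)\text{ has local ACR for }X_i\}$ for the set in the hypothesis; by the definition of local ACR this is automatically a subset of $\Z$. Suppose, for contradiction, that the network does \emph{not} have generic local ACR for $X_i$. Then \Cref{thm:localACR} tells us that for generic $\k\in\Z$ the variety $\VV_{>0}(f_\k)$ has no (local) ACR for $X_i$; unwinding the definition of ``generic in $\Z$'' from the paper, this means there is a nonempty Zariski-open set $V\subseteq\RR^m$ with $V\cap\Z\neq\varnothing$ such that no $\k\in V\cap\Z$ has local ACR for $X_i$.

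Consequently $U\cap V=\varnothing$, so $U\subseteq\RR^m\setminus V$. Since $V$ is a nonempty Zariski-open subset of $\RR^m$, its complement $\RR^m\setminus V$ is a proper algebraic subset, hence is contained in a hypersurface and in particular has empty Euclidean interior. Therefore $U$ has empty Euclidean interior, contradicting the hypothesis. This establishes the contrapositive: if $U$ has nonempty Euclidean interior, the network has generic local ACR for $X_i$.

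Finally, once generic local ACR for $X_i$ is known to hold, the equivalence of statements (i)--(iv) in \Cref{thm:localACR} immediately yields that (ii), (iii), and (iv) all hold as well. I do not anticipate a genuine obstacle here; the only point requiring care is the bookkeeping around the notion of ``generic in the set $\Z$'' (as opposed to in $\RR^m$), and the elementary fact that a proper real algebraic subset of $\RR^m$ has empty Euclidean interior.
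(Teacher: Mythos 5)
Your proof is correct and is precisely the intended argument: the paper gives no proof beyond calling the corollary ``a direct consequence'' of \Cref{thm:localACR}, and your contrapositive via the first bullet of the ``Furthermore'' part --- unwinding ``generic in $\Z$'' to containment of $U$ in a proper real algebraic subset of $\RR^m$, which has empty Euclidean interior --- is exactly the natural fill-in. Your side remark that $U\subseteq\Z$ is also the right reading, since the paper defines (local) ACR only for $\k\in\Z$.
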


\begin{remark}
\label{rem:saturation}
Suppose $M\in \ZZ_{\geq 0}^{n\times m}$. In this case, we can replace $I=\langle f_1,\dots,f_s\rangle \subseteq \CC(\k)[x^\pm]$ in \Cref{thm:localACR_poly}(iv) with the saturation ideal 
\[\langle f_1,\dots,f_s\rangle: (x_1\cdots x_n)^\infty  \subseteq \CC(\k)[x]\, \]
as this ideal equals $I  \cap  \CC(\k)[x]$. 

In \cite[Proposition~3.8]{ACR_AlgGeom}, the ideal  $\langle f_{\k,1},\dots,f_{\k,s}\rangle \cap \CC[x_i] \subseteq \CC[x_i]$ is studied for fixed values of $\k\in\RR^m_{>0}$, as a sufficient condition for $\VV_{>0}(f_\k)$ to have ACR.  It is also given as a sufficient condition that  the elimination ideal of the saturated ideal is generated by a polynomial of the form $x_i-\alpha$ for some $\alpha$. We settle here that the study of the generator of the elimination ideal after saturating $\langle f_{\k,1},\dots,f_{\k,s}\rangle$  completely characterizes whether (local) ACR arises generically. 
\end{remark}

\begin{corollary}
\label{cor:ACRgen}
Consider a  weakly nondegenerate reaction network with steady state system~$f$. The following statements are equivalent:
\begin{enumerate}[label=(\roman*)]
\item
The network has generic  ACR for $X_i$.
\item For $I:=\langle f_1,\dots,f_s\rangle \subseteq \CC(\k)[x^\pm]$, the elimination ideal $I\cap \CC(\k)[x_i^{\pm}]$ is generated by one polynomial that has exactly one positive root for generic $\k\in \Z$. 
\end{enumerate}
\end{corollary}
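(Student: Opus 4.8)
The plan is to deduce this directly from \Cref{thm:localACR} together with the definition of generic ACR. First I would observe that generic ACR for $X_i$ implies generic local ACR for $X_i$, so by the equivalence of (i) and (iv) in \Cref{thm:localACR}, the elimination ideal $I \cap \CC(\k)[x_i^\pm]$ is generated by a single nonconstant polynomial $g \in \CC(\k)[x_i^\pm]$; clearing denominators we may take $g \in \CC(\k)[x_i]$ nonconstant. So in both directions we are in the situation where $g$ exists, and the content of the corollary is the extra numerical condition ``exactly one positive root for generic $\k$''.

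The heart of the argument is to relate the positive roots of $g$ to the set $\pi_i(\VV_{>0}(f_\k))$. First I would show that for generic $\k \in \Z$, the set $\pi_i(\VV_{\CC^*}(f_\k))$ coincides with the zero set $\VV_{\CC^*}(g_\k)$ of the specialized polynomial $g_\k$ (up to possibly removing the origin, which is irrelevant here), using that specialization of the Gröbner basis / elimination ideal is compatible with evaluation at generic $\k$ — this is the standard ``generic freeness''/flatness argument and is exactly what underlies the proof of \Cref{thm:localACR}, which I would simply invoke. Then, since $f$ is nondegenerate, \Cref{thm:fnet} guarantees that for generic $\k \in \Z$ the real and positive varieties have the expected dimension and, crucially, that $\VV_{>0}(f_\k)$ is Zariski dense in $\VV_{\CC^*}(f_\k)$ along the components it meets; combined with the fact that $\pi_i$ is a dominant-on-each-component map when local ACR holds (each component maps to a point, and real points surject onto the real points of the image), I would conclude that $\pi_i(\VV_{>0}(f_\k))$ is precisely the set of positive real roots of $g_\k$. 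Granting this, (i) $\iff$ the number of positive roots of $g_\k$ is generically $1$, which is (ii).

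The main obstacle, and the step requiring the most care, is the claim that $\pi_i(\VV_{>0}(f_\k))$ equals the positive roots of $g_\k$ rather than merely being contained in it: one must rule out that a positive root $\alpha$ of $g_\k$ fails to be the $x_i$-coordinate of any \emph{positive real} steady state (it could a priori only be hit by complex or by non-positive real points of $\VV_{\CC^*}(f_\k)$). This is handled by the nondegeneracy hypothesis via \Cref{thm:fnet}: for generic $\k$, every irreducible component of $\VV_{\CC^*}(f_\k)$ meeting $\RR^n_{>0}$ has a smooth (nondegenerate) positive point, hence $\VV_{>0}(f_\k)$ is Euclidean-dense in the real locus of that component, and each such component has constant $x_i$-value equal to one of the roots of $g_\k$ lying in $\RR_{>0}$; conversely every root of $g_\k$ comes from such a component because $g$ generates the elimination ideal. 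A clean way to package this is to note that ``generic ACR for $X_i$'' is by definition $\#\pi_i(\VV_{>0}(f_\k)) = 1$ for generic $\k \in \Z$, and to run the same density argument already used in \Cref{proof:localACR} for the finite case, now tracking the exact cardinality. I would close by remarking that the generator $g$ in (ii) is unique up to a unit in $\CC(\k)$, so the condition ``exactly one positive root'' is well defined independently of the choice.
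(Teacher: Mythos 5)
Your route is the same as the paper's: the paper's entire proof of \Cref{cor:ACRgen} is ``Immediate from \Cref{thm:localACR_poly}'', and you are filling in that deduction via the equivalence (i)$\Leftrightarrow$(iv) there, together with the identification of $\pi_i(\VV_{\CC^*}(f_\k))$ with the zero set of the specialized generator $g_\k$ for generic $\k$. The direction (ii)$\Rightarrow$(i) is fine as you argue it: for generic $\k\in \Z$ the Gr\"obner basis of $I$ specializes, so $g_\k$ vanishes on $\VV_{>0}(f_\k)$, whence $\varnothing\neq\pi_i(\VV_{>0}(f_\k))\subseteq\{\text{positive roots of }g_\k\}$, and a unique positive root forces ACR.

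The gap is in the converse, and it sits exactly where you write that ``every root of $g_\k$ comes from such a component because $g$ generates the elimination ideal.'' Generating the elimination ideal only tells you that each root of $g_\k$ is the constant $x_i$-value of some irreducible component of the \emph{complex} variety $\VV_{\CC^*}(f_\k)$; it does not tell you that a component whose $x_i$-value happens to lie in $\RR_{>0}$ actually meets $\RR^n_{>0}$. Your appeal to \Cref{thm:fnet} gives Euclidean density of $\VV_{>0}(f_\k)$ in the real locus of those components that \emph{already} intersect the positive orthant, but it says nothing about components that do not --- for instance, a component could have no real points at all while $x_i\equiv\alpha$ on it for some $\alpha>0$. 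Hence the inclusion $\{\text{positive roots of }g_\k\}\subseteq\pi_i(\VV_{>0}(f_\k))$, which is precisely what (i)$\Rightarrow$(ii) requires, is not established by your argument. To be fair, the paper's one-line proof does not supply this step either, and the introduction's \Cref{thmA:acr_results} only asserts the sufficient direction (``if $g$ has a single positive root \dots\ the network has generic ACR''); but as written, your justification of the crux does not close it, and you would need an additional argument ruling out, for generic $\k\in\Z$, components of $\VV_{\CC^*}(f_\k)$ with constant positive $x_i$-value and empty intersection with $\RR^n_{>0}$.
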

\begin{proof}
Immediate from \Cref{thm:localACR_poly}. 
\end{proof}

Combining \Cref{thm:localACR}, \Cref{cor:ACRgen} and \Cref{rem:saturation}, we easily obtain \Cref{thmA:acr_results} from the introduction.

\begin{example} 
In \Cref{ex:localACR}, it holds that 
\[\big\{C\diag(w)M^\top: w\in \ker(C)\big\}=\left\{\begin{bmatrix} -2w_1+2w_2 &  0 \end{bmatrix}:  w_1,w_2\in \CC\right\}.\] 
We see that condition (i) in \Cref{thm:fnet} holds, so $f$ is nondegenerate. 
Removal of the first column gives a matrix of rank $0$. Hence, 
condition (ii) in \Cref{thm:localACR} is satisfied, and we conclude that the network  has generic local ACR for $X_1$. 
\end{example}

\begin{remark}
Clearly, having generic ACR is necessary for a weakly nondegenerate network to have ACR. 
\Cref{thm:localACR} gives also that lack of generic local ACR implies that generically, there is no ACR. 
We illustrate here that other relations between these concepts might not hold. 

We exploit an easy source of examples for ACR, which arise from networks with full rank $s=n$. When the number of positive steady states is finite,  uniqueness of steady states readily gives ACR. 
Although these are not the interesting cases in applications, they allow us to understand what phenomena are not to be expected. 
In particular:

\smallskip
\noindent
(1) \emph{Generic (local) ACR does not imply (local) ACR}:  The steady state system
\[f=(-\k_1 x_1 x_2 + \k_2 x_2^2  + \k_3x_2, \k_1 x_1 x_2 - \k_4 x_2^2 - \k_5 x_2), \]
which can be seen to arise from a network with mass-action kinetics, has  a solvability locus $\Z$ with nonempty Euclidean interior, and for generic $\k\in \Z$, there is precisely one positive zero, namely $\big(\tfrac{\k_2\k_5 - \k_3 \k_4}{\k_1(\k_2-\k_4)},\tfrac{\k_5-\k_3}{\k_2-\k_4}\big)$.  Thus, the network has generic ACR for $X_1$ and $X_2$. However, when $\k_2=\k_4$ and $\k_3=\k_5$, the zero set consists of the points 
$(\tfrac{\k_2x_2+\k_3}{\k_1},x_2)$ for $x_2>0$, and hence there is no ACR (nor local ACR).

\smallskip
\noindent
(2)  \emph{Absence of generic  ACR does not imply that, generically, there is no ACR}: 
The network with one species
\[ 3X_1 \ce{->[\k_1]} 2X_1 \qquad 2X_1 \ce{->[\k_2]} 3X_1 \qquad  X_1 \ce{->[\k_3]} 0  \qquad  0 \ce{->[\k_4]} X_1 \]
has steady state system $f= - \k_1x_1^3 + \k_2x_1^2 - \k_3x_1 + \k_4$, which is an arbitrary degree three polynomial with coefficients of fixed and alternating sign. 
Hence,  the solvability locus  $\Z$ has nonempty Euclidean interior. Furthermore, in a nonempty Euclidean open subset of $\Z$, $f$ has exactly one positive root, so there is ACR for $X_1$. However, also in a  nonempty Euclidean open subset of $\Z$, $f$ has three positive roots, and hence there is no ACR. As both presence and absence of ACR occur in  nonempty Euclidean open sets, none of the properties arise generically for this network. 
 
\smallskip
\noindent
(3)  \emph{Absence of generic local ACR does not preclude local ACR for a specific $\k$}:  The network with two species
\begin{align*}
3X_1 + X_2  & \ce{->[\k_1]} 4X_1   & 2X_1+X_2  & \ce{->[\k_2]} X_1+2X_2    \\
X_1+3X_2 & \ce{->[\k_3]} 2X_1+2X_2 &    X_1+2X_2 & \ce{->[\k_4]} 3X_2 & X_1+X_2 \ce{->[\k_5]} 2X_1  
\end{align*}
 has $s=1$ and the steady state system is 
 \[f=x_1x_2(\k_1 x_1^2-\k_2 x_1 + \k_3 x_2^2 - \k_4 x_2  + \k_5) \, .
 \]
 For generic $\k$ in  the solvability locus  $\Z$, the zero set is a nonlinear curve,  and hence, there is no generic local ACR. However, for $\k=(1,2,1,2,2)$, we have $f_\k=(x_1-1)^2 + (x_2-1)^2$,  $\VV_{>0}(f_\k)$ consists of one (degenerate) point, and has trivially ACR for $X_1$ and $X_2$.  
 \end{remark}

The Zenodo repository of this paper (cf.~\Cref{subsec:computational}) also includes code for checking the rank condition in \Cref{thm:localACR}. When applying the condition to the networks in the database ODEbase taken with mass-action kinetics, we found 48 consistent nondegenerate networks that have generic local ACR but not full rank (which would trivially imply generic local ACR). For this check, we excluded species that do not appear in the reactant or product of any of the reactions. Only one of the $48$ networks was originally modeled with mass-action kinetics, which implies the remarkable fact that 
only one of the $72$ networks reported with mass-action kinetics has generic (nontrivial) local ACR. This network is studied in more detail next in \Cref{ex:auxin}.

\begin{example}
\label{ex:auxin}
The network \texttt{BIOMD0000000413} from \cite{auxin} offers a simple model of the degradation of the {\small (DII-)VENUS} reporter in response to the plant hormone auxin in a certain experimental setup, c.f. \cite[Fig. 2(B-C)]{auxin}. Specifically, the network is 
\begin{align*}
 X_1+ X_2  & \ce{<=>[\k_1][\k_2]} X_3 & 
X_3 + X_4 & \ce{<=>[\k_3][\k_4]} X_5  \ce{->[\k_5]} X_3 & 0    &  \ce{<=>[\k_6][\k_7]} X_1 & 
0  & \ce{<=>[\k_8][\k_9]} X_4\, ,
\end{align*}
where $X_1={\rm auxin}$, $X_2={\rm TIR1}$, $X_3={\rm auxinTIR1}$,  $X_4={\rm VENUS}$, $X_5={\rm auxinTIR1VENUS}$, and the kinetics is mass action. 
The network is nondegenerate and the condition (ii) in \Cref{thm:localACR} holds for $i=1$, which allows us to easily conclude that the network has generic local ACR for $X_1$. 
The polynomial $\k_7 x_1 - \k_6$ generates the elimination ideal $I\cap \CC(\k)[x_1^\pm]$ in \Cref{thm:localACR}(iv), and hence the network displays generic ACR for $X_1$. 

An easy check verifies that $\k_7 x_1 - \k_6$ is the sum of the first and second entries of $N (\k \circ x^M)$, and hence its specialization belongs to $I\cap \CC[x_1^\pm]$ for all $\k$. In particular, the network displays ACR for $X_1$. 

We remark that this network has deficiency $1$, but the species $X_1$ does not fulfill the sufficient conditions for ACR from \cite{shinar-science}. ACR can though be seen to arise  from the addition of  inflow and outflow reactions ($0 \ce{<=>} X_1$) to a network for which $X_1$ appears in a conservation law ($x_1+x_2- b$). 
\end{example}
 
\begin{example}
Network \texttt{BIOMD0000000167} from \cite{auxin}, modeled with mass-action kinetics for all reactions, including those originally modeled with Michaelis-Menten kinetics, satisfies the conditions for generic local ACR in $X_9$. In fact, we have generic ACR, since 
\[ \k_2 \k_4^2 \k_6^2 \k_7 \k_{10} \big( \k_{11}^2 + 2 \k_{11} \k_{13} + \k_{13}^2 \big)x_9^2\\
- \k_1 \k_3^2 \k_5^2 \k_8 \k_9 \big(\k_{12}^2 + 2 \k_{12} \k_{14} + \k_{14}^2\big)\in \langle f\rangle\cap \CC(\k)[x_9^{\pm}]\, , \] 
which clearly has a unique positive root for all $\k\in\RR^{14}_{>0}$.
\end{example}

\section{Nondegenerate multistationarity}\label{sec:multi}

A reaction network is said to admit \term{multistationarity} if there exists $(\k,b)\in\RR^m_{>0}\times\RR^d$ such that 
\[2 \leq \#\VV_{>0}(F_{\k,b})\, .\]
We say that the network admits \term{nondegenerate multistationarity} if in addition $\VV_{>0}(F_{\k,b})$ contains two  nondegenerate positive steady states.
The following theorem, which is strengthening of  \cite[Theorem~1]{Conradi2017identifying}, shows that, for a special type of networks, multistationarity and nondegenerate multistationarity go hand in hand. 
The concept of \term{dissipative} networks, which generalizes conservative networks, refers to 
networks for which, given $\k$ and $b$, there exists a compact set $K$ such that the trajectories of \eqref{eq:ode} in $\P_b$ eventually remain in $K$; see  \cite{Conradi2017identifying} for details. Recall the matrix $Q_F(w,h)$ given in \eqref{eq:M_f_RN}. 

\begin{theorem}
\label{thm:multidegree}
Consider a reaction network with kinetic matrix $M\in \ZZ_{\geq 0}^{n\times m}$. Assume that the network lacks relevant boundary steady states and is dissipative, and let  $F$ be the  augmented steady state system. 
Assume that  $\det(Q_F(w,h))$  attains both positive and negative signs  for $(w,h)\in  (\ker(N) \cap \RR^m_{>0})\times \RR^n_{>0}$. Then the network admits at least three nondegenerate positive steady states for some choice of parameters. 
\end{theorem}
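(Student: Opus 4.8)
The plan is to combine the finiteness–nondegeneracy dichotomy of \Cref{thm:Fnet} with a Brouwer‑degree argument in the spirit of \cite{Conradi2017identifying}, the new ingredient being that \Cref{thm:Fnet} allows one to upgrade ordinary steady states to nondegenerate ones. Fix the matrices $C$ and $L$ used to build $F$. First I would observe that the hypothesis that $\det(Q_F(w,h))$ attains both signs forces it to be not identically zero, so any $(w,h)\in(\ker(N)\cap\RR^m_{>0})\times\RR^n_{>0}$ with $\det(Q_F(w,h))\neq 0$ certifies condition~(i) of \Cref{thm:Fnet}; since moreover $\ker(C)\cap\RR^m_{>0}=\ker(N)\cap\RR^m_{>0}\neq\varnothing$, the system $F$ is nondegenerate. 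By \Cref{thm:Fnet} there is then a nonempty Zariski open set $G\subseteq\RR^{m+d}$ such that $\VV_{>0}(F_{\k,b})$ is finite, nonempty, and consists only of nondegenerate steady states for every $(\k,b)\in G\cap\Z_{\rm cc}$.

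The second ingredient, which I would cite from \cite{Conradi2017identifying}, is a degree computation: because the network is dissipative and lacks relevant boundary steady states, for each $(\k,b)$ with $\P_b\cap\RR^n_{>0}\neq\varnothing$ one can choose a bounded open set $U\subseteq\RR^n_{>0}$ containing $\VV_{>0}(F_{\k,b})$ with no zero of $F_{\k,b}$ on $\partial U$, and the Brouwer degree of $F_{\k,b}$ on $U$ equals a fixed nonzero integer $\epsilon$ independent of $(\k,b)$ (in the standard normalization, $\epsilon=(-1)^s$). Consequently, whenever in addition every point of $\VV_{>0}(F_{\k,b})$ is a nondegenerate zero, summing local indices yields
\[
\sum_{x\in\VV_{>0}(F_{\k,b})}\sgn\!\big(\det J_{F_{\k,b}}(x)\big)=\epsilon,
\]
and by \Cref{prop:Jacmatrices}(iii) the index of the steady state corresponding to $(w,h)$ under $\phi$ is exactly $\sgn(\det Q_F(w,h))$.

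With these in hand, I would use the sign change as follows. Choose $(w^\ast,h^\ast)\in(\ker(N)\cap\RR^m_{>0})\times\RR^n_{>0}$ with $\delta:=\sgn(\det Q_F(w^\ast,h^\ast))=-\sgn(\epsilon)$, set $(\k^\ast,b^\ast,x^\ast):=\phi(w^\ast,h^\ast)$, and note that $x^\ast\in\RR^n_{>0}$ is a nondegenerate zero of the square system $F_{\k^\ast,b^\ast}$. The implicit function theorem then produces a Euclidean neighborhood $\mathcal U$ of $(\k^\ast,b^\ast)$ and a continuous branch $(\k,b)\mapsto x(\k,b)\in\RR^n_{>0}$ of positive steady states through $x^\ast$ along which $\sgn(\det J_{F_{\k,b}}(x(\k,b)))$ stays equal to $\delta$; in particular $\mathcal U\subseteq\Z_{\rm cc}$, so $\mathcal U\cap G$ is nonempty and open. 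Picking $(\k,b)\in\mathcal U\cap G$, the set $\VV_{>0}(F_{\k,b})$ is finite, all its points are nondegenerate steady states, it contains $x(\k,b)$ whose index is $\delta$, and the displayed identity applies; hence the remaining points have indices summing to $\epsilon-\delta$, a quantity of absolute value $|\epsilon|+1\geq 2$, so there are at least two of them. Therefore $\#\VV_{>0}(F_{\k,b})\geq 3$, with all of these steady states nondegenerate and lying in the single compatibility class $\P_b$, as required.

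I expect the main obstacle to be the degree input rather than the combinatorics: one must verify that the Brouwer degree of $F_{\k,b}$ on an appropriate region is a nonzero constant independent of $(\k,b)$. This is exactly where dissipativity enters — to confine all positive steady states to a compact set and to keep a suitable bounding hypersurface zero‑free as $(\k,b)$ varies — and where the absence of relevant boundary steady states enters, so that the only zeros of $F_{\k,b}$ in the closure of the region are the positive steady states being counted; granting this, the rest is the perturbation to a regular value above together with \Cref{thm:Fnet}.
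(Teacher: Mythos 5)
Your proposal is correct and follows essentially the same route as the paper: establish nondegeneracy of $F$ from the sign hypothesis, use the implicit function theorem to carry a positive steady state of the ``wrong'' sign into the Zariski-open set where \Cref{thm:Fnet} guarantees all positive steady states are nondegenerate, and then conclude via the degree-theoretic content of \cite[Theorem~1]{Conradi2017identifying}. The only cosmetic difference is that you unpack that citation into an explicit Brouwer-degree index sum, whereas the paper invokes it in its ``odd number of nondegenerate steady states'' formulation; the underlying argument is the same.
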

\begin{proof}
By \cite[Theorem~1]{Conradi2017identifying}, there exists  $\varepsilon \in \{\pm 1\}$ (depending on $\rk(N)$ and a choice of a specific order of the rows of $J_{F_{\k,b}}(x)$) such that  the network admits multistationarity if 
\begin{equation}\label{eq:sign}
 \text{sign}(\det(J_{F_{\k^*,Lx^*}}(x^*))=\varepsilon 
\end{equation}
  for some $\k^*\in \RR^m_{>0}$  and $x^*\in \VV_{>0}(f_\k)$. More specifically, \cite[Theorem~1]{Conradi2017identifying} states that if \eqref{eq:sign} holds, then the network has at least  two positive  steady states for this  $\k^*$ and $b=L x^*$, and if  all positive steady states are nondegenerate, then there is an odd number of them. 

By \Cref{prop:Jacmatrices} and hypothesis, 
there exists 
$(w,h)\in (\ker(N) \cap \RR^m_{>0})\times \RR^n_{>0}$  such that the sign of $\det(Q_F(w,h))$ is $\varepsilon$. Then $(\k^*,b^*,x^*):=\phi(w,h)$ (with $\phi$ as in \eqref{eq:parametrization_of_incidence_variety_pos}) is such that $x^*$ is a nondegenerate positive steady state for the parameters $(\k^*,b^*)$, and there is multistationarity.  

On one hand, by the implicit function theorem, there exists a Euclidean ball $B\subseteq \RR^m_{>0}\times \RR^d$ containing $(\k^*,b^*)$, such that for all $(\k,b)\in B$, it holds that $F_{\k,b}$ has a nondegenerate zero $x$ with the sign of $\det(J_{F_{\k,b}}(x))$ being equal to $\varepsilon$.

On the other hand, \Cref{thm:Fnet} and the existence of a nondegenerate positive steady state tells us that all positive steady states are nondegenerate 
for all $(\k,b)$ in a nonempty Zariski open subset $\U\subseteq \Z_{\rm cc}$. 
We conclude that for any  $(\k,b)\in \U\cap B$ (where the latter set  is nonempty), all positive steady states are nondegenerate and  at least one   satisfies \eqref{eq:sign}. Hence there is an odd number of them and at least three, giving the statement. 
\end{proof}

Under the same hypotheses of \Cref{thm:multidegree}, it already follows from \cite[Theorem~1]{Conradi2017identifying} that the network admits at least two positive steady states for some choice of parameters, while nondegeneracy is not addressed. Our strengthening of this result is that we can guarantee the existence of three positive steady states, all of which will be nondegenerate.

Next, we give conditions that ensure that a network with the capacity for multistationarity also has the capacity for nondegenerate multistationarity.  This is related to the \emph{Nondegeneracy Conjecture} from \cite{Joshi2017small,shiuwolff:small,tang:one-dim}, which states that if $\VV_{>0}(F_{\k,b})$ is finite for all $(\k,b)\in\RR^m_{>0}\times\RR^d$, and has cardinality $\ell$ for some $(\k^*,b^*)\in\RR^m_{>0}\times\RR^d$, then there is some choice of parameters $(\k',b')\in\RR^m_{>0}\times\RR^d$ for which the network has $\ell$ nondegenerate positive steady states.
We analyze this problem in more generality from a topological point of view, with the help of the framework of vertically parameterized systems. We derive as a consequence the case $\ell=2$ of the conjecture for all nondegenerate networks.

\begin{theorem}
\label{thm:nondegeneracy_conjecture}
Suppose that  for  a given  $(\k^*,b^*)\in\RR^m_{>0}\times\RR^d$, a reaction network has at least
$\ell_\mathrm{ndeg}$ nondegenerate positive steady states, and at least $\ell_\mathrm{deg}>0$ degenerate  positive steady states, which are isolated points in $\VV_{>0}(F_{\k^*,b^*})$. Assume  furthermore that at least one of the following conditions holds: 
\begin{enumerate}[label=(\roman*)]
    \item The network is nondegenerate and $\VV_{>0}(F_{\k,b})$ is finite for all $(\k,b)$ in an open neighborhood of $(\k^*,b^*)$ in $\RR^m_{>0}\times\RR^d$.
    \item At least one of the degenerate positive steady states for $(\k^*,b^*)$  is an isolated point in the complex variety $\VV_{\CC^*}(F_{\k^*,b^*})$. 
\end{enumerate}
Then there exists a choice of parameters $(\k',b')\in\RR^m_{>0}\times\RR^d$ such that the network has 
at least 
\[\ell_\mathrm{ndeg} + {\rm min}(\ell_\mathrm{deg},2)\]
nondegenerate positive steady states.
\end{theorem}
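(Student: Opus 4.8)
The plan is to pass to the parametrization of the positive incidence variety from \eqref{eq:parametrization_of_incidence_variety_pos}. Writing $\pi(\k,b,x)=(\k,b)$ and composing with $\phi$, one obtains a map
\[
\Psi\colon \mathcal{U}\longrightarrow\RR^m_{>0}\times\RR^d,\qquad \Psi(w,h)=\bigl(w\circ h^M,\ Lh^{-1}\bigr),
\]
where $\mathcal{U}=(\ker(C)\cap\RR^m_{>0})\times\RR^n_{>0}$ is a connected smooth manifold (a product of an open convex cone with an open orthant; nonempty by consistency) of dimension $m+n-s=\dim(\RR^m_{>0}\times\RR^d)$, and the fiber $\Psi^{-1}(\k,b)$ is in bijection with $\VV_{>0}(F_{\k,b})$. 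By \Cref{prop:Jacmatrices}(iii), if $\phi(w,h)=(\k,b,x)$ then $J_{F_{\k,b}}(x)=Q_F(w,h)$, so a positive steady state is nondegenerate precisely when the corresponding point of $\mathcal{U}$ is a regular point of $\Psi$. In particular, by the implicit function theorem each of the $\ell_\mathrm{ndeg}$ nondegenerate steady states over $(\k^*,b^*)$ persists, as a nondegenerate steady state, for all $(\k,b)$ in some ball $\mathcal{B}$ around $(\k^*,b^*)$, and these stay pairwise distinct if $\mathcal{B}$ is small.

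The core is a local statement about a degenerate positive steady state $q$ that is isolated in $\VV_{>0}(F_{\k^*,b^*})$; set $p_0=\phi^{-1}(\k^*,b^*,q)\in\mathcal{U}$, an isolated point of $\Psi^{-1}(\k^*,b^*)$. First I would show $\det Q_F$ does not vanish identically on any neighborhood of $p_0$: under hypothesis (i), a neighborhood on which $\det Q_F\equiv 0$ would, by the constant-rank theorem, force positive-dimensional (hence infinite) fibers of $\Psi$ over points arbitrarily close to $(\k^*,b^*)$, contradicting the assumed local finiteness; under hypothesis (ii), the complexified fiber $\Psi_\CC^{-1}(\k^*,b^*)$ being $0$-dimensional at $p_0$ already forbids $\Psi_\CC$, hence $\Psi$, from dropping rank on a whole neighborhood of $p_0$. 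Fixing a ball $B\ni p_0$ with $\overline B$ compact in $\mathcal{U}$ and $\overline B\cap\Psi^{-1}(\k^*,b^*)=\{p_0\}$, the image $\Psi(B)$ then has nonempty interior, so by Sard's theorem it contains regular values of $\Psi$; shrinking $B$ makes these regular values approach $(\k^*,b^*)$, and for any such regular value $(\k',b')$ every preimage in $B$ is a nondegenerate positive steady state, with at least one present. Since $\overline B$ meets $\Psi^{-1}(\k^*,b^*)$ only at $p_0$, the Brouwer degree $\iota:=\deg(\Psi,B,(\k^*,b^*))$ is defined and, by homotopy invariance, equals the sum of the Jacobian-sign local degrees of the preimages of $(\k',b')$ in $B$; hence their number is at least $\max(1,|\iota|)$ and is $\equiv\iota\pmod 2$. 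Thus $q$ yields at least two nearby nondegenerate steady states when $\iota=0$, and at least one for \emph{every} small perturbation when $\iota\neq 0$.

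I would then assemble these contributions, keeping $(\k',b')$ inside $\mathcal{B}$ so the $\ell_\mathrm{ndeg}$ original steady states survive. If $\ell_\mathrm{deg}=1$, applying the local statement to the unique degenerate steady state and choosing $(\k',b')$ a suitable regular value near $(\k^*,b^*)$ gives at least $\ell_\mathrm{ndeg}+1$ nondegenerate positive steady states, all in pairwise disjoint neighborhoods. If $\ell_\mathrm{deg}\geq 2$, two extra are needed: if the relevant degenerate steady state (the given one under (i), or the one singled out by (ii)) has vanishing local degree, the local statement already produces two nearby nondegenerate steady states; otherwise its nonzero degree yields a \emph{robust} extra steady state near it, and we apply the local statement to a second isolated degenerate steady state to obtain, at a common regular value $(\k',b')$ close to $(\k^*,b^*)$, one further nondegenerate steady state, for a total of at least $\ell_\mathrm{ndeg}+2$. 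In every case the produced steady states lie in disjoint open sets and are therefore distinct.

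The step I expect to be the main obstacle is this local analysis near a degenerate steady state: ruling out identical vanishing of $\det Q_F$ near $p_0$ (which is exactly where the two hypotheses enter, in genuinely different ways) and then marrying the open-mapping/Sard input with the degree bookkeeping so that a single choice of $(\k',b')$ simultaneously realizes the persistent steady states, the robust contributions, and the chosen regular-value contributions. A related delicate point, when the pertinent local degree is nonzero, is controlling the parity versus multiplicity of the split-off steady states (e.g.\ via the Eisenbud--Levine--Khimshiashvili signature formula) so as to be sure the $\ell_\mathrm{deg}\geq 2$ case genuinely produces two extra nondegenerate steady states rather than one.
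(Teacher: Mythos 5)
Your proposal is correct and shares the paper's overall architecture: the parametrization $\phi$ of the positive incidence variety, the identification of nondegenerate positive steady states with regular points of the induced map $\Psi$ to parameter space, persistence of the $\ell_\mathrm{ndeg}$ regular preimages via the inverse function theorem, disjoint balls around the isolated fiber points, and the fact that images of open sets have nonempty interior. Where it genuinely differs is in the key step that extracts two nondegenerate steady states near a degenerate one. The paper reduces to the case where $(\k^*,b^*)$ lies on the boundary of the image of every ball around a degenerate preimage, establishes local finiteness of fibers (trivially under (i), via upper semicontinuity of fiber dimension under (ii)), and then combines the Blokh--Oversteegen--Tymchatyn theorem on almost one-to-one maps with invariance of domain to contradict injectivity off the critical locus. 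You instead use the local Brouwer degree $\iota$ at the isolated preimage $p_0$: at a regular value in $\Psi(B)$ near $(\k^*,b^*)$ the number of preimages in $B$ is at least one and congruent to $\iota$ modulo $2$, so $\iota=0$ forces two, while $\iota\neq 0$ gives one robustly and a second isolated degenerate point supplies the other. This is sound, more self-contained, and avoids the paper's interior-versus-boundary case analysis; your closing worry about Eisenbud--Levine--Khimshiashvili is unnecessary, since the sign-sum formula at a regular value already yields the parity claim. One observation you should make explicit: your argument invokes (i)/(ii) only to rule out identical vanishing of $\det Q_F$ near $p_0$, but that already follows from nondegeneracy of the network alone, since $\det Q_F$ is a polynomial on the irreducible variety $\ker(C)\times(\CC^*)^n$ and hence cannot vanish on a nonempty open subset without vanishing identically. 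The local-finiteness part of (i) therefore does no work in your proof, whereas it is essential to the paper's route through the almost-one-to-one-maps theorem. This places your argument in tension with the remark following \Cref{rk:counter_nondeg}, which presents an example meant to show the extra hypotheses cannot be dropped; note, however, that in that example the double root of the quartic $g$ at $a=9/5$ can be split into two simple positive roots by perturbing $\k_2$ alone, which produces nondegenerate positive steady states arbitrarily close to $\k^*$ and suggests the perturbation analysis there, rather than your proof, is where the problem lies. It would strengthen your write-up to address this point head-on.
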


\begin{proof} 
Consider the parametrization of the solvability locus $\Z_{\rm cc}$ of $F$,  obtained by composing $\phi$ in \eqref{eq:parametrization_of_incidence_variety_pos}  with the projection to parameter space:
\[
    \varphi \colon  (\ker(N) \cap \RR^m_{>0}) \times \RR^n_{>0}  \rightarrow  \RR^m\times \RR^d,\quad 
   (v,h)  \mapsto (v\circ h^M, Lh^{-1}),
\]
where $d=n-s$. An easy dimension count shows that both the domain and codomain of $\varphi$ are differential manifolds of dimension $m+d$. By construction, $\im(\varphi) = \Z_{\rm cc}$, and the fiber $\varphi^{-1}(\k^*,b^*)$ is homeomorphic to $\VV_{>0}(F_{\k^*,b^*})$.

Note that the network is nondegenerate also in case (ii):  the theorem of dimension of fibers \cite[Theorem~3.13, Corollary~3.15]{Mumford} says that if $\VV_{\CC^*}(F_{\k^*,b^*})$ has an isolated point, then $\VV_{\CC^*}(F_{\k,b})$ is nonempty and finite for generic $(\k,b)\in\CC^m\times\CC^d$, so the network is nondegenerate by \Cref{thm:Fnet}. Hence, $\Z_{\rm cc}$ is an $(m+d)$-dimensional semialgebraic set in both case (i) and (ii).

 \cite[Proposition~3.3]{FeliuHenrikssonPascual2023} establishes that when $(\k,b,x)=\phi(v,h)$, $x$ is  a nondegenerate positive steady state if and only if $(v,h)$ is a critical point of $\varphi$. In this case, $\varphi$ is a local diffeomeorphism around $(v,h)$, and hence 
$\varphi(v,h)$ is in the interior of $\varphi(U)$ for any 
  open neighborhood $U$ of $(v,h)$ in  $(\ker(N) \cap \RR^m_{>0}) \times \RR^n_{>0}$. 
We also have that the set  $\D_{\rm cc}$ of parameter values for which there is a degenerate positive steady state coincides with the set of critical values of $\varphi$, and the Zariski closure $H\subseteq \RR^m\times \RR^d$ of $\D_{\rm cc}$ is a proper algebraic variety  
under the assumption of nondegeneracy \cite[Proposition~3.4]{FeliuHenrikssonPascual2023}. So  
\[T:=\varphi^{-1}(H)\subseteq (\ker(N) \cap \RR^m_{>0}) \times \RR^n_{>0}\] 
is a proper Zariski closed subset.  
It follows that for any nonempty open set $U$ in $(\ker(N) \cap \RR^m_{>0}) \times \RR^n_{>0}$, the image $\varphi(U)$ has nonempty interior, as 
$U\setminus T\neq \varnothing$ consists  of regular points of $\varphi$, which are mapped to the interior of $\varphi(U\setminus T)$.  

Since $\ell_\mathrm{deg}>0$, we must have $(\k^*,b^*)\in H$. 
We want to show that there exists a choice of parameters $(\k',b')\in\RR^m_{>0}\times\RR^d$ such that 
\begin{equation}\label{eq:nondeg_goal}
\ell_\mathrm{ndeg}+ {\rm min}(\ell_\mathrm{deg},2)\leq \#\varphi^{-1}(\k',b') \qquad \text{and} \qquad (\k',b')\notin H. 
\end{equation} 

By assumption, the fiber $\varphi^{-1}(\k^*,b^*)$ contains $\ell_\mathrm{ndeg}$ isolated points $P_\mathrm{ndeg}:=\{\xi_1,\dots,\xi_{\ell_\mathrm{ndeg}}\}$ corresponding to nondegenerate positive steady states, and $\ell_\mathrm{deg}>0$ isolated points $P_\mathrm{deg} :=\{\delta_1,\dots,\delta_{\ell_\mathrm{deg}}\}$ corresponding to degenerate positive steady states, where, if in case (ii), we additionally  assume that $\delta_1$ corresponds to an isolated point in $\VV_{\CC^*}(F_{\k^*,b^*})$.

Let $P=P_\mathrm{ndeg}\cup P_\mathrm{deg}$ and 
$U_p$ for $p\in P$ be disjoint open balls in $(\ker(N) \cap \RR^m_{>0}) \times \RR^n_{>0}$   such that $U_p\cap\varphi^{-1}(\k^*,b^*)=\{p\}$.
Then $(\k^*,b^*)\in \bigcap_{p\in P} \varphi(U_p)$.  As noticed above, each $\varphi(U_p)$ has nonempty Euclidean interior
and for each $p\in P_\mathrm{ndeg}$, 
 $(\k^*,b^*)$ belongs to the interior of $\varphi(U_p)$. 
By shrinking each $U_{q}$ appropriately, we can without loss of generality assume 
\begin{equation}
\label{eq:shrinking}
    \varphi(U_{q})\subseteq \bigcap_{p\in P_\mathrm{ndeg}} \varphi(U_{p})\quad\text{for each $q\in P_\mathrm{deg}$}.
\end{equation}

If $V:=\bigcap_{p\in P} \varphi(U_p)$
has nonempty interior, then 
for all $(\k,b)\in V$, the fiber $\varphi^{-1}(\k,b)$ contains at  least $\ell_\mathrm{ndeg}+\ell_\mathrm{deg}$ distinct points: one in each $U_p$. Any $(\k',b')\in V\setminus H$ now satisfies  \eqref{eq:nondeg_goal}, and we are done. 
Note that this will be the case when $\ell_\mathrm{deg}=1$, in which case we have now shown that there are at least $\ell_\mathrm{ndeg}+1$ nondegenerate positive steady states. 

If the interior of $V$ is empty, then necessarily $\ell_\mathrm{deg}\geq 2$. 
If $(\k^*,b^*)$ is in the interior of 
 $\varphi(U_p)$ for at least one $p\in P_\mathrm{deg}$, say $\delta_{i}$, then the interior of $\varphi(U_{\delta_i}) \cap \varphi(U_{\delta_{j}}) \bigcap_{p\in P_\mathrm{ndeg}} \varphi(U_p)$
is nonempty for all $i\neq j$ and, as above, we are done by picking any such $j$. We therefore assume that $(\k^*,b^*)$ is at the boundary of  $\varphi(U_p)$ for all $p\in P_\mathrm{deg}$, and in particular for $\delta_1$. 
The rest of the proof will now focus on proving that, under this scenario, $\varphi$ is not injective on $U_{\delta_1}\setminus T$.

We first show that  there exists an open neighborhood  $U\subseteq U_{\delta_1}$ of $\delta_1$, such that the fibers of $\varphi_{|U}$ are finite. This is trivial in case (i). For case (ii), we use the upper semi-continuity (see \cite[02FZ]{stacks}) of the map $\xi \mapsto \dim_{\xi}( \varphi^{-1}_{\CC}(\varphi_{\CC}(\xi)))$, where $\dim_{\xi}$ denotes the local dimension at $\xi$, and $\varphi_\CC$ is the extension of $\varphi$ to $\CC$ as in \eqref{eq:parametrization_of_incidence_variety2} below. Since, by assumption, $\delta_1$ is isolated in $\varphi^{-1}_{\CC}(\k^*,b^*)$, 
this implies that there is an open neighborhood $U\subseteq U_{\delta_1}$ of $\delta_1$ such that $\varphi_{|U}^{-1}(\varphi_{|U}(\xi))$ is finite for all $\xi\in U$.

Aiming for a contradiction, we now assume that $\varphi$ is injective on $U\setminus T$.
As all fibers of $\varphi_{|U}$ are finite,  the Main Theorem in \cite{Blokh} tells us that $\varphi_{|U}$ is injective and hence an open map by the theorem of invariance of domain. Hence, $\delta_1$ is mapped to a point in the interior of $\varphi(U_{\delta_1})$, yielding a contradiction. 

We conclude that there exist at least two points in $U_{\delta_1}\setminus T$ mapping to some  
$(\k',b')\notin H$, each corresponding to a nondegenerate positive steady state. As $(\k',b')\in \bigcap_{p\in P_\mathrm{ndeg}}\varphi(U_p)$ by \eqref{eq:shrinking},  there are at least  $\ell_\mathrm{ndeg}+2$ nondegenerate positive steady states for $(k',b')$.
\end{proof} 

It remains an open question whether the lower bound of \Cref{thm:nondegeneracy_conjecture} can be improved to $\ell_\mathrm{ndeg}+\ell_\mathrm{deg}$. Nevertheless, we see that the existence of two isolated points, together with the conditions (i) or (ii), is sufficient for nondegenerate multistationarity. In particular, this proves the Nondegeneracy Conjecture for the $\ell=2$ case for all nondegenerate networks,   as stated in \Cref{thmA:nondegeneracy_conjecture} in the introduction.

\begin{corollary}
\label{cor:nondegeneracy_conjecture}
Suppose that for  a given choice of parameters $(\k^*,b^*)\in\RR^m_{>0}\times\RR^d$,  a reaction network has  at least two  positive steady states,
 which are isolated points in $\VV_{>0}(F_{\k^*,b^*})$,
and at least one of the following conditions holds: 
\begin{enumerate}[label=(\roman*)]
    \item The network is nondegenerate and $\VV_{>0}(F_{\k,b})$ is finite for all $(\k,b)$ in an open neighborhood of $(\k^*,b^*)$ in $\RR^m_{>0}\times\RR^d$.
    \item At least one of the  positive steady states for $(\k^*,b^*)$  is an isolated point in $\VV_{\CC^*}(F_{\k^*,b^*})$. 
\end{enumerate}
Then there also exists a choice of parameters  such that the network has at least two nondegenerate positive steady states.
\end{corollary}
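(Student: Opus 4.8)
The plan is to deduce the corollary from \Cref{thm:nondegeneracy_conjecture} via a short case split. Fix two positive steady states $\xi,\eta$ of $F_{\k^*,b^*}$ that are isolated points of $\VV_{>0}(F_{\k^*,b^*})$, and let $\ell_\mathrm{ndeg}\in\{0,1,2\}$ be the number of them that are nondegenerate, so $\ell_\mathrm{deg}:=2-\ell_\mathrm{ndeg}$ counts the degenerate ones. If $\ell_\mathrm{deg}=0$ there is nothing to prove, since $(\k^*,b^*)$ itself then exhibits two nondegenerate positive steady states. So assume $\ell_\mathrm{deg}\geq 1$; since $\ell_\mathrm{ndeg}+\ell_\mathrm{deg}=2$ we then have $\ell_\mathrm{ndeg}+\min(\ell_\mathrm{deg},2)=2$, so it is enough to verify the hypotheses of \Cref{thm:nondegeneracy_conjecture} for these values: $\xi$ and $\eta$ split as $\ell_\mathrm{ndeg}$ nondegenerate and $\ell_\mathrm{deg}>0$ degenerate isolated points of $\VV_{>0}(F_{\k^*,b^*})$, and the theorem then yields parameters with at least $\ell_\mathrm{ndeg}+\min(\ell_\mathrm{deg},2)=2$ nondegenerate positive steady states.

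Next I would match the hypotheses. If the corollary's condition (i) holds, it is verbatim condition (i) of \Cref{thm:nondegeneracy_conjecture}. If the corollary's condition (ii) holds, then one of $\xi,\eta$ is isolated in $\VV_{\CC^*}(F_{\k^*,b^*})$; whenever this point may be taken among the \emph{degenerate} ones it serves as the distinguished $\delta_1$ in condition (ii) of the theorem, and we are done. The only situation needing a separate (but short) argument is when the sole $\VV_{\CC^*}$-isolated point among $\xi,\eta$ is the nondegenerate one — which is automatic for any nondegenerate zero of the square system $F$ — while the degenerate one, say $\eta$, is not isolated in $\VV_{\CC^*}(F_{\k^*,b^*})$ (for instance, the real trace of a positive-dimensional complex component, as in \Cref{ex:finite_deg}); here necessarily $\ell_\mathrm{ndeg}=1$ with $\xi$ the nondegenerate, $\VV_{\CC^*}$-isolated steady state.

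In that case I would argue directly, reusing the proof of \Cref{thm:nondegeneracy_conjecture}. The presence of the nondegenerate positive steady state $\xi$ forces the network to be nondegenerate by \Cref{thm:Fnet}, so the Zariski closure $H$ of the set of parameters admitting a degenerate positive steady state is a proper subvariety, the set $T:=\varphi^{-1}(H)$ is nowhere dense in the domain of the parametrization $\varphi(v,h)=(v\circ h^M,Lh^{-1})$, and $\varphi$ is a local diffeomorphism, hence an open map, at every point outside $T$. Choose disjoint open balls $U_\xi,U_\eta$ around the $\phi$-preimages of $\xi$ and $\eta$ that meet $\varphi^{-1}(\k^*,b^*)$ only in those preimages. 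Then $(\k^*,b^*)$ lies in the interior of $\varphi(U_\xi)$ (the $\phi$-preimage of $\xi$, being a nondegenerate steady state, is a regular point of $\varphi$) and in the closure of the open set $\varphi(U_\eta\setminus T)$ (because $U_\eta\setminus T$ is dense in $U_\eta$ and $\varphi$ is open off $T$); hence these two sets meet, and any $(\k',b')$ in the intersection lies off $H$, so it has only nondegenerate positive steady states, with one lying in each ball — distinct since $\phi$ is injective and the balls are disjoint. This gives two nondegenerate positive steady states for $(\k',b')$.

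The main obstacle is precisely this last subcase: one must rule out that an isolated degenerate positive steady state escapes into the complex numbers upon perturbation. The resolution is the same mechanism underlying \Cref{thm:nondegeneracy_conjecture} — under nondegeneracy the critical-value locus is a proper subvariety, whereas the $\varphi$-image of any nonempty open set has nonempty interior — so nothing genuinely new is required. Everything else is bookkeeping around the inequality $\ell_\mathrm{ndeg}+\min(\ell_\mathrm{deg},2)\geq 2$ together with a direct comparison of hypotheses.
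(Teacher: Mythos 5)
Your proof is correct, and its core route --- specializing \Cref{thm:nondegeneracy_conjecture} to $\ell_\mathrm{ndeg}+\ell_\mathrm{deg}=2$, so that $\ell_\mathrm{ndeg}+\min(\ell_\mathrm{deg},2)=2$ --- is exactly how the paper intends the corollary to follow (no separate proof is given there). Where you go beyond the paper is in noticing that the reduction is not quite verbatim: condition (ii) of the corollary only asks that \emph{some} steady state be isolated in $\VV_{\CC^*}(F_{\k^*,b^*})$, whereas condition (ii) of the theorem asks this of a \emph{degenerate} one, and the two genuinely differ in the mixed case $\ell_\mathrm{ndeg}=\ell_\mathrm{deg}=1$ when the degenerate point $\eta$ is not isolated over $\CC$ and condition (i) fails (the nondegenerate point is automatically $\CC^*$-isolated, so the corollary's (ii) is then vacuous). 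Your direct patch for that case is sound: the nondegenerate steady state forces the network to be nondegenerate via \Cref{thm:Fnet}, so $H$ is a proper subvariety and $T=\varphi^{-1}(H)$ is nowhere dense; $(\k^*,b^*)$ is interior to $\varphi(U_\xi)$ and lies in the closure of the nonempty open set $\varphi(U_\eta\setminus T)$, so the two meet in a parameter $(\k',b')\notin H$ whose fiber has distinct points in the two disjoint balls, i.e.\ two distinct nondegenerate positive steady states. This is in fact the same mechanism the paper's proof of \Cref{thm:nondegeneracy_conjecture} already deploys in its $\ell_\mathrm{deg}=1$ branch, where conditions (i)/(ii) are only used to establish nondegeneracy of the network (the finiteness-of-fibers and Blokh-injectivity step is only invoked when $\ell_\mathrm{deg}\geq 2$); so your argument additionally yields the slightly sharper observation that in the mixed case the conclusion holds assuming only that one of the two isolated steady states is nondegenerate, with neither (i) nor (ii) needed.
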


\begin{remark}
For nondegenerate networks, condition (i)  of \Cref{thm:nondegeneracy_conjecture} and \Cref{cor:nondegeneracy_conjecture} is satisfied if $\VV_{>0}(F_{\k,b})$ is finite for all $(\k,b)\in\RR^m_{>0}\times\RR^d$ (as required in the original Nondegeneracy Conjecture of \cite{Joshi2017small}). Another sufficient condition for (i) is $\VV_{>0}(F_{\k,b})$ being finite \emph{and nonempty} for all $(\k,b)$ in an open neighborhood of $(\k^*,b^*)$ in $\RR^m_{>0}\times\RR^d$. 
We furthermore note that condition (ii) implies that the network is nondegenerate by \Cref{thm:summary_RN}. A sufficient condition for (ii) to hold is that $\VV_{\CC^*}(F_{\k^*,b^*})$ is finite.
\end{remark}

\begin{example}
\label{ex:counter_nondeg}
To understand some of the phenomena behind assumptions (i) and (ii) in \Cref{thm:nondegeneracy_conjecture}, we
consider  the network with $s=n=2$ and stoichiometric and kinetic matrix
\setcounter{MaxMatrixCols}{20}
\[N = \begin{bmatrix}
1 & -4 & 2 & -6 & 11 & -4 & 12 & -14 & 1 & -6 & 15 & -18 & 10 & 0\\
1 & -4 & 2 & -6 & 11 & -4 & 12 & -14 & 1 & -6 & 15 & -18 & 0 & 10
\end{bmatrix},
\]
\[ 
M = \begin{bmatrix}
5 & 4 & 3 & 3 & 3 & 2 & 2 & 2 & 1 & 1 & 1 & 1 & 1 & 1 \\
1 & 1 & 3 & 2 & 1 & 3 & 2 & 1 & 5 & 4 & 3 & 2 & 1 & 1
\end{bmatrix}.
\]

By subtracting the second row of $N$ to the first, we see that any positive steady state satisfies 
$(\k_{13}-\k_{14}) x_1x_2=0$. 
Hence, $\Z$ is generically empty, the network is (strongly) degenerate,  and all positive steady states for all choices of rate constants are degenerate. 
Taking $\kappa^*=(1,\ldots,1)$ gives 
\begin{equation}\label{f:counter_nondeg}
f_{\k^*} =
\left(\begin{array}{ll} 
x_1\, x_2 \left((x_1-1)^2+(x_2-1)^2\right)\left((x_1-1)^2+(x_2-2)^2\right)\\[0.2em]
x_1\, x_2 \left((x_1-1)^2+(x_2-1)^2\right)\left((x_1-1)^2+(x_2-2)^2\right)
\end{array}\right)
\end{equation}
for which $\VV_{>0}(f_{\k^*})=\{(1,1),(1,2)\}$, and hence has two (isolated) points. Therefore, we have multistationarity but not nondegenerate multistationarity.  
\end{example}

\begin{remark}
One might expect that the existence of two isolated positive steady states for some parameter values $(\k^*,b^*)$ for a nondegenerate network is sufficient for nondegenerate multistationarity.  
However, additional assumptions, such as (i) or (ii) in \Cref{thm:nondegeneracy_conjecture}, are needed to guarantee that multiple nondegenerate positive steady states arise for some perturbation of $(\k^*,b^*)$. 
To illustrate this, we consider a modification of    \Cref{ex:counter_nondeg} with full rank matrices 
\[N = \begin{bmatrix}
1 & -4 & 2 & -6 & 11 & -4 & 12 & -14 & 1 & -6 & 15 & -18 &0 & 10 & 0\\
1 & -4 & 2 & -6 & 11 & -4 & 12 & -14 & 1 & -6 & 15 & 0& -18 & 0 & 10
\end{bmatrix} \in \ZZ^{2\times 15},
\]
\[ 
M = \begin{bmatrix}
5 & 4 & 3 & 3 & 3 & 2 & 2 & 2 & 1 & 1 & 1 & 1& 1 & 1 & 1 \\
1 & 1 & 3 & 2 & 1 & 3 & 2 & 1 & 5 & 4 & 3 & 2 & 2& 1 & 1
\end{bmatrix} \in \ZZ_{\geq 0}^{2\times 15}.
\]

The polynomial $f_{\k,1}-f_{\k,2}$ yields 
$9(\k_{12} - \k_{13}) x_2 - 5(\k_{14} - \k_{15})=0$ at any zero in $(\CC^*)^2$. When $\k_{12}\neq \k_{13}$ and $\k_{14}\neq \k_{15}$, by isolating $x_2$, inserting the expression into $f_{\k,1}$ and clearing denominators and the factor $x_1\, x_2$, we obtain a degree $4$ polynomial   in $x_1$. 
Hence $\VV_{\CC^*}(f_\k)$ has generically four elements. As in addition $\ker(N)\cap \RR^2_{>0}\neq \varnothing$,  the steady state system $f$ is nondegenerate.  If either $\k_{12}= \k_{13}$ or $\k_{14}= \k_{15}$,  but not simultaneously, then there are no positive steady states. If 
$\k_{12}= \k_{13}$ and $\k_{14}= \k_{15}$, then $\VV_{\CC^*}(f_\k)$ has dimension $1$ and all zeros are degenerate, 
and for $\k^*=(1,\ldots,1)$, $f_{\k^*}$ coincides with \eqref{f:counter_nondeg}, and there are thus two degenerate and isolated positive steady states. 
Note that condition (ii) of \Cref{thm:nondegeneracy_conjecture} is  not satisfied, as all irreducible components of $\VV_{\CC^*}(f_{\k^*})$ are infinite. Condition (i) of \Cref{thm:nondegeneracy_conjecture} is not satisfied either, as $\VV_{>0}(f_{\k})$ is infinite for any $\kappa=(1,\ldots,1,a,a)$ with $0<a<1$.

We show next that   any perturbation of $\k^*$ yields a system without nondegenerate real zeros. To see this, we need to assume $\k_{12}\neq  \k_{13}$ and $\k_{14} \neq \k_{15}$, and consider the degree four polynomial  in $x_1$ introduced above. 
We reparametrize the polynomial using $\k_{12} = \epsilon + \k_{13}$ and $\k_{14} = a\cdot \epsilon + \k_{15}$ with $\epsilon\neq 0$ and $a>0$ (so $x_2$ is positive), such that it becomes:
\begin{multline*}
g  :=\epsilon^{4} \Big(6561 \k_{1} x_1^{4}-26244 \k_{2} x_1^{3}+(4050 \, a^{2} \k_{3}-21870 \, a\,  \k_{4}+72171 \k_{5}) x_1^{2} \\ \quad  +(-8100 \, a^{2} \k_{6}+43740 \, a\, \k_{7}-91854 \k_{8}) x_1+625\,  a^{4} \k_{9} \\  -6750 \, a^{3} \k_{10}+30375 \, a^{2} \k_{11}-65610 \, a\,  \k_{13}+65610 \k_{15} \Big). 
\end{multline*}
If we set $\k_i=1$ for all $i\not\in\{12,14\}$, $g$ has 
the roots
\[  1 \pm (\tfrac{5 \, a}{9} -2)  \,\mathrm{I}, \qquad 
1 \pm (\tfrac{5 \, a}{9}-1 )\mathrm{I}, 
\]
i.e., either $4$ complex roots, or $2$ complex roots and a double positive root, independently of $\epsilon$. Hence, no choice of $\epsilon\neq 0$ and $a> 0$ leads to nondegenerate positive steady states. If $a\not\in\{ \tfrac{9}{5},\tfrac{18}{5}\}$, then any small perturbation of the $\k_i$ for $i \notin \{12,14\}$ yields a polynomial with $4$ complex simple roots as well. 
If a perturbation for $a\in\{\tfrac{9}{5},\tfrac{18}{5}\}$ yielded positive real simple roots, then the same would be true after perturbing $a$, which we already shown is not the case. 

This reparameterization shows that any small perturbation of $\k^*$ will yield a system with no nondegenerate real zero. In particular, this illustrates that 
extra assumptions such as  (i) and (ii) in \Cref{thm:nondegeneracy_conjecture} are necessary if one aims at obtaining nondegenerate multistationarity by perturbing the given isolated zeros. 

We note that the system $f$ in this discussion, in fact, admits choices of parameters for which there are two nondegenerate positive steady states, but these are ``far'' from $\k^*$ and hence their existence is, in a sense, ``independent'' of the existence of two positive steady states for $\k^*$. 
\end{remark}

\section{Proofs of Theorem 3.1, Theorem 3.4 and Theorem 4.4}
\label{sec:thm_proofs}

In this final section we give the full theorem on augmented vertically parametrized  systems from  \cite{FeliuHenrikssonPascual2023} for completeness, and use it to derive \Cref{thm:fnet}, \Cref{thm:Fnet} and \Cref{thm:localACR}. 

An augmented vertically parametrized system is one of the form 
\[g = \left(C(\k \circ x^M),\, Lx-b\right)\in\CC[\k,b,x^\pm]^{s+\ell},  \qquad \rk(C)=s, \quad s\leq n, \quad 0\leq \ell \leq n-s. \]
With $\ell=0$, the steady state system is of this form (there is no linear part), and with $\ell=n-s$ so is the augmented steady state system. 
The complex incidence variety 
\[\mathcal{I}_g := \{(\k,b,x) \in \CC^m \times \CC^\ell \times (\CC^*)^n : g(\k,b,x)=0 \} \]
is nonsingular and 
admits a parametrization 
\begin{align}\label{eq:parametrization_of_incidence_variety2}
\phi \colon \ker(C)\times (\CC^*)^n \rightarrow \CC^{m}\times\CC^{\ell}\times (\CC^*)^n,\quad (w,h)\mapsto (w\circ h^M,Lh^{-1},h^{-1}),
\end{align}
analogous to \eqref{eq:parametrization_of_incidence_variety}. 
We let 
\[\Z_g = \{(\k,b) \in \RR^m_{>0}\times \RR^\ell : \VV_{>0}(g_{\k,b})\neq \varnothing \}. \]

The following theorem is \cite[Theorem~3.7]{FeliuHenrikssonPascual2023} (applied  to $\A=\RR^m_{>0}\times \RR^\ell$ and $\X = \RR^n_{>0}$ in  the notation of \cite{FeliuHenrikssonPascual2023}), combined with the considerations at the start of  Section~3.5 and Remark~2.4 in \cite{FeliuHenrikssonPascual2023}. { At the core of the proof is the fact that degenerate zeros of $g$ correspond to critical values of the projection of $\phi$ onto parameter space \cite[Proposition~3.3]{FeliuHenrikssonPascual2023}.  Additionally, we add the condition (setZC), which is equivalent to (setZ) by  \cite[Lemma~2.8]{FeliuHenrikssonPascual2023}, and (iso1), which is equivalent to (setZ) by the theorem of dimension of fibers (see, e.g., \cite[Theorem~2.3]{FeliuHenrikssonPascual2023}).}
 
\begin{theorem}
\label{thm:summary_RN}
For a real augmented vertically parametrized system 
\[ g=(C(\k \circ x^M),\, Lx-b)\in\RR[\k,b,x^\pm]^{s+\ell} \quad \text{with $s\leq n$ and $0\leq \ell \leq n-s$,}\]
assume that $\I_g \cap (\RR^m_{>0}\times \RR^\ell\times  \RR^n_{>0}) \neq \varnothing$. 
Consider the following statements:
\smallskip
\begin{itemize}[widest=(degXG),itemindent=*,leftmargin=*]
\item[{\rm (iso1)}] $g_{\k,b}$ has  an isolated zero in $(\CC^*)^n$ for some $(\k,b)\in \CC^m\times  \CC^\ell$.
\item[{\rm (deg1)}]  $g_{\k,b}$ has a nondegenerate zero  in $(\CC^*)^n$ for some $(\k,b)\in \CC^m\times  \CC^\ell$.
\item[{\rm (degX1)}] $g_{\k,b}$ has a nondegenerate zero  in $\RR^n_{>0}$ for some $(\k,b)\in \RR^m_{>0}\times \RR^\ell$.
\item[{\rm (degXG)}] There exists a nonempty Zariski open subset $\mathcal{U}\subseteq \I_g$ such that for all\\ $(\k,b,x)\in \mathcal{U}\cap (\RR^m_{>0}\times \RR^\ell\times  \RR^n_{>0})$, $x$ is a nondegenerate zero of $g_{\k,b}$. 
\item[{\rm (degAll)}] For generic $(\k,b)\in  \Z_g$,  all zeros of $g_{\k,b}$ in $(\CC^*)^n$ are nondegenerate.
\item[{\rm (setE)}] $\Z_g$ has nonempty Euclidean interior in $\RR^m_{>0}\times \RR^\ell$.
\item[{\rm (setZ)}] $\Z_g$ is Zariski dense in $ \CC^m\times  \CC^\ell$. 
\item[{\rm (setZC)}] $\{(\k,b) \in \CC^m \times \CC^\ell : \VV_{\CC^*}(g_{\k,b})\neq \varnothing \}$ is Zariski dense in $ \CC^m\times  \CC^\ell$. 
\item[{\rm (dim1)}] $\VV_{\CC^*}(g_{\k,b})$ has pure dimension $n-s-\ell$ for at least one $(\k,b)\in\Z_g$.
\item[{\rm (dimG)}] $\VV_{\CC^*}(g_{\k,b})$ has pure dimension $n-s-\ell$ for generic  $(\k,b)\in\Z_g$.
\item[{\rm (real)}]   For generic $(\k,b)\in\Z$, $\VV_{\RR^*}(g_{\k,b})$ has pure dimension $n-s-\ell$, and $\VV_{>0}(g_{\k,b})$ has dimension $n-s-\ell$ as a semialgebraic set. 
\item[{\rm (reg)}]  For generic $(\k,b) \in  \CC^m\times  \CC^\ell$, $\VV_{\CC^*}(g_{\k,b})$ is a nonsingular complex algebraic variety (in particular, different irreducible components do not intersect).
\item[{\rm (rad)}] $g_{\k,b}$ generates a radical ideal in $\CC[x^{\pm}]$ for generic $(\k,b) \in  \CC^m\times  \CC^\ell$, and $g$ generates a radical ideal in $\CC(\k,b)[x^{\pm}]$.
\end{itemize}
{\samepage
Then the following holds:
\begin{itemize}
\item The statements {\rm (deg1)}, {\rm (degX1)}, {\rm (degXG)}, {\rm (degAll)}, {\rm (setE)}, {\rm (setZ)}, {\rm (setZC)}, {\rm (dim1)}, {\rm (dimG)} are all equivalent to the condition 
\[\rk \begin{bmatrix}C \diag(w)M^\top \diag(h) \\ L\end{bmatrix} =s+\ell\quad\text{for some $(w,h)\in \ker(C) \times (\CC^*)^n$}.\]

\item Any of the statements mentioned above implies {\rm (real)} and {\rm (reg)}. 
\item  The statement {\rm (rad)} holds, independently of the other statements. 
\end{itemize}
}
\end{theorem}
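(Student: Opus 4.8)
The plan is to deduce every assertion from the parametrization $\phi$ in \eqref{eq:parametrization_of_incidence_variety2}, which is a bijection from the smooth irreducible variety $\Y:=\ker(C)\times(\CC^*)^n$ onto the incidence variety $\I_g$, so that $\I_g$ is itself smooth and irreducible of dimension $(m-s)+n$. Composing $\phi$ with the projection $\CC^m\times\CC^\ell\times(\CC^*)^n\to\CC^m\times\CC^\ell$ produces a morphism
\[\psi\from\Y\to\CC^m\times\CC^\ell,\qquad (w,h)\mapsto(w\circ h^M,\,Lh^{-1}),\]
and the first step is to record three dictionary entries: the image of $\psi$ equals $\{(\k,b):\VV_{\CC^*}(g_{\k,b})\neq\varnothing\}$ and the fiber $\psi^{-1}(\k,b)$ is identified with $\VV_{\CC^*}(g_{\k,b})$; the map $\psi$ restricts to the real manifold $\Y_{>0}:=(\ker(C)\cap\RR^m_{>0})\times\RR^n_{>0}$ of real dimension $(m-s)+n$, with image $\Z_g$; and, crucially, $x\in\VV_{\CC^*}(g_{\k,b})$ is a \emph{nondegenerate} zero of $g_{\k,b}$ if and only if $\phi^{-1}(\k,b,x)$ is a regular point of $\psi$.

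The only genuinely problem-specific ingredient is this last equivalence. Writing $w=Gu$ for a Gale dual $G$ of $C$ and differentiating, the Jacobian of $\psi$ at $(w,h)$ is block-triangular in the $(u,h)$-decomposition, and a short computation shows that its rank equals $(m-s)+\rk J_{g_{\k,b}}(x)$ for $(\k,b,x)=\phi(w,h)$, where
\[J_{g_{\k,b}}(x)=\begin{bmatrix}C\diag(w)M^\top\diag(h)\\ L\end{bmatrix}.\]
Hence $d\psi$ is surjective at $(w,h)$, i.e.\ has rank $m+\ell$, exactly when this $(s+\ell)\times n$ matrix has full row rank $s+\ell$, i.e.\ when $x$ is nondegenerate. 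I expect the careful bookkeeping of the derivatives of $w\circ h^M$ and $Lh^{-1}$, together with the identification of the $(m-s)$ extra directions coming from $\ker(C)=\im(G)$, to be the main technical obstacle; everything downstream is standard characteristic-zero algebraic geometry.

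Given the dictionary, I would argue as follows. Since $\Y$ is irreducible, the locus where $d\psi$ has maximal rank is Zariski open, hence empty or dense, so $\psi$ has one regular point iff it has a dense set of them iff $\psi$ is dominant; and dominance forces the generic fiber to have the expected pure dimension $(m-s+n)-(m+\ell)=n-s-\ell$ by the theorem on dimension of fibers (plus generic flatness), and conversely a nonempty fiber of that dimension forces dominance. This yields the equivalence of the rank condition with (deg1), (dim1), (dimG) and (setZC); the equivalence with (setZ) and (setE) then follows from the real--complex comparison \cite[Lemma~2.6]{FeliuHenrikssonPascual2023}, using the standing hypothesis $\I_g\cap(\RR^m_{>0}\times\RR^\ell\times\RR^n_{>0})\neq\varnothing$, which makes $\Y_{>0}$ Zariski dense in $\Y$. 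For (degX1) and (degXG) one uses that density to move a regular point into $\Y_{>0}$, where the regular locus is relatively Zariski open and dense; and (degAll) holds because the set of critical values of $\psi$ is a proper subvariety (generic smoothness/Sard), so a generic $(\k,b)\in\Z_g$ has a fiber consisting only of regular points.

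Finally, (real) and (reg) follow from generic smoothness of the dominant morphism $\psi$ --- its generic fiber is smooth of pure dimension $n-s-\ell$, and this descends to $\VV_{\RR^*}(g_{\k,b})$ and $\VV_{>0}(g_{\k,b})$ because the relevant Zariski-dense set of good parameters meets $\Z_g$, near which $\psi|_{\Y_{>0}}$ is a submersion --- while (rad) is handled unconditionally by cases: if the rank condition fails the generic fiber of $\psi$ is empty, so $g_{\k,b}$ generates the unit ideal for generic $(\k,b)$ and $g$ the unit ideal over $\CC(\k,b)$; if it holds, the generic fiber is smooth of pure codimension $s+\ell$, so the Jacobian criterion yields radicality for generic closed $(\k,b)$, while the function-field statement is the reducedness of the generic fiber, again by generic smoothness. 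This recovers \cite[Theorem~3.7]{FeliuHenrikssonPascual2023} specialized to $\A=\RR^m_{>0}\times\RR^\ell$ and $\X=\RR^n_{>0}$.
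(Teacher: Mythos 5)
Your argument is correct, but note that the paper does not actually prove this theorem: it is stated as a direct specialization of \cite[Theorem~3.7]{FeliuHenrikssonPascual2023}, with (setZC) added via \cite[Lemma~2.6]{FeliuHenrikssonPascual2023} and the core correspondence between degenerate zeros and critical points of the projection supplied by \cite[Proposition~3.3]{FeliuHenrikssonPascual2023}. Your sketch reconstructs precisely the strategy of that cited source --- parametrize the incidence variety by $\ker(C)\times(\CC^*)^n$, identify nondegenerate zeros with regular points of the projection $\psi$ to parameter space, and then run dominance, fiber-dimension, generic-smoothness, and Jacobian-criterion arguments --- so in substance it is the same proof, written out rather than cited. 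The one step you flag as the main technical obstacle, namely $\rk\, d\psi_{(w,h)} = (m-s) + \rk J_{g_{\k,b}}(x)$, does check out: writing $w=Gu$, the differential is block-triangular with upper-left block $\diag(h^M)G$ of full column rank $m-s$ and lower-left block $0$, and passing to the quotient of the target by $\diag(h^M)\ker(C)\times\{0\}$ via $v\mapsto C\diag(h^M)^{-1}v$ reduces the remaining block column to
\[
\begin{bmatrix} C\diag(w)M^\top\diag(h)^{-1} \\ -L\diag(h)^{-2}\end{bmatrix},
\]
which after right-multiplication by the invertible matrix $-\diag(h)^{2}$ has the same rank as $\bigl[\begin{smallmatrix} C\diag(w)M^\top\diag(h) \\ L\end{smallmatrix}\bigr]=J_{g_{\k,b}}(x)$ at $x=h^{-1}$. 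The remaining steps (dominance iff one regular point iff generic fiber of the expected dimension, density of the positive part of the domain under the standing hypothesis, generic smoothness for (degAll), (real), (reg), and the two-case treatment of (rad)) are all standard and correctly deployed.
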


\subsection{Proof of Theorem 3.1 and Theorem 3.4}
\label{proof:fnet}\label{proof:Fnet}
We consider the steady state system $f$, where $\ell=0$ and $\Z=\Z_f$ for \Cref{thm:fnet}, and the augmented steady state system $F$  with $\ell=n-s$ and $\Z_F=\Z_{\rm cc}$ for \Cref{thm:Fnet}. 
In both cases,  
the condition $\ker(C)\cap \RR^m_{>0}\neq \varnothing$ implies $\Z\neq \varnothing$ and $\Z_{\rm cc}\neq \varnothing$ by \Cref{prop:Jacmatrices}, and hence $\I_f\cap (\RR^m_{>0}\times  \RR^n_{>0})\neq \varnothing$ and $\I_F\cap (\RR^m_{>0}\times \RR^d \times  \RR^n_{>0})\neq \varnothing$. 
Therefore, we can apply  \Cref{thm:summary_RN}.

The equivalence between (i)--(vi) in \Cref{thm:fnet} and in \Cref{thm:Fnet} correspond to the first bullet point of \Cref{thm:summary_RN} by using (deg1), (degAll), (setE), (setZ), (dim1).

For \Cref{thm:fnet}, we now have that if the equivalent statements hold, \Cref{thm:summary_RN} gives   that (dimG) and (real) hold,
and so does the first bullet point. For the second bullet point, as (dim1) and (deg1) do not hold, we obtain the statement about dimension and degeneracy. As (setZC)  does not hold, the complex  varieties are nonempty only for parameters in a proper Zariski closed subset, hence  
generically empty.  

The last bullet point of  both \Cref{thm:fnet} and  \Cref{thm:Fnet} agree with that of \Cref{thm:summary_RN}. 
  
For \Cref{thm:Fnet}, the first bullet point  follows from  (real) and (degAll).\hfill\qed

\subsection{Proof of Theorem 4.4}
\label{proof:localACR}
To show the equivalence between (iii) and (iv), let $I_{\k}$ denote the specialization  of $I$ to $\k$. The closure theorem  gives us that  $\VV_{\CC^*}(I_{\k}\cap \CC[x_i^\pm])= \overline{\pi_i(\VV_{\CC^*}(f_\k))}$ for all $\k\in \CC^m$ \cite{cox2015ideals}. 
So $\pi_i(\VV_{\CC^*}(f_\k))$ is finite   if and only if $I_{\k}\cap \CC[x_i^\pm]$ is not the zero ideal. 
Given $G\subseteq  \CC(\k)[x^\pm]$ a Gr\"obner basis of $I$, it specializes to a Gr\"obner basis of $I_\k$ for generic $\k$. 
This gives that $I_{\k}\cap \CC[x_i^\pm]$ is generated by one polynomial for generic $\k$ if and only if $I\cap \CC(\k)[x_i^{\pm}]$ is, and from this the equivalence follows. 

With this in place, it is enough to show that (ii) $\Rightarrow$ (iii) $\Rightarrow$ (i) $\Rightarrow$ (ii).  As $f$ is nondegenerate, $\Z$ has nonempty Euclidean interior and is Zariski dense in $\CC^m$. 
Let $H$ be the augmented vertical system constructed by appending $x_i-c$ to $f$, and let $\Z_H\subseteq \RR_{>0}^m\times \RR$ be the subset of parameters $(\k,c)$ such that $H_{\k,c}$ has a zero in $\RR^n_{>0}$.  For each $\k\in \Z$, let $c_\k$ be the $x_i$-value of some $x\in \VV_{>0}(f_\k)$, which is nonempty by hypothesis. Then $(\k,c_\k)\in \Z_H$ as $H$ has a positive zero. Hence we have 
a dominant map  of varieties
\[\rho \colon \overline{\Z_H} \rightarrow \overline{\Z}=\CC^m, \qquad (\k,c) \mapsto \k, \]
where the overline refers to the Zariski closure in complex spaces. By the theorem of the dimension of fibers  \cite[Theorem~3.13]{Mumford}, we conclude that the fibers of $\rho$ have generically  dimension 
$\dim( \overline{\Z_H}) - m\,.$ 
By construction, for $\k\in \CC^m$, $\pi_i(\VV_{\CC^*}(f_\k))$ can be identified as a subset of $\rho^{-1}(\k)$, if nonempty. 

Condition (ii) is exactly the failure of the rank condition in \Cref{thm:summary_RN}, namely that the matrix 
\[ \begin{bmatrix} C \diag(w)M^\top \diag(h) \\  e_i \end{bmatrix} \in \CC^{(s+1)\times n}, \]
with $e_i$ is the canonical row vector with $1$ in the $i$-th entry and zero otherwise, has rank at most $s$ for all $w\in \ker(C)\subseteq \CC^m$ and $h\in (\CC^*)^n$. 
Hence, (ii) holds if and only if (setZC) does not hold, that is, $ \overline{\Z_H}$ is a proper Zariski closed subset of $\CC^m\times \CC$, that is, has dimension at most $m$. 
This in turn holds if and only if 
the fibers of $\rho$ have generically dimension $0$, in which case this implies that $\pi_i(\VV_{\CC^*}(f_\k))$ has generically dimension $0$, giving  (iii). 
 
As $\Z$ is Zariski dense, we readily have that  (iii) implies (i). Given $\k\in \RR^m_{>0}$, let $V_\k^{\rm nd}$ be the 
union of the irreducible components of $\VV_{\CC^*}(f_\k)$ that contain a nondegenerate positive zero of $f_\k$. Then 
\cite[Theorem~5.3]{pascualescudero2020local} states that (ii) is equivalent 
to $V_\k^{\rm nd}\cap \RR^n_{>0}$ having local ACR for $X_i$ for all $\k\in \RR^m_{>0}$ for which $V_\k^{\rm nd}\neq \varnothing$. 
By \Cref{thm:fnet}, all zeros of $f_\k$ are nondegenerate for generic $\k\in \Z$ and hence, generically, 
$\VV_{>0}(f_\k)=V_\k^{\rm nd}\cap \RR^n_{>0}$. 
Hence  (i) implies (ii).

We now show the bullet points of the statement. 
For the first bullet point,  if (i) does not hold, neither does (iii), and then  for generic $\k\in (\CC^*)^m$, $\pi_i(\VV_{\CC^*}(f_\k))$ is the complex line $\CC^*$ except for at most a finite number of points (by the Closure Theorem). As $\Z$ is Zariski dense, it follows that  $\pi_i(\VV_{>0}(f_\k))$ also is infinite for generic $\k\in \Z$ and hence, generically, there is no local ACR.  

For the first part of the second bullet point, the extra assumption implies $\VV_{>0}(f_\k)=V_\k^{\rm nd}\cap \RR^n_{>0}$ for the given $\k$. 
Then, the specialized version \cite[Theorem~4.11]{pascualescudero2020local} of 
\cite[Theorem~5.3]{pascualescudero2020local} gives that $\VV_{>0}(f_\k)$ has local ACR for $X_i$ if and only if 
the matrix 
$J_{f_\k}(x)_{\setminus i}$ has rank at most $s-1$ for all $x\in \VV_{>0}(f_\k)$. 
But this is guaranteed by (ii) and \Cref{prop:Jacmatrices}. 
The second part follows again from \cite[Theorem~5.3]{pascualescudero2020local} and \Cref{prop:Jacmatrices}, as the condition ensures $\VV_{>0}(f_\k)=V_\k^{\rm nd}\cap \RR^n_{>0}$ for all $\k\in \Z$. \hfill\qed
 
\section*{Acknowledgements}
EF and OH have been funded by the Novo Nordisk Foundation project with grant reference number NNF20OC0065582. 
BP has been funded by the European Union's Horizon 2020 research and innovation programme under the Marie Sklodowska-Curie IF grant agreement No 794627 and the Spanish Ministry of Economy project with reference number PID2022-138916NB-I00.
This work has also been funded by the European Union under the Grant Agreement no. 101044561, POSALG. 
Views and opinions expressed are those of the authors only and do not necessarily reflect those of the European Union or European Research Council (ERC). Neither the European Union nor ERC can be held responsible for them.
The authors thank Anne Shiu  and two anonymous referees for helpful feedback on earlier versions of the manuscript.

\bibliographystyle{siamplain}

\begin{thebibliography}{10}

\bibitem{ACR_Alon}
{\sc U.~Alon, M.~G. Surette, N.~Barkai, and S.~Leibler}, {\em Robustness in bacterial chemotaxis}, Nature, 397 (1999), pp.~168--171.

\bibitem{Anderson2011BoundednessOT}
{\sc D.~F. Anderson}, {\em Boundedness of trajectories for weakly reversible, single linkage class reaction systems}, J. Math. Chem., 49 (2011), pp.~2275--2290.

\bibitem{Banaji:Feliu}
{\sc M.~Banaji and E.~Feliu}, {\em Positive equilibria in mass action networks: geometry and bounds}, 2025.
\newblock Preprint: \texttt{arXiv:2409.06877v2}.

\bibitem{pantea-banaji}
{\sc M.~Banaji and C.~Pantea}, {\em The inheritance of nondegenerate multistationarity in chemical reaction networks}, SIAM J. Appl. Math., 78 (2018), pp.~1105--1130.

\bibitem{auxin}
{\sc L.~Band, D.~Wells, A.~Larrieu, J.~Sun, A.~Middleton, A.~French, G.~Brunoud, E.~Sato, M.~Wilson, B.~Péret, M.~Oliva, R.~Swarup, I.~Sairanen, G.~Parry, K.~Ljung, T.~Beeckman, J.~Garibaldi, E.~M., M.~Owen, K.~Vissenberg, T.~Hodgman, T.~Pridmore, J.~King, T.~Vernoux, and M.~Bennett}, {\em Root gravitropism is regulated by a transient lateral auxin gradient controlled by a tipping-point mechanism}, Proc Natl Acad Sci U S A, 109 (2012), pp.~4668--73.

\bibitem{benisrael}
{\sc A.~Ben-Israel}, {\em Notes on linear inequalities, {I}: The intersection of the nonnegative orthant with complementary orthogonal subspaces}, J. Math. Anal. Appl., 9 (1964), p.~303–314.

\bibitem{Blokh}
{\sc A.~Blokh, L.~Oversteegen, and E.~D. Tymchatyn}, {\em On almost one-to-one maps}, Trans. Amer. Math. Soc., 358 (2006), pp.~5003--5014.

\bibitem{boros:ss}
{\sc B.~Boros}, {\em Existence of positive steady states for weakly reversible mass-action systems}, SIAM J. Math. Anal., 51 (2019), pp.~435--449.

\bibitem{boros2020weakly}
{\sc B.~Boros, G.~Craciun, and P.~Y. Yu}, {\em Weakly reversible mass-action systems with infinitely many positive steady states}, SIAM J. Appl. Math, 80 (2020), pp.~1936--1946.

\bibitem{cappelletti:flows}
{\sc D.~Cappelletti, E.~Feliu, and C.~Wiuf}, {\em Addition of flow reactions preserving multistationarity and bistability}, Math. Biosc., 320 (2020).

\bibitem{ACR_Cap}
{\sc D.~Cappelletti, A.~Gupta, and M.~Khammash}, {\em A hidden integral structure endows absolute concentration robust systems with resilience to dynamical concentration disturbances}, J. R. Soc. Interface, 17 (2020), p.~20200437.

\bibitem{Clarke1980stability}
{\sc B.~L. Clarke}, {\em Stability of Complex Reaction Networks}, John Wiley \& Sons, Ltd, 1980, pp.~1--215.

\bibitem{Conradi2017identifying}
{\sc C.~Conradi, E.~Feliu, M.~Mincheva, and C.~Wiuf}, {\em Identifying parameter regions for multistationarity}, PLOS Comput. Biol., 13 (2017), p.~e1005751.

\bibitem{conradi-PNAS}
{\sc C.~Conradi, D.~Flockerzi, J.~Raisch, and J.~Stelling}, {\em Subnetwork analysis reveals dynamic features of complex (bio)chemical networks}, Proc. Nat. Acad. Sci., 104 (2007), pp.~19175--80.

\bibitem{cox2015ideals}
{\sc D.~A. Cox, J.~Little, and D.~O'Shea}, {\em {I}deals, {V}arieties, and {A}lgorithms: {A}n {I}ntroduction to {C}omputational {A}lgebraic {G}eometry and {C}ommutative {A}lgebra}, Undergraduate Texts in Mathematics, Springer International Publishing, 2015.

\bibitem{craciun-feinbergI}
{\sc G.~Craciun and M.~Feinberg}, {\em Multiple equilibria in complex chemical reaction networks. {I}. {T}he injectivity property}, SIAM J. Appl. Math., 65 (2005), pp.~1526--1546.

\bibitem{craciun-feinberg}
{\sc G.~Craciun and M.~Feinberg}, {\em {{M}ultiple equilibria in complex chemical reaction networks: extensions to entrapped species models}}, Syst. Biol. (Stevenage), 153 (2006), pp.~179--186.

\bibitem{semiopen}
{\sc G.~Craciun and M.~Feinberg}, {\em Multiple equilibria in complex chemical reaction networks: Semiopen mass action systems}, SIAM J. Appl. Math., 70 (2010), pp.~1859--1877.

\bibitem{DuvallSontag2024}
{\sc A.~Duvall and E.~D. Sontag}, {\em A remark on $\omega$-limit sets for non-expansive dynamics}, in 2024 IEEE 63rd Conference on Decision and Control (CDC), 2024, pp.~1504--1511.

\bibitem{Feinberg1972complex}
{\sc M.~Feinberg}, {\em Complex balancing in general kinetic systems}, Arch. Rational. Mech. Anal., 49 (1972), p.~187.

\bibitem{FEINBERG19872229}
{\sc M.~Feinberg}, {\em Chemical reaction network structure and the stability of complex isothermal reactors—{I}. the deficiency zero and deficiency one theorems}, Chem. Eng. Sci., 42 (1987), pp.~2229--2268.

\bibitem{Feinberg1995existence}
{\sc M.~Feinberg}, {\em The existence and uniqueness of steady states for a class of chemical reaction networks}, Arch. Rational. Mech. Anal., 132 (1995), p.~311.

\bibitem{feinberg2019foundations}
{\sc M.~Feinberg}, {\em Foundations of chemical reaction network theory}, vol.~202 of Applied Mathematical Sciences, Springer, Cham, 2019.

\bibitem{feinberg-invariant}
{\sc M.~Feinberg and F.~J.~M. Horn}, {\em Chemical mechanism structure and the coincidence of the stoichiometric and kinetic subspaces}, Arch. Rational. Mech. Anal., 66 (1977), pp.~83--97.

\bibitem{FeliuHenrikssonToric}
{\sc E.~Feliu and O.~Henriksson}, {\em Toric invariance of vertically parametrized systems}, 2024.
\newblock Preprint: {\tt arXiv:2411.15134}.

\bibitem{FeliuHenrikssonPascual2023}
{\sc E.~Feliu, O.~Henriksson, and B.~Pascual-Escudero}, {\em Generic consistency and nondegeneracy of vertically parametrized systems}, J. Algebra, 677 (2025), pp.~630--666.

\bibitem{feliu2012preclusion}
{\sc E.~Feliu and C.~Wiuf}, {\em Preclusion of switch behavior in networks with mass-action kinetics}, Appl. Math. Comput., 219 (2012), pp.~1449--1467.

\bibitem{feliu:intermediates}
{\sc E.~Feliu and C.~Wiuf}, {\em Simplifying biochemical models with intermediate species}, J. R. Soc. Interface, 10 (2013), p.~20130484.

\bibitem{ACR_AlgGeom}
{\sc L.~D. García~Puente, E.~Gross, H.~A. Harrington, M.~Johnston, N.~Meshkat, M.~Pérez~Millán, and A.~Shiu}, {\em Absolute concentration robustness: Algebra and geometry}, J. Symb. Comput., 128 (2025), p.~102398.

\bibitem{ges05}
{\sc K.~Gatermann, M.~Eiswirth, and A.~Sensse}, {\em Toric ideals and graph theory to analyze {H}opf bifurcations in mass action systems}, J. Symbolic Comput., 40 (2005), pp.~1361--1382.

\bibitem{Guldberg-Waage}
{\sc C.~M. Guldberg and P.~Waage}, {\em Studier i affiniteten}, Videnskabs-Selskabet i Christiania,  (1864), pp.~35--45.

\bibitem{HelminckHenrikssonRen2024}
{\sc P.~A. Helminck, O.~Henriksson, and Y.~Ren}, {\em A tropical method for solving parametrized polynomial systems}, 2024.
\newblock Preprint: \texttt{arXiv:2409.13288}.

\bibitem{helminck2022generic}
{\sc P.~A. Helminck and Y.~Ren}, {\em Generic root counts and flatness in tropical geometry}, J. Lond. Math. Soc., 111 (2025).

\bibitem{hornjackson}
{\sc F.~Horn and R.~Jackson}, {\em General mass action kinetics}, Arch. Rational. Mech. Anal., 47 (1972), pp.~81--116.

\bibitem{Hars1979inverse}
{\sc V.~Hárs and J.~Tóth}, {\em On the inverse problem of reaction kinetics}, in Colloquia Mathematica Societatis János Bolyai 30., Qualitative Theory of Differential Equations, Szeged (Hungary), 1979, pp.~363--379.

\bibitem{CRNToolbox}
{\sc H.~Ji, P.~Ellison, D.~Knight, and M.~Feinberg}, {\em {CRNT}oolbox version 2.35}, 2018.
\newblock available at \url{http://crnt.osu.edu/CRNTWin}.

\bibitem{joshi-shiu-II}
{\sc B.~Joshi and A.~Shiu}, {\em Atoms of multistationarity in chemical reaction networks}, J. Math. Chem., 51 (2013), pp.~153--178.

\bibitem{joshi-shiu-III}
{\sc B.~Joshi and A.~Shiu}, {\em {A survey of methods for deciding whether a reaction network is multistationary}}, Math. Model. Nat. Phenom., 10 (2015), pp.~47--67.

\bibitem{Joshi2017small}
{\sc B.~Joshi and A.~Shiu}, {\em Which small reaction networks are multistationary?}, SIAM J. Appl. Dyn. Syst., 16 (2017), pp.~802–--833.

\bibitem{invariantsACR}
{\sc R.~L. Karp, M.~Pérez~Millán, T.~Dasgupta, A.~Dickenstein, and J.~Gunawardena}, {\em Complex-linear invariants of biochemical networks}, J. Theoret. Biol., 311 (2012), pp.~130--138.

\bibitem{KothariDeshpande24}
{\sc S.~Kothari and A.~Deshpande}, {\em Endotactic and strongly endotactic networks with infinitely many positive steady states}, J. Math. Chem., 62 (2024), pp.~1454--1478.

\bibitem{laurent1999}
{\sc M.~Laurent and N.~Kellershohn}, {\em Multistability: a major means of differentiation and evolution in biological systems}, Trends Biochem. Sciences, 24 (1999), pp.~418--422.

\bibitem{Catalyst}
{\sc T.~E. Loman, Y.~Ma, V.~Ilin, S.~Gowda, N.~Korsbo, N.~Yewale, C.~Rackauckas, and S.~A. Isaacson}, {\em Catalyst: Fast and flexible modeling of reaction networks}, PLOS Comput. Biol., 19 (2023), pp.~1--19.

\bibitem{odebase}
{\sc C.~Lüders, T.~Sturm, and O.~Radulescu}, {\em {ODEbase: a repository of ODE systems for systems biology}}, Bioinf. Adv., 2 (2022).

\bibitem{ACRdim1}
{\sc N.~Meshkat, A.~Shiu, and A.~Torres}, {\em Absolute concentration robustness in networks with low-dimensional stoichiometric subspace}, Vietnam J. Math., 50 (2022), pp.~623--651.

\bibitem{Mumford}
{\sc D.~Mumford}, {\em Algebraic Geometry I, Complex Projective Varieties}, Classics in Mathematics, Springer Verlag, 1976.

\bibitem{Muller2015injectivity}
{\sc S.~Müller, E.~Feliu, G.~Regensburger, C.~Conradi, A.~Shiu, and A.~Dickenstein}, {\em Sign conditions for injectivity of generalized polynomial maps with applications to chemical reaction networks and real algebraic geometry}, Found. Comput. Math., 16 (2015), pp.~69--97.

\bibitem{OSCAR}
{\em {OSCAR} -- {O}pen {S}ource {C}omputer {A}lgebra {R}esearch system, version 1.6.0}, 2025, \url{https://www.oscar-system.org}.

\bibitem{pascualescudero2020local}
{\sc B.~Pascual-Escudero and E.~Feliu}, {\em Local and global robustness at steady state}, Math. Methods Appl. Sci., 45 (2022), pp.~359--382.

\bibitem{P-M}
{\sc M.~P\'erez~Mill\'an}, {\em M\'etodos algebraicos para el estudio de redes bioqu\'imicas}, PhD Thesis, 2011.
\newblock Available at \url{https://bibliotecadigital.exactas.uba.ar/collection/tesis/document/tesis_n5103_PerezMillan}.

\bibitem{PerezMillan}
{\sc M.~P{\'e}rez~Mill{\'a}n, A.~Dickenstein, A.~Shiu, and C.~Conradi}, {\em Chemical reaction systems with toric steady states}, Bull. Math. Biol., 74 (2012), pp.~1027--1065.

\bibitem{shinar-science}
{\sc G.~Shinar and M.~Feinberg}, {\em Structural sources of robustness in biochemical reaction networks}, Science, 327 (2010), pp.~1389--91.

\bibitem{shiuwolff:small}
{\sc A.~Shiu and T.~de~Wolff}, {\em Nondegenerate multistationarity in small reaction networks}, Discrete Contin. Dyn. Syst. Ser. B, 24 (2019), pp.~2683--2700.

\bibitem{stacks}
{\em {T}he {S}tacks {P}roject}.
\newblock Available at \url{https://stacks.math.columbia.edu}, 2024.

\bibitem{tang:one-dim}
{\sc X.~Tang and Z.~Zhang}, {\em Multistability of reaction networks with one-dimensional stoichiometric subspaces}, SIAM J. Appl. Dyn. Syst., 21 (2022), pp.~1426--1454.

\bibitem{volpert}
{\sc A.~I. Vol'pert}, {\em Differential equations on graphs}, Math. {USSR}-{S}b, 17 (1972), p.~571–582.

\end{thebibliography}

\end{document}